\tikzset{
    pe/.style={
decoration={
            markings, mark=at position 0.8 with {\fill[red] (0, 0.5pt) -- ++ (-0.25, 0.075) -- ++ (0, -0.075) -- cycle;
            }
        },
        postaction=decorate
    },
}
\newcommand\myshade{85}
\colorlet{mylinkcolor}{violet}
\colorlet{mycitecolor}{YellowOrange}
\colorlet{myurlcolor}{Aquamarine}
\newcommand{\ie}{, i.e., }
\def\renewtheorem#1{\expandafter\let\csname#1\endcsname\relax
  \expandafter\let\csname c@#1\endcsname\relax
  \gdef\renewtheorem@envname{#1}
  \renewtheorem@secpar
}
\def\renewtheorem@secpar{\@ifnextchar[{\renewtheorem@numberedlike}{\renewtheorem@nonumberedlike}}
\def\renewtheorem@numberedlike[#1]#2{\newtheorem{\renewtheorem@envname}[#1]{#2}}
\def\renewtheorem@nonumberedlike#1{  
\def\renewtheorem@caption{#1}
\edef\renewtheorem@nowithin{\noexpand\newtheorem{\renewtheorem@envname}{\renewtheorem@caption}}
\renewtheorem@thirdpar
}
\def\renewtheorem@thirdpar{\@ifnextchar[{\renewtheorem@within}{\renewtheorem@nowithin}}
\def\renewtheorem@within[#1]{\renewtheorem@nowithin[#1]}
\renewcommand{\ref}{\cref}
\crefname{lem}{Lemma}{Lemmas}
\crefname{problem}{Problem}{Problems}
\crefname{step}{Step}{Steps}
\crefname{thm}{Theorem}{Theorems}
\crefname{proposition}{Proposition}{Propositions}
\crefname{prop}{Proposition}{Propositions}
\crefname{defn}{Definition}{Definitions}
\crefname{fact}{Fact}{Facts}
\crefname{fig}{Figure}{Figures}
\crefname{figure}{Figure}{Figures}
\crefname{section}{Section}{Sections}
\crefname{cor}{Corollary}{Corollaries}
\crefname{remark}{Remark}{Remarks}
\crefname{conjecture}{Conjecture}{Conjectures}
\crefname{alg}{Algorithm}{Algorithms}
\crefname{algorithm}{Algorithm}{Algorithms}
\crefname{ex}{Example}{Examples}
\crefname{exa}{Example}{Examples}
\crefname{example}{Example}{Examples}
\crefname{eq}{Equation}{Equations}
\crefname{equation}{Equation}{Equations}
\crefname{subequation}{equation}{equations}
 \newcommand{\goravname}{Gorav Jindal}
\newcommand{\goravgmail}{gorav.jindal@gmail.com}
\newcommand{\goravaffil}{Max Planck Institute for Software Systems, Saarbr{\"u}cken, Germany. Part of the work was done while the author was a member of Graduiertenkolleg `Facets of Complexity/Facetten der Komplexit\"at' (GRK 2434) and Institut f\"ur Mathematik, Technische Universit\"at Berlin}
\newcommand{\petername}{Peter B{\"u}rgisser}
\newcommand{\peteruni}{pbuerg@math.tu-berlin.de}
\newcommand{\peteraffil}{Institut f{\"u}r Mathematik, Technische Universit{\"a}t Berlin, Berlin, Germany}
 \numberwithin{equation}{section}
\numberwithin{figure}{section}
\theoremstyle{plain}
\newtheorem{thm}{\protect\theoremname}[section]
\theoremstyle{definition}
\newtheorem{problem}{\protect\problemname}[section]
\theoremstyle{plain}
\newtheorem{prop}{\protect\propositionname}[section]
\theoremstyle{definition}
\theoremstyle{plain}
\newtheorem{lem}{\protect\lemmaname}[section]
\theoremstyle{plain}
\newtheorem{cor}{\protect\corollaryname}[section]
\theoremstyle{plain}
\newtheorem{conjecture}{\protect\conjecturename}[section]
\theoremstyle{definition}
\newtheorem{remark}{\protect\remarkname}[section]
\newlist{casenv}{enumerate}{4}
\setlist[casenv]{leftmargin=*,align=left,widest={iiii}}
\setlist[casenv,1]{label={{\itshape\ \casename} \arabic*.},ref=\arabic*}
\setlist[casenv,2]{label={{\itshape\ \casename} \roman*.},ref=\roman*}
\setlist[casenv,3]{label={{\itshape\ \casename\ \alph*.}},ref=\alph*}
\setlist[casenv,4]{label={{\itshape\ \casename} \arabic*.},ref=\arabic*}
\providecommand{\casename}{Case}
\providecommand{\conjecturename}{Conjecture}
\providecommand{\corollaryname}{Corollary}
\providecommand{\definitionname}{Definition}
\providecommand{\lemmaname}{Lemma}
\providecommand{\problemname}{Problem}
\providecommand{\propositionname}{Proposition}
\providecommand{\theoremname}{Theorem}
\providecommand{\remarkname}{Remark}
\newcommand{\splitatcommas}[1]{\begingroup
  \begingroup\lccode`~=`, \lowercase{\endgroup
    \edef~{\mathchar\the\mathcode`, \penalty0 \noexpand\hspace{0pt plus 1em}}}\mathcode`,="8000 #1\endgroup
}
\begin{document}
\clearpage{}\global\long\def\R{\mathbb{R}}

\global\long\def\N{\mathbb{N}}

\global\long\def\F{\mathbb{F}}

\global\long\def\C{\mathbb{C}}

\global\long\def\CC{\mathbb{C}}

\global\long\def\Q{\mathbb{Q}}

\global\long\def\Z{\mathbb{Z}}

\global\long\def\card#1{\left\vert #1\right\vert }

\global\long\def\negi{\operatorname{negative}}

\global\long\def\posi{\operatorname{positive}}

\global\long\def\flr#1{\lfloor#1\rfloor}

\global\long\def\paren#1{\left(#1\right)}

\global\long\def\brac#1{\left\{#1\right\}}

\global\long\def\cli#1{\lceil#1\rceil}

\global\long\def\lst#1#2{\splitatcommas{#1_{1},\dots,#1_{#2}}}

\global\long\def\lstl#1#2{#1_{1}\geq#1_{2}\geq\dots\geq#1_{#2}}

\global\long\def\sgn{\operatorname{sgn}}

\global\long\def\degslp{\operatorname{DegSLP}}

\global\long\def\crr{\operatorname{CountRealRoots}}

\newcommand{\twodiff}{simple}

\global\long\def\ssr{\operatorname{SSR}}

\global\long\def\lcm{\operatorname{lcm}}

\global\long\def\rad{\operatorname{rad}}

\global\long\def\lc{\operatorname{lc}}\global\long\def\cont{\operatorname{Cont}}

\global\long\def\pols{\operatorname{PolySAT}}

\global\long\def\cyc{\operatorname{Cyc}}

\global\long\def\chr{\operatorname{char}}

\global\long\def\posslp{\operatorname{PosSLP}}

\global\long\def\eqdef{:=}

\global\long\def\abs#1{\left|#1\right|}

\global\long\def\odd#1{\operatorname{Odd}(#1)}

\global\long\def\len{\operatorname{length}}
\title{On the Hardness of PosSLP}
\author{\petername\thanks{\peteraffil.\, Email: \texttt{\peteruni}} \and \goravname \thanks{\goravaffil.\, Email: \texttt{\goravgmail}}}
\date{}
\maketitle
\begin{abstract}

The problem $\posslp$ involves determining whether an integer computed by a given straight-line program is positive. This problem has attracted considerable attention within the field of computational complexity as it provides a complete characterization of the complexity associated with numerical computation. However, non-trivial lower bounds for $\posslp$ remain unknown. In this paper, we demonstrate that  $\posslp\in\BPP$ would imply that $\NP \subseteq \BPP$, under the assumption of a conjecture concerning the complexity of the radical of a polynomial proposed by Dutta, Saxena, and Sinhababu (STOC'2018).
Our proof builds upon the established $\NP$-hardness of determining if a univariate polynomial computed by an SLP has
a real root, as demonstrated by Perrucci and Sabia (JDA'2005).

Therefore, our lower bound for $\posslp$ represents a significant advancement in understanding the complexity of this problem. It constitutes the first non-trivial lower bound for $\posslp$, albeit conditionally. Additionally, we show that counting the real roots of an integer univariate polynomial, given as input by a  straight-line program, is $\#\P$-hard.

\end{abstract}

\section{Introduction}

\subsection{Straight-line Programs}

Given an integer $a$ as input, how do we decide whether $a$ is positive
or negative? This question seems very innocuous at the first glance.
Indeed, if $a$ is given as a bit string, the question is trivial.
This question becomes interesting when we are given a compact expression
for $a$ instead of its bit string representation. One such compact
way to represent an integer is by an arithmetic circuit 
or, equivalently, a straight-line program.  
These are fundamental concepts studied in algebraic complexity theory:
we refer the reader to excellent surveys~\cite{ShpilkaY10,saptharishi2021survey}.

An  {\em arithmetic circuit}
is a directed acyclic graph, whose leaves are labeled by formal variables
$\lst xn$ or scalars from the underlying field~$\F$. The non-leaf
nodes are arithmetic gates. We assume that there is a unique output
node and use the term gate for node interchangeably. Every gate of
such a circuit computes a multivariate polynomial in the canonical
way. The polynomial computed at the output gate is said to be the
polynomial computed by the circuit. The size of the circuit is defined
as the number of gates in it. 
We shall restrict our attention to constant free arithmetic circuits,
which compute univariate polynomials.
We define a {\em straight-line program~$P$, SLP} for short, to be a sequence 
of univariate integer polynomials $(a_{0},\lst a{\ell})$
such that $a_{0}=1,a_{1}=x$
and $a_{i}=a_{j}\circ a_{k}$ for all $2\leq i\leq\ell$, where $\circ\in\{+,-,*\}$
and $j,k<i$. We say that $P$ computes the univariate polynomial
$a_{\ell}$ and that $P$ has length $\ell$. Note that in this definition,
we do not allow any constants different from $1$. For an integer univariate polynomial $f\in\Z[x]$, we define $\tau(f)$ as 
the length of the smallest SLP which computes $f$. It is clear that
any SLP of length $\ell$ can be described using $O(\ell\log\ell)$
bits. Since an integer is a special case of a univariate polynomial,
SLPs can also compute integers. 

The problem $\degslp$, introduced in~\cite{Allender06onthe}, is the problem  
of computing the degree of a polynomial given as input
by an arithmetic circuit. 

\subsection{PosSLP and Related Work}\label{subsec:slppodslpdef}

Now we formally define the {\em problem $\posslp$}, 
introduced in \cite{Allender06onthe},
which is the central object of study of this paper

\begin{problem}[$\posslp$]Given an SLP $P$ computing an integer $n_{P}$, decide if $n_{P}>0$. 
\end{problem}

This problem was introduced to establish a connection between classical models of computation and 
the {\em Blum-Shub-Smale model}~\cite{SmaleRealCompu1997}. 
The latter is an extensively studied model for 
studying computations with real numbers. 
A BSS machine~$M$ runs according to a finite program and 
takes as input a finite sequence of real numbers of arbitrary length\ie 
an element of $\cup_{n}\R^{n}$. 
Moreover, $M$ has an infinite tape consisting of cells containing 
real numbers or blanks. In each step, $M$ can copy the content of one cell into another,
perform an arithmetic operation $\circ\in\{+,-,\times,\div\}$
on two cells, or branch by comparing any cell to~$0$. 
The class~$\P_{\R}^{0}$
denotes the set of decision problems decided by polynomial time 
by {\em constant free} BSS machines.
To compare this with classical complexity classes, defined via Turing machines,
one considers the Boolean part 
$\BP(\P_{\R}^{0})\eqdef\{L\cap\{0,1\}^{n}\mid L\in\P_{\R}\}$.

In \cite{Allender06onthe}, it was shown that 
the computational power of this complexity class is given by polynomial time 
computations with oracle calls to $\posslp$. That is:

\begin{prop}\label{prop:posslporaclebppr}$\P^{\posslp}=$ $\BP(\P_{\R}^{0})$. 
\end{prop}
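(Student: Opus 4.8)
The plan is to prove $\P^{\posslp} = \BP(\P_\R^0)$ by two inclusions, each established by a simulation argument.

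\textbf{The inclusion $\BP(\P_\R^0) \subseteq \P^{\posslp}$.} Let $L \in \BP(\P_\R^0)$ be decided by a constant-free BSS machine $M$ running in time $p(n)$ on inputs of length $n$. On a Boolean input $x \in \{0,1\}^n$, the computation of $M$ follows a path in its computation tree determined by the outcomes of the branch (sign-test) instructions. First I would observe that, since $M$ is constant free and runs for $p(n)$ steps, every real number appearing in a cell at any point is an integer (no division is needed to decide membership, or we can clear denominators) that is computed from the input bits by a straight-line program of length $O(p(n))$: each arithmetic instruction of $M$ appends one gate. Concretely, I would maintain, by induction on the step count, an SLP that computes the contents of all nonblank cells; when $M$ reaches a branch instruction comparing cell $c$ to $0$, I would feed the SLP for (a polynomial translate of) $c$ to the $\posslp$ oracle to learn $\sgn(c)$, and then continue the simulation down the correct branch. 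After $p(n)$ steps we read off $M$'s answer. This is a polynomial-time Turing machine with a $\posslp$ oracle, so $L \in \P^{\posslp}$. The only mild subtlety is handling division and deciding signs of integers rather than just positivity, which is routine: a single $\posslp$ call determines whether $n_P > 0$, another (on $-n_P$, or by testing $n_P$ and $-n_P$) whether $n_P < 0$, and equality to $0$ follows.

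\textbf{The inclusion $\P^{\posslp} \subseteq \BP(\P_\R^0)$.} Here I would show that a constant-free polynomial-time BSS machine can (i) simulate an arbitrary polynomial-time Turing machine and (ii) answer $\posslp$ queries. For (ii), given an SLP $P = (a_0, \dots, a_\ell)$ encoded on the (Boolean part of the) BSS tape, the BSS machine decodes the list of instructions, allocates $\ell+1$ real cells, evaluates $a_0 = 1$, $a_1 = x$ (here the relevant "$x$" is the integer the SLP is meant to compute, but as an SLP over $\Z$ it has no variable — one simply evaluates the straight-line program over $\Z$), and then performs the $\ell$ arithmetic operations, each a single BSS step, obtaining $n_P$ in a cell; one branch instruction comparing that cell to $0$ answers the query. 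Simulating the bookkeeping — decoding the $O(\ell \log \ell)$-bit description, indexing cells, managing the Turing machine's work — is standard: a BSS machine can carry out Turing computations on its Boolean cells with only polynomial overhead. Composing, any $\P^{\posslp}$ computation on a Boolean input is simulated in polynomial time by a constant-free BSS machine, and its output is Boolean, so it defines a language in $\BP(\P_\R^0)$.

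\textbf{Main obstacle.} The conceptual core, and the place where care is needed, is the claim that in the first inclusion all intermediate BSS cell contents stay \emph{integers of straight-line size polynomial in the running time}, so that the $\posslp$ oracle is applicable. This requires noting that constant-free BSS machines introduce no irrational constants, that divisions can be deferred (track numerator and denominator as separate SLP-computed integers and compare signs via a product), and that the SLP grows by only $O(1)$ gates per BSS step. Once this structural fact is in hand, both simulations are routine, and the bidirectional simulation yields the claimed equality $\P^{\posslp} = \BP(\P_\R^0)$.
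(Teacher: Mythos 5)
The paper does not prove this proposition; it cites it directly from Allender, B\"urgisser, Kjeldgaard-Pedersen, and Miltersen~\cite{Allender06onthe}, so there is no in-paper argument to compare against. Your bidirectional simulation is exactly the standard argument from that reference, and the outline is correct: for $\BP(\P_\R^0)\subseteq\P^{\posslp}$ one tracks each cell content as a numerator/denominator pair of SLP-computed integers (the SLP growing by $O(1)$ gates per BSS step) and resolves each sign test with one or two $\posslp$ queries on the product of numerator and denominator; for the converse a constant-free BSS machine simulates the Turing machine on its Boolean cells and answers each $\posslp$ query by decoding and evaluating the SLP in its real registers and branching once. The only imprecision worth flagging is the opening claim that ``every real number appearing in a cell is an integer'' --- that is false in the presence of the $\div$ operation --- but you correct this immediately by passing to numerator/denominator pairs, so the argument as a whole is sound.
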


\cite{Allender06onthe} also explained 
the relevance of $\posslp$ for numerical computation 
in a more direct way, without referring to the formal model of BSS machines,
as follows.
For any nonzero real number $r$, we can write $r=s2^{m}$ with $\frac{1}{2}\leq\abs s<1$
and $m\in\Z$. A floating point approximation of $r$ with $k$ significant
bits is a floating point number $t2^{m}$ such that $\abs{s-t}\leq2^{-(k+1)}$.

\begin{problem}[The generic task of numerical computation]
\label{prob:gtnc}Given an arithmetic circuit $C$ computing a polynomial
$f(\lst xn)$, given floating point numbers $\lst an$, and an
integer $k$~in unary, along with a promise that $f(\lst an)$ is
nonzero, compute a floating point approximation of the value of the
output $f(\lst an)$ with $k$ significant bits.
\end{problem}

\cite{Allender06onthe} showed that \ref{prob:gtnc} is polynomial
time Turing equivalent to $\posslp$. This assertion and \ref{prop:posslporaclebppr}
support the hypothesis that $\posslp$ does not have efficient algorithms.
In addition, it is not hard to see that the above defined problem $\degslp$ reduces to $\posslp$. 
This further suggests the computational intractability of $\posslp$, 
given the belief that an efficient algorithm for $\degslp$ is unlikely to exist.

As for upper bounds on $\posslp$, the following result 
in terms of the counting hierarchy~\cite{wagner186,counthierjuha2009,complexityarora2009}
is the best known result.

\begin{thm}[\cite{Allender06onthe}]
\label{thm:posslpupperbound}
$\posslp\in\P^{\PP^{\PP^{\PP}}}$.
\end{thm}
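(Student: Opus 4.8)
The plan is to use the fact that an SLP of length~$\ell$ computes an integer $n_P$ with $|n_P|\le 2^{2^\ell}$, so that $n_P$ is pinned down by its residues modulo $O(\ell)$-bit primes, and to carry out the reconstruction of $\sgn(n_P)$ inside the counting hierarchy. The one nontrivial external ingredient I would invoke is the theorem of Hesse, Allender and Barrington that conversion from Chinese remainder representation (CRR) to binary---and, in particular, comparison of two integers given in CRR---is computed by $\mathsf{DLOGTIME}$-uniform threshold circuits of constant depth; the rest is bookkeeping.

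First I would note that residues are cheap to obtain. Set $M:=2^\ell$, so $|n_P|<2^M$; let $p_1<p_2<\dots$ be the primes and let $k$ be least with $\Pi:=\prod_{j\le k}p_j>2^{M+1}$, so that $k=2^{O(\ell)}$ and every $p_j$ has $O(\ell)$ bits. Given the binary index $j$ one computes $p_j$ in time $\mathrm{poly}(\ell)$ (primality is in $\P$) and then evaluates the SLP modulo $p_j$---that is, $\ell$ arithmetic operations on $O(\ell)$-bit numbers---returning $n_P\bmod p_j$ in time $\mathrm{poly}(\ell)$. So $n_P$ is available in CRR \emph{succinctly}: a single polynomial-time machine, on input $j$, outputs the $j$-th residue. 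Next, writing $n'_P:=n_P\bmod\Pi\in[0,\Pi)$, one has $n_P>0$ iff $0<n'_P\le 2^M$, so it suffices to decide a comparison between the CRR-presented number $n'_P$ and the fixed threshold $2^M$ (together with the test $n'_P\neq 0$, again a CRR comparison). By Hesse--Allender--Barrington this is done by a $\mathsf{DLOGTIME}$-uniform threshold circuit of some constant depth $d$ and of size $2^{O(\ell)}$.

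The heart of the proof is to evaluate this circuit, instantiated with $k=2^{O(\ell)}$ primes, inside the counting hierarchy. It has $2^{O(\ell)}$ gates, and by uniformity, given the binary index of a gate I can recover its type, its threshold, and the indices of its input wires in time $\mathrm{poly}(\ell)$; its input bits are bits of the CRR digits of $n'_P$ (which coincide with those of $n_P$), hence $\mathrm{poly}(\ell)$-time computable by the previous step. I would then evaluate the circuit layer by layer: if the gates of one layer are decidable in a class $\mathcal C$, then the value of a gate $g$ in the next layer asserts that the number of input wires of $g$ carrying a~$1$ is at least the threshold of $g$---a count over $2^{O(\ell)}$ wires whose individual values reduce, through a $\mathrm{poly}(\ell)$ uniformity step, to $\mathcal C$---so the next layer is decidable in $\PP^{\mathcal C}$. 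Beginning with $\mathcal C=\P$ for the input bits and iterating through the (suitably arranged) $d=3$ threshold layers places the circuit's output in $\PP^{\PP^{\PP}}$; locating the output gate and combining with the zero test adds only a polynomial-time wrapper, whence $\posslp\in\P^{\PP^{\PP^{\PP}}}$ (and the same argument computes any prescribed bit of $n_P$).

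The routine parts are the Chinese remainder bound, primality testing, and evaluating an SLP modulo a small prime, so the hard part will be making the last step airtight. Beyond the wiring of the circuit, one must check that the number-theoretic data the CRR machinery consumes---the primes, the inverses $(\Pi/p_j)^{-1}\bmod p_j$, the partial products $\Pi\bmod p_j$, and so on---is also recoverable within the resource budget; I would handle each by writing it as an iterated sum or product of $2^{O(\ell)}$ succinctly described $O(\ell)$-bit integers and arguing that such iterated operations sit at a bounded level of the hierarchy. This is exactly where the strong uniformity ($\mathsf{DLOGTIME}$, not merely $\P$) of the Hesse--Allender--Barrington circuits is indispensable: the connection data must be produced in time polynomial in $\ell$, i.e.\ polylogarithmic in the circuit's exponential size, rather than polynomial in it. One must also track the threshold depths carefully so that the relevant depth works out to three and the tower of counting operators does not grow.
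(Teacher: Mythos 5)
The paper does not give its own proof of this theorem; it is a black-box citation of Allender, B\"urgisser, Kjeldgaard-Pedersen, and Miltersen, so there is no in-paper argument to compare against. Your outline does correctly reconstruct the proof in that reference: evaluate the SLP modulo $2^{O(\ell)}$ many $O(\ell)$-bit primes to obtain a succinctly presented Chinese-remainder representation of $n_P$, invoke Hesse--Allender--Barrington to get $\mathsf{DLOGTIME}$-uniform constant-depth threshold circuits for CRR comparison, and then push the exponentially blown-up but highly uniform threshold circuit into the counting hierarchy one threshold layer at a time, with a polynomial-time wrapper at the top. Two small points deserve tightening. First, ``given the binary index $j$ one computes $p_j$ in time $\poly(\ell)$'' is not right if $p_j$ means the $j$-th prime: there is no known $\poly(\ell)$-time algorithm that outputs the $j$-th prime from $j$ in binary. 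What one actually does, and what your parenthetical ``primality is in $\P$'' suggests you intend, is to index the moduli by candidate $O(\ell)$-bit integers and use AKS to decide which are valid, exactly the form of access the uniformity machinery consumes. Second, that the relevant threshold depth ``works out to three'' is not something HAB hand you for free; the original paper does the depth/alternation bookkeeping carefully to land at $\P^{\PP^{\PP^{\PP}}}$ rather than at some deeper (but still constant) level, and your closing caveat correctly identifies this as the part that requires real work.
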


Jindal and Saranurak \cite{saranurak2012subtraction} observed that if monotone SLP complexity $\tau_+$  and SLP complexity $\tau$ of positive integers are polynomially equivalent, then $\posslp \in \Sigma_2^{\P} \subseteq \PH$. There are
several other important problems which reduce to $\posslp$.
One such well-studied and important problem is the following.

\begin{problem}[Sum-of-square-roots problem, $\ssr$]
Given a list $(\lst an)$ of positive integers and a positive integer
$k$, decide if $\sum_{i\in[n]}\sqrt{a_{i}}\geq k$.
\end{problem}

This is asked as an open problem in \cite{openssr1976}. It has connections to the Euclidean traveling salesman problem. The Euclidean
traveling salesman problem is not known to be in $\NP$, but is readily
seen to be in $\NP$ relative to an $\ssr$ oracle. 
By using classical Newton iteration, $\ssr$ reduces to $\posslp$ \cite{Allender06onthe}. 
The sum-of-square-roots problem was conjectured
to be in $\P$ in~\cite{Malajovich2001AnEV}. 

Another important problem is to decide the inequality of succinctly
represented integers \cite{succintintegers2014}. More precisely,
consider the following problem.
\begin{problem}[Inequality testing of succinctly represented integers]
\label{prob:succineq}Given positive integers $\lst an,\lst bn,\lst cm,\lst dm$,
decide if $\prod_{i=1}^{n}a_{i}^{b_{i}}\geq\prod_{i=1}^{m}c_{i}^{d_{i}}.$
\end{problem}

This problem is easily seen to be a special case of $\posslp$.
It was shown in \cite{succintintegers2014} that \ref{prob:succineq}
can be solved in deterministic polynomial time if one can prove strong
lower bounds on integer linear combinations of logarithms of positive
integers, known as the 
Lang-Waldschmidt conjecture~\cite{lang2013elliptic,opendiiph2004}, 
see~\cite[Conjecture 3.2 ]{succintintegers2014}. 
However, \ref{thm:posslpupperbound} still provides 
the best unconditional upper bound for~\ref{prob:succineq}. 

\subsection{Our Results}

Despite the non-trivial, but rather inefficient upper bound of~\ref{thm:posslpupperbound},
no hardness results are known for $\posslp$. In this
paper, we show $\posslp \in \P$ would have dramatic consequences 
for complexity theory,
assuming a variant of the radical conjecture proposed in \cite{dssjacm22}.

We define the {\em radical} $\rad(f)$ of a nonzero integer polynomial
$f\in\Z[\lst x n]$ as the product of the irreducible integer polynomials dividing~$f$. 
It is also called the {\em square-free part} of~$f$.
Note that $\rad(f)$ is uniquely defined up to a sign. 
This generalizes the radical of a nonzero integer~$n$, 
which is defined as the product of the distinct prime numbers dividing $n$.

We shall crucially rely on the following conjecture, which is a constructive variant
of the radical conjecture proposed in \cite{dssjacm22}. 
See \ref{sec:compelxityofradicals} for a discussion of~\ref{conj:constrdicalconj}.

\begin{restatable}[Constructive univariate radical conjecture]{conjecture}{constructiveradicalconjecture}
\label{conj:constrdicalconj} 
For any polynomial 
$f\in\Z[x]$, we have $\tau(\rad(f))\leq\poly(\tau(f)$).
Moreover, there is a randomized polynomial time algorithm which, 
given an SLP of size $s$ computing~$f$, 
constructs an SLP for $\rad(f)$ of size $\poly(s)$ with success probability at least 
$1-\frac{1}{\Omega(s^{1+\epsilon})}$ for some $\epsilon>0$.
\end{restatable}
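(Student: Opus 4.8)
The final statement is a \emph{conjecture} rather than something the paper establishes, so the task is not to prove it but to argue, as \ref{sec:compelxityofradicals} should, that it is a mild constructive strengthening of the radical conjecture of \cite{dssjacm22} and to exhibit the algorithmic skeleton that makes the ``moreover'' clause plausible. First I would isolate the existence statement $\tau(\rad(f))\le\poly(\tau(f))$ and observe that it is precisely the univariate specialization of the DSS radical conjecture once one translates ``arithmetic circuit of size $s$'' into the SLP measure $\tau$; this translation is routine over $\Z[x]$ because the two notions agree up to polynomial factors and the elementary closure bounds $\tau(fg)\le\tau(f)+\tau(g)+O(1)$ and $\tau(f')\le O(\tau(f))$ (Baur--Strassen) hold. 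The one point worth emphasizing is that here $\deg f$ can be as large as $2^{\tau(f)}$, so a bound carrying any genuine degree dependence would be vacuous --- but the DSS radical conjecture is exactly the claim that no degree dependence is needed, so its specialization yields the first sentence of \ref{conj:constrdicalconj} verbatim.

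For the constructive, randomized clause the natural template is the Newton/Hensel approach underlying the partial results of \cite{dssjacm22}: choose a random integer $a$ with $f(a)\ne 0$, expand $f$ as a power series at $a$, peel off the square-free part by Newton iteration carried to sufficient precision, and read off a (conjecturally short) SLP for the result. Randomness is used only to keep $a$ away from the at most $\deg f\le 2^{\tau(f)}$ roots of $f$ together with a few further bad loci, so drawing $a$ uniformly from $\{1,\dots,N\}$ with $N$ of order $s^{1+\epsilon}2^{s+1}$ --- still an integer of $\poly(s)$ bit-length --- gives failure probability $1/\Omega(s^{1+\epsilon})$, which is exactly the claimed bound and accounts for the slightly unusual shape of the success guarantee.

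The honest main obstacle, and the reason this is stated as a conjecture rather than a theorem, is that passing from ``$\rad(f)$ has a small SLP'' to ``one constructs such an SLP in randomized polynomial time'' is open even granting the existence part: through the identity $\rad(f)=f/\gcd(f,f')$ it subsumes both computing GCDs of SLP-presented polynomials and performing exact division of SLPs, and for neither is a polynomial bound known --- the obvious power-series workarounds run in time polynomial in the exponentially large degree. Moreover we have no polynomial-time means of \emph{verifying} a candidate SLP for $\rad(f)$ (this would again require divisibility and square-freeness tests for SLPs), so any such algorithm is inescapably Monte Carlo, with its correctness resting on the structural conjecture itself. For these reasons we make no attempt at a proof; \ref{conj:constrdicalconj} is adopted as a hypothesis, and the hardness consequences derived in this paper are conditional on it.
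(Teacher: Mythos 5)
You correctly recognize that \ref{conj:constrdicalconj} is a hypothesis the paper \emph{assumes} rather than a statement it proves; the paper's own treatment, in \ref{sec:compelxityofradicals}, is a motivational discussion, not a proof, and your response is aligned with that. Both you and the paper derive the existence clause $\tau(\rad(f))\le\poly(\tau(f))$ from the DSS radical conjecture via the univariate observation that $L(g)=O(\deg g)$ forces the bound onto the complexity branch of the $\min$. Your discussion of the ``moreover'' clause is, if anything, more concrete than the paper's: where the paper's remark simply says that existence statements for small SLPs usually refine to randomized constructions, you sketch the Newton/Hensel template and name the real obstacles (GCD of SLP-presented polynomials, exact division, absence of a polynomial-time verifier for candidate radical SLPs). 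That is a genuine addition, and your honesty that the whole thing stays conjectural is exactly the right posture.

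Two caveats worth flagging. First, the translation between $L$ and $\tau$ is not quite the throwaway you make it: $L$ permits arbitrary field constants while $\tau$ is constant-free, and these measures can diverge; the paper quietly sidesteps this by \emph{stating} \ref{conj:constrdicalconj} directly in terms of $\tau$ rather than deriving it from the $L$-version in \ref{conj:radsaxenaconjecture}, and you should too rather than asserting they ``agree up to polynomial factors.'' Second, the Baur--Strassen bound $\tau(f')\le O(\tau(f))$ is stated for circuits with constants; for constant-free SLPs one needs to check that the transformation introduces no new constants, which it does not for univariate differentiation, but this deserves a sentence if you invoke it. Neither caveat affects your main point, which is correct: this is a conjecture, the paper treats it as such, and the hardness results are conditional on it.
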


In fact, for our purposes, it is enough to know that \ref{conj:constrdicalconj} 
applies to some nonzero integer multiple of $\rad(f)$. The following result is the main contribution of this paper; 
see \ref{sec:nphardnessposslp} for the proof. 

\begin{restatable}{thm}{mainresultnphardposslp}
\label{thm:mainnphardposslp}
If \ref{conj:constrdicalconj} is true and $\posslp \in \BPP$ then $\NP \subseteq \BPP$.
\end{restatable}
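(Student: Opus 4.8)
The plan is to reduce an $\NP$-hard problem to $\posslp$ via the $\NP$-hardness of deciding whether an SLP-represented univariate integer polynomial has a real root, due to Perrucci and Sabia. So suppose we are given an SLP of size $s$ computing $f\in\Z[x]$, and we wish to decide whether $f$ has a real root. The obvious obstruction is that $f$ may have high-multiplicity real roots, so that its sign does not change there; this is exactly why one wants to pass to the radical (square-free part). First I would use \ref{conj:constrdicalconj} to construct, in randomized polynomial time and with high probability, an SLP of size $\poly(s)$ computing $g$, where $g$ is $\rad(f)$ (or a nonzero integer multiple of it, which suffices since scaling by a nonzero integer changes neither the real zero set nor the sign pattern up to a global sign we can detect). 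Then $f$ has a real root if and only if $g$ does, and $g$ is square-free, so every real root of $g$ is a genuine sign change.

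Next I would reduce ``square-free $g$ has a real root'' to a sign question answerable by polynomially many $\posslp$ queries. The idea: a square-free polynomial $g$ of degree $d$ has a real root if and only if $g$ takes a nonpositive value somewhere, i.e., $\min_{x\in\R} \lc(g)\cdot g(x)\leq 0$ after normalizing so the leading coefficient is positive (handling $\lc(g)<0$ symmetrically, and reading off $\deg g$ and $\lc(g)$ via $\degslp$, which reduces to $\posslp$). Equivalently, $g$ has a real root iff $g$ is \emph{not} strictly positive on all of $\R$. To turn this into $\posslp$ instances, I would use a quantitative root-location bound: all real roots of $g$ lie in an interval $[-R,R]$ with $\log R = \poly(s)$ (a Cauchy-type bound, computable by an SLP), and any two distinct real roots are separated by at least $2^{-\poly(s)}$ (a root-separation bound for SLP polynomials, again with coefficients and discriminant of controlled bit-length). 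Then one can set up a bisection/counting scheme: evaluating $g$ at the $2^{\poly(s)}$ dyadic grid points of $[-R,R]$ of spacing below the separation bound, each evaluation being a $\posslp$ query on an SLP of size $\poly(s)$ obtained by substituting a dyadic rational and clearing denominators — $g$ has a real root iff some consecutive pair of grid values has opposite (or zero) signs. This is too many points to enumerate directly, so instead I would do a binary search: using $\posslp$ to evaluate signs, locate a sign change by recursively subdividing, which takes $\poly(s)$ queries.

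Actually the cleanest route avoids the grid entirely: \emph{$g$ has a real root iff $g$ has a sign change iff $g(-R)$ and $g(R)$ have opposite signs, OR $g$ has a local extremum that is nonpositive-times-leading-coefficient}, but local extrema are roots of $g'$, pushing the problem to $g'$ — not obviously simpler. So I would commit to the approach via Perrucci--Sabia's actual reduction combined with the radical: their $\NP$-hardness already produces instances where the only difficulty is multiplicity, and square-freeness removes it, so it should suffice to reduce ``square-free $h$ has a real root'' to $\posslp$ by the observation that for square-free $h$ with positive leading coefficient and degree $d$, $h$ has a real root iff $h$ attains a value $\le 0$, iff the resultant-type / subresultant sign data — which are integers computable by $\poly(s)$-size SLPs from the SLP for $h$ — certify a negative value; concretely, Sturm's theorem lets one count real roots of $h$ in $(-R,R)$ as a difference of sign variations in the Sturm sequence evaluated at $\pm R$, and each such sign is a $\posslp$ query, while the Sturm sequence entries have SLPs of size $\poly(s)$ by standard SLP closure under polynomial division with controlled intermediate coefficient growth (this last closure fact, or the use of subresultants to avoid division, is where care is needed).

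The main obstacle, and the step I expect to occupy most of the proof, is exactly this SLP bookkeeping: showing that from an SLP for $f$ one can build $\poly(s)$-size SLPs for all the auxiliary integers whose signs decide the presence of a real root (evaluations of $g=\rad(f)$ or of its Sturm/subresultant sequence at dyadic endpoints, after clearing denominators), with bit-lengths and degrees staying polynomial — combined with invoking \ref{conj:constrdicalconj} to get $g$ in the first place. Once that is in place, each sign query is a $\posslp$ instance; $\posslp\in\BPP$ then decides each query with bounded error, we make $\poly(s)$ of them, amplify, and obtain a $\BPP$ algorithm for the Perrucci--Sabia problem, hence $\NP\subseteq\BPP$. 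I should also note the one-sided-error accounting: the radical SLP is produced only with high probability, but its failure is detectable or at worst contributes to the overall bounded error, so the final algorithm is still $\BPP$.
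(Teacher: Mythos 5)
Your proposal correctly identifies the two opening moves: invoke the Perrucci--Sabia reduction to produce a polynomial whose real-root existence encodes 3SAT, and invoke \ref{conj:constrdicalconj} to pass to the radical so that roots become genuine sign changes. But the heart of the theorem is how to turn ``the square-free $g$ has a real root'' into $\poly(n)$ many $\posslp$ queries, and both routes you sketch for this step fail.

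Binary search cannot be bootstrapped. The Perrucci--Sabia polynomial $\pols_{M}(W)$ has $\sum_{\psi} \varphi(M/\alpha(\psi))$ real roots, all in $(-1,1)$, and each $\varphi(M/\alpha(\psi))$ is even (a product of $p_i-1$ for odd primes $p_i$), so the root count is always even. Hence $g(-1)$ and $g(1)$ have the same sign and there is no endpoint sign change to bisect. You cannot recursively subdivide your way to a sign change in $\poly(s)$ steps without a starting bracket; the set where $g<0$ might, for all you know a priori, be confined to a subinterval of exponentially small measure. Your Sturm/subresultant fallback fares no better: the degree of $g=\rad(P_M(W))$ is on the order of $\varphi(M) = \prod_i(p_i-1)$, which is \emph{exponential} in $n$ even with $p_i = O(n\log n)$. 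The Sturm (or subresultant) sequence therefore has exponentially many entries, hence exponentially many sign evaluations -- this is not a bookkeeping issue about coefficient growth, it is an exponential blowup in the number of queries.

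What the paper actually does, and what is missing from your outline, is a \emph{random sampling} argument underpinned by two further ideas you do not mention. First, apply Valiant--Vazirani to reduce 3SAT to Unique-SAT; under the promise of at most one satisfying assignment, $\pols_M(W)$ degenerates to a single Chebyshev-cyclotomic factor $C_N$, whose roots are the $\cos(t\pi/2N)$ with $\gcd(t,2N)=1$. Second, a quantitative geometric lemma (the core technical work, \ref{thm:total_len_good_interval}, resting on \ref{lem:nontwodiffsmall} and \ref{lem:ratioofconsectivegreenred}): with $p_{\min}\ge n^3$, the subintervals of $(-1,1)$ on which $F_M(W)$ is negative have total length at least $1/\pi$. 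This is what licenses picking a uniformly random rational $a\in(-1,1)$ and testing $\sgn(F(a)F(1))$ with a \emph{single} $\posslp$ query (via \ref{lem:signatrational}), succeeding with probability $\Omega(1/n)$, after which standard amplification yields a $\BPP$ algorithm for 3SAT. Without Valiant--Vazirani the root set can be arbitrarily structured and no constant-measure bound for the negative region is available; without the measure bound, random sampling is useless. These are the two ingredients your proof would need and currently lacks.
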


We also show that counting the real roots of univariate polynomials
computed by straight-line programs is $\#\P$-hard. 
For a univariate polynomial $F$, let us denote by $Z_{\R}(F)$ 
the number of its real roots, counted with multiplicity. 
Consider the problem:

\begin{problem}[$\crr$]
\label{prob:countrealroots}Given a SLP $P$ computing a univariate
polynomial $f_{P}(x)\in\Z[x]$, compute $Z_{\R}(f_{P})$. 
\end{problem}

We prove the following result in \ref{sec:realrootcountinghardness}.

\begin{restatable}{thm}{countrealrootssharppahrd}\label{thm:countrealrootshard}$\crr$
is $\#\P$-hard. \end{restatable}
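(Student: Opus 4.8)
The plan is to reduce a known $\#\P$-hard counting problem to $\crr$ by encoding the witnesses of that problem as real roots of a polynomial that admits a small SLP. The natural candidate is $\#\text{SubsetSum}$ (or equivalently counting solutions to a $0/1$ knapsack equation): given positive integers $a_1,\dots,a_n$ and a target $t$, count the number of subsets $S\subseteq[n]$ with $\sum_{i\in S}a_i = t$. This problem is $\#\P$-hard, and the key feature we exploit is that the generating function $\prod_{i=1}^n (1 + y^{a_i})$ has, as the coefficient of $y^{t}$, exactly the number of such subsets, and this product has an SLP of size $\poly(n, \log(\sum a_i))$ once we have an SLP for $x \mapsto x^{a_i}$ (repeated squaring).

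First I would take the product polynomial $G(y) = \prod_{i=1}^n (1 + y^{a_i}) \in \Z[y]$, whose degree is $A := \sum_i a_i$ and whose coefficients are nonnegative integers bounded by $2^n$. Let $c_t = [y^t]G(y)$ be the quantity we wish to compute; note $0 \le c_t \le 2^n$. The idea is to isolate $c_t$ by a combination of SLP-computable operations and then turn "extract a particular coefficient / compare it to a threshold" into "count real roots." Extracting the coefficient $c_t$ itself from an SLP for $G$ is exactly $\degslp$-flavored and not obviously doable, so instead I would reduce the counting of $c_t$ to polynomially many \emph{threshold} queries "$c_t \ge m$?" for $m = 1, \dots, 2^n$ — wait, that is exponentially many; instead use binary search, so $O(n)$ threshold queries suffice, each of which we must realize as a real-root-counting instance. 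A threshold query $c_t \ge m$ is a $\posslp$-type question about the integer $c_t - m$; the remaining task is to (a) produce an SLP computing the integer $c_t$ from the SLP for $G$, and (b) convert an integer positivity/sign question into a real-root count.

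For step (a): to pull $c_t$ out of $G$ as an \emph{integer} with a small SLP, I would use the standard trick of substituting a large power of two and masking. Concretely, $c_t = \lfloor G(2^{n+1}) / 2^{t(n+1)} \rfloor \bmod 2^{n+1}$, since each coefficient fits in $n+1$ bits and the base $2^{n+1}$ prevents carries between coefficient "digits." The division-by-a-power-of-two and mod-a-power-of-two are the arithmetic-on-succinct-integers operations that are themselves doable within the counting hierarchy, but crucially we only need \emph{this specific} extraction, and $G(2^{n+1})$ has an SLP of size $\poly(n,\log A)$ by repeated squaring. Actually, the cleanest route avoids even this: I would instead directly build a univariate polynomial whose real-root count equals the target. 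For (b), the workhorse is the observation that for an integer $N$ given by an SLP, the polynomial $P_N(x) = N\cdot x^2 + x$ (or $x^2 - N$, or $(x^2-1)(x^2 - N^2)\cdots$) has a number of real roots that depends on $\sgn(N)$; more robustly, one encodes each bit of $c_t$ into a separate root. For instance, the polynomial $\prod_{j} \big((x-j)^2 - b_j\big)$ where $b_j \in \{0,1\}$ is the $j$-th bit of $c_t$ has exactly $2\cdot(\text{number of ones in } c_t)$ real roots — but we want $c_t$ itself, not its Hamming weight.

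The real obstacle, and where I would concentrate the work, is getting a \emph{single} polynomial with a \emph{small} SLP whose real-root count equals $c_t$ directly, rather than routing through binary search and many queries. Here is the approach I actually favor: consider $H(x) = \big(\prod_{i=1}^n (1 + x^{a_i})\big) \cdot \big(\prod_{i=1}^n(1 + x^{a_i})\big)$ evaluated cleverly — no; rather, use the polynomial $Q(x) := x^{2A+2} - \big(1 + x + x^2 + \cdots\big)$-type constructions to convert "coefficient $c_t$" into "the value $G$ takes at a specially chosen algebraic point," then appeal to Perrucci–Sabia-style encodings. I expect the main technical step to be the \textbf{amplification/encoding lemma}: showing that from an SLP computing an integer $N \ge 0$ of bit-length $\le L$ one can, in polynomial time, produce an SLP computing a univariate $f \in \Z[x]$ with $Z_\R(f) = N$ (or $= N + (\text{known offset})$). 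Given such a lemma, the reduction is immediate: compute the SLP for $G(2^{n+1})$, extract the SLP for the integer $c_t$ via power-of-two masking (which is a polynomial-size SLP manipulation, since all intermediate integers have polynomial bit-length), feed it to the encoding lemma, and call $\crr$ once; the output is $c_t$, the $\#\text{SubsetSum}$ answer. The encoding lemma is where one must be careful that roots are genuinely real and distinct (so multiplicities do not corrupt the count) and that the SLP stays small — I would build $f$ as a product of $L$ factors, the $j$-th factor contributing $0$ or $2^{?}$ real roots according to the $j$-th bit of $N$, using factors like $\big(x^2 - 2^{j}\big)^{[\,b_j = 1\,]}$ whose root count is $2b_j\cdot$(something) — and then sum the contributions to exactly $N$ requires the $j$-th factor to contribute $2^j b_j$ real roots, achievable by a factor whose real roots are, say, the $2^j$-th roots of a positive real when $b_j=1$ and empty when $b_j = 0$; concretely $\big(x^{2^{j}} - c\big)$ for $c>0$ even has exactly $2$ real roots regardless of $j$, so one instead stacks $\prod_{\nu=1}^{2^{j-1}}(x^2 - p_\nu)$ with distinct positive $p_\nu$ — but that SLP has size $2^{j}$, which is too big. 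Hence the honest conclusion: binary search is the right framework — I would make $O(\log(2^n)) = O(n)$ calls to $\crr$, each deciding one bit of $c_t$ via a sign query on an SLP-computed integer, and the per-call encoding only needs to produce a polynomial with $0$ versus $1$ (or $2$ versus $0$) real roots according to the sign of an SLP integer, which is easy: $x^2 - N$ has $0$, $1$, or $2$ real roots according as $N<0$, $N=0$, $N>0$. Since $\#\P$-hardness permits polynomially many adaptive oracle queries (Turing reductions), this suffices, and the main obstacle reduces to the routine check that the masking-based extraction of $c_t - 2^{k}$ stays within polynomial SLP size at every step.
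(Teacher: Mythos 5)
The proposal has a genuine gap at the step you yourself flag as the ``main obstacle'' and then dismiss as routine: producing a polynomial-size SLP for the integer $c_t=\lfloor G(2^{n+1})/2^{t(n+1)}\rfloor \bmod 2^{n+1}$ from an SLP for $G$. This is \emph{not} routine; it is the ``digit/bit extraction from an SLP'' problem (essentially \textsc{BitSLP}), and SLPs have no floor, division, or mod gates. Evaluating $G$ at $2^{n+1}$ gives a single integer $B$ of exponentially many bits, but there is no known polynomial-time procedure (or polynomial-size SLP) that, given an SLP for $B$, produces an SLP for a middle window of its bits. Repeated squaring handles $2^{t(n+1)}$ just fine, but that only shows you can express $B$ itself, not the quotient or the remainder, as SLPs. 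If such an extraction were possible, the rest of your argument would already be a reduction from a $\#\P$-hard problem to \emph{sign} queries on SLP-integers, because the real-root count of $x^2-N$ is precisely a test of $\sgn(N)$; in other words, your plan, if it went through, would establish that $\posslp$ is $\#\P$-hard. Since the paper only proves a \emph{conditional} $\NP$-hardness for $\posslp$ (its main theorem) and an unconditional $\#\P$-hardness only for the strictly richer oracle $\crr$, this is strong evidence that the extraction step cannot be done. You are, in effect, asking $\crr$ only yes/no sign questions, which throws away the extra power of the oracle.

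The paper's proof avoids any coefficient extraction. It reduces $\#3\mathrm{SAT}$ and exploits the Perrucci--Sabia construction directly: the polynomial $P_M(W)$, which has a small SLP, satisfies $\tfrac12 Z_\R(P_M(W))=\sum_{\psi\text{ sat }W}\varphi(M/\alpha(\psi))$, and by choosing the primes $p_i$ in the arithmetic progression $\{aq+2\}$ modulo an odd prime $q$ one forces each $\varphi(M/\alpha(\psi))\equiv 1\pmod q$, hence $\tfrac12 Z_\R(P_M(W))\equiv\#W\pmod q$. Repeating this for $n$ different odd primes $q_1,\dots,q_n$ with $\prod q_i>2^{n+1}$ and Chinese remaindering the residues recovers $\#W$ exactly. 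The crucial difference from your plan is that there the oracle's numerical output is used in full (it encodes $\#W\bmod q_i$), rather than being collapsed to a sign bit and then reassembled by binary search.
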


\subsection{Proof Ideas}

We rely on the proof of $\NP$-hardness due to \cite{PERRUCCI2007471} of
the following problem. 
The idea for this reduction goes back to \cite{PLAISTED1984125}, 
see \ref{sec:tchebyshevandrealroots} for a detailed discussion.

\begin{problem}
\label{prob:slprealrootexist}
Given an SLP $P$ computing a univariate integer polynomial, decide whether this polynomial has a real root. 
\end{problem}

\paragraph{Proof idea for \ref{thm:mainnphardposslp}:}

The reduction in~\cite{PERRUCCI2007471} computes for a given 3SAT formula $W$ in polynomial time 
an SLP computing a univariate polynomial
$P\in\Z[x]$ such that $W$ is satisfiable iff $P$ has a real root.
All the real roots of $P$  are in
the interval $[-1,1]$. Moreover, 
every real root of $P$ (if any) has
multiplicity two, and $P$  never attains negative values. Now
we call on~\ref{conj:constrdicalconj} to construct an SLP 
computing the radical $R\eqdef\rad(P)$. 
By the definition of the radical, the real roots of~$R$
are exactly the real roots of  $P$, but with multiplicity one.
If $W$ is not satisfiable, then $P$ has no real roots, and therefore, neither does~$R$.
In this case, $R$ does not change signs on $[-1,1]$: either it completely remains
below the $x$-axis or completely remains above the $x$-axis. 
On the other hand, if $W$~is satisfiable, then $R$ does cross the $x$-axis 
at the real roots of  $P$, because every root of $R$ has multiplicity exactly one. 
Hence it attains both negative and positive
values. Therefore:
\begin{itemize}
\item If $W$ is not satisfiable, then $R$ does not change sign on the interval $[-1,1]$.
\item If $W$ is satisfiable, then $R$ attains both negative and positive values on $[-1,1]$. 
\end{itemize}
Now we sample a random rational point $a$ from the interval $[-1,1]$.
By oracle calls to $\posslp$, we can compute the sign of $R$ at $a$ and $1$. 
By construction, if $W$ is not satisfiable, then $R(a)$ has the same sign as $R(1)$. 
But if $W$ is satisfiable, then there
are choices of $a$ for which $R(a)$ and $R(1)$ have different signs. 
Our novel contribution is that we can adapt the $\NP$-hardness
reduction in~\cite{PERRUCCI2007471} such that 
$R(a)$ and $R(1)$ have different signs with a constant probability,
for a random $a\in[-1,1]$, provided $W$ is satisfiable.
To this end, we will assume that $W$ has at most one satisfying
assignment, which can be achieved using the randomized polynomial time
reduction of 3SAT to Unique-SAT \cite{VALIANT198685}. 
Under this assumption, the set of real roots of $R$ has a simpler structure, 
which allows us to prove that $R(a)$ and $R(1)$ have different signs
with constant probability for a random $a\in[-1,1]$, 
see \ref{sec:nphardnessposslp} for details.

\paragraph{Proof idea for \ref{thm:countrealrootshard}:}
 
We use the ideas developed in Section 4 of \cite{countingcurves1993}
and the reduction of \cite{PERRUCCI2007471} outlined above. 
Let us denote by $\#W$ the number of satisfying
assignments of a 3SAT formula $W$. 
It is well known that computing $\#W$ is $\#\P$-complete.
The strategy is to prove that  
$\#W$ can be computed in polynomial time,  
if oracle calls to $\crr$ are allowed.

For a given 3SAT formula $W$, 
we compute the polynomial $P$ from above, which has a real
root iff $W$ is satisfiable. 
It turns out that 
$P$ has $N(\phi)$ many roots for each satisfying
assignment~$\phi$ of $W$, where 
$N(\phi) \eqdef \prod_{p_{i}\not\in\phi}(p_{i}-1)$. 
Here $\phi$ is seen as a subset of a set of $n$ odd 
primes $\lst pn$, see \ref{sec:tchebyshevandrealroots}. In the reduction of \cite{PERRUCCI2007471}, one 
can choose any odd primes $p_i$. Here we first choose an odd prime $q$, and
then the prime $p_{i}$ is chosen from the arithmetic progression
$\{aq+2\mid a\in\N\}$. 
This can be done efficiently because primes in arithmetic 
progressions have sufficiently high density~\cite{primesinapbennett2018}. 
This implies that $N(\phi)\equiv1\bmod q$. 
We show that as a consequence, 
$\#W\equiv \frac12 Z_{\R}(P)\bmod q$. 
Therefore, by using oracle calls to $\crr$, we can compute  
$\#W\bmod q$ for any odd prime~$q$. 
Doing so for sufficiently many odd primes $q$, and using Chinese remaindering, 
we can finally compute $\#W$. 
We refer to \ref{sec:realrootcountinghardness} for details. 

\section{Preliminaries\label{sec:tchebyshevandrealroots}}

In this section we mainly recall the reduction from \cite{PERRUCCI2007471,PLAISTED1984125} 
to prove that \ref{prob:slprealrootexist} is $\NP$-hard. 
For a positive integer $n$ we write $[n]\eqdef\{1,2,\ldots,n\}$.

Let us first note the following folklore result for later use.
It follows from the observation that the graph of a real polynomial crosses the
$x$-axis only on the roots of odd multiplicity.

\begin{lem}\label{lem:rootcountparity}
Let $a,b\in\R$ with $a<b$ and $f$
is a real univariate polynomial. Assume $f(a)f(b)\neq0$. Then $f$
has an even number of real roots (counted with multiplicity) in $(a,b)$
if and only if $f(a)f(b)>0$. \qed
\end{lem}

We also note the following easy observation for later use.

\begin{lem}\label{lem:signatrational} 
Consider the following problem: given a univariate polynomial $F(x)$ as an SLP 
and $p,q\in\Z$, compute the sign of $F(p/q)$. 
This problem reduces to $\posslp$ under polynomial time many one reductions. 
\end{lem}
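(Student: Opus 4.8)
The plan is to reduce the problem to $\posslp$ by expressing $F(p/q)$ in terms of an integer whose sign we can query. First I would observe that if $F$ has degree $d$ and is computed by an SLP of length $\ell$, then $d \le 2^\ell$, so $q^d$ can be computed by an SLP of length $O(\ell)$ (repeated squaring) once we know $d$; the degree itself is computable via $\degslp$, which reduces to $\posslp$. The key identity is that $q^d F(p/q)$ is an integer, since clearing denominators in each monomial $c_i (p/q)^i$ introduces a factor $q^{d-i}$, an integer. So $\sgn(F(p/q)) = \sgn(q^d F(p/q))$ when $q>0$ (and if $q<0$ we first replace $p/q$ by $(-p)/(-q)$ to assume $q>0$).

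Next I would build an SLP for the integer $M \eqdef q^d F(p/q)$. Starting from the given SLP for $F(x)$, I substitute $x \mapsto p/q$; but SLPs cannot divide, so instead I work with the SLP for the \emph{homogenization-like} quantity: replace the input $x$ by an SLP computing $p$, multiply through, and track a separate power of $q$. Concretely, $q^d F(p/q) = \sum_i c_i p^i q^{d-i}$; one clean way is to note that the polynomial $G(x,y) \eqdef y^d F(x/y)$ is a polynomial in $x,y$ (the degree-$d$ homogenization of $F$), and $G$ admits an SLP of size $O(\ell)$ obtained from the SLP for $F$ by carrying along, at each gate $a_i$ of formal degree $d_i$, the product $a_i \cdot y^{d - d_i}$ — additions and subtractions of gates of equal degree are fine, multiplications multiply the degrees and hence match up, and at the input we use $x$ and at the constant $1$ we use $y^{\text{(appropriate power)}}$. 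Then $M = G(p, q)$, and I feed the integers $p$ and $q$ (given in binary, hence by SLPs of size $\poly(\log|p|, \log|q|)$) into this SLP to obtain an SLP computing the integer $M$. Finally one oracle call to $\posslp$ on (an SLP for) $M$ decides whether $M>0$; a second call on $-M$ (or on testing $M=0$, which itself reduces to $\posslp$ via $M>0 \lor -M>0$) handles the sign completely, yielding $\sgn(F(p/q))$.

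The main technical point to get right is the bookkeeping of degrees when homogenizing the SLP: one must argue that at a gate $a_i = a_j \pm a_k$ the formal degrees of $a_j$ and $a_k$ need not be equal (e.g.\ $a_j$ could be $x^2$ and $a_k$ could be $1$), so the naive "multiply each gate by $y^{d-d_i}$" rule must be applied with $d_i$ taken to be an upper bound (the maximal degree along the computation, or simply $d$ itself at the end) and intermediate gates padded to a common degree before adding. A robust fix is to carry the pair $(a_i, y^{e_i})$ where $e_i$ is chosen so that all gates feeding a given $+/-$ gate share the same exponent; since the exponents never exceed $d \le 2^\ell$, each needed power $y^{e}$ costs only $O(\ell)$ extra gates, so the blow-up stays polynomial. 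Once this is handled, the reduction is clearly polynomial time and many-one into $\posslp$, completing the proof.
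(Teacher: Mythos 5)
Your underlying idea --- clear the denominator $q$ gate by gate through the SLP so that $F(p/q)$ becomes a ratio of integers whose sign you can query via $\posslp$ --- is exactly the paper's idea, but the paper's bookkeeping is cleaner and sidesteps both snags you flag. Instead of homogenizing and tracking formal degrees, the paper carries a numerator/denominator pair in lockstep along the SLP: for each gate $a_i$, maintain SLP gates computing integers $r_i, s_i$ with $a_i(p/q)=r_i/s_i$, starting from $(r_0,s_0)=(1,1)$ and $(r_1,s_1)=(p,q)$, and recursively setting $s_i = s_j s_k$ together with $r_i = r_j s_k \pm r_k s_j$ at a $\pm$ gate and $r_i = r_j r_k$ at a $\times$ gate. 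This costs $O(1)$ new gates per original gate, giving SLPs of length $O(\ell+\log|p|+\log|q|)$, and then $\sgn(F(p/q))=\sgn(r_\ell s_\ell)$ is read off with $\posslp$. Crucially, no denominator exponent is ever computed explicitly: there is no detour through $\degslp$ (which, as you observe, yields only a Turing reduction, at odds with the stated many-one claim), and there is no padding problem at addition gates because multiplying $r_j$ by $s_k$ is precisely the ``pad to a common denominator'' step, done with the denominator SLP itself rather than with a separately constructed power $q^{e}$. Your fix via syntactic formal-degree upper bounds is correct and does restore a many-one reduction, but it costs $O(\ell)$ extra gates per addition gate for the repeated-squaring of $q$, and you should be careful that the bound on the exponents is $2^\ell$ (the maximal formal degree along the SLP), not $\deg F$, since intermediate gates can have formal degree exceeding $\deg F$. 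So your proof is correct once patched; the paper's is the same idea realized with a simpler, linear-size recursion.
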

\begin{proof}
First notice that we have SLPs of length $O(\log p)$ and $O(\log q)$,
which compute $p$ and $q$, respectively. 
Suppose $F(x)$ is given by an SLP $P$ of length $\ell$.  
By induction on the length of SLPs, we show that we can efficiently construct from~$P$  
two SLPs $P_{1}$, $P_{2}$ of length $O(\log p+\log q+\ell)$, computing
integers $r,s$ respectively, such that $F(p/q)=r/s$.
Moreover $\sgn(r/s)=\sgn(rs)$.
\end{proof}

\subsection{Chebychev polynomials}

The Chebychev polynomials $T_{k}$ are univariate polynomials in one variable $x$ 
defined by 
$T_{0}(x) \eqdef 1$, $T_{1}(x)\eqdef x$ and for an integer $k\ge 2$ 
by the recursion 
$$
 T_{k}(x) \eqdef2xT_{k-1}(x)-T_{k-2}(x) .
$$
Clearly, the $T_{k}$ are integer polynomials. 
They have the following well known properties; see \cite{kincaid91}.

\begin{enumerate}
\item $\deg(T_{k})=k$  and the leading coefficient of $T_{k}$ is $2^{k-1}$.
\item $T_{k}(x)=\cos(k\arccos(x))$ for all $x\in[-1,1]$. 
\item The roots of $T_{k}$ are $\brac{\cos\paren{t\frac{\pi}{2k}}\mid t\in\{1,3,\ldots,2k-1\}}$. 
\item For every $p,q\in\N$, we have $T_{p}\circ T_{q}=T_{pq}$, where $\circ$ denotes the composition.
\end{enumerate}

The following is an easy consequence of the properties of Chebychev polynomials.

\begin{lem}[Lemma 1 in \cite{PERRUCCI2007471}]
\label{lem:TpqSLP} The Chebychev polynomial $T_{k}$ can
be computed by an SLP of length $O(k)$. Moreover, if $k=pq$ for 
$p,q\in\N$, then $T_{k}=T_{pq}$ can be computed
by a straight-line program of length $O(p+q)$. 
\end{lem}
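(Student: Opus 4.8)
The plan is to prove both parts by combining the Chebychev recursion with property~(4), the composition identity $T_p \circ T_q = T_{pq}$. For the first part, I would argue by strong induction on $k$: the recursion $T_k = 2xT_{k-1} - T_{k-2}$ shows that, once SLPs for $T_{k-1}$ and $T_{k-2}$ are available, one obtains an SLP for $T_k$ by appending a constant number of gates (one multiplication by $2$, realized by a self-addition $x+x$ applied to $T_{k-1}$, one multiplication by $x$, and one subtraction). Starting from $T_0 = 1$ and $T_1 = x$, this yields a single SLP of length $O(k)$ that computes $T_0, T_1, \ldots, T_k$ in sequence, and in particular $T_k$.

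For the second part, suppose $k = pq$. First build an SLP of length $O(q)$ computing $T_q$ as above; call its output gate $a_r$ with $r = O(q)$. Now I want to continue this SLP to compute $T_p(T_q(x))$. The key point is that the Chebychev recursion is "relative to the variable $x$": abbreviating $y \eqdef T_q(x)$, we have $T_p(y) = 2yT_{p-1}(y) - T_{p-2}(y)$ with base cases $T_0(y) = 1$ and $T_1(y) = y = a_r$. So I run exactly the same inductive construction as in the first part, but with every occurrence of the variable gate $a_1 = x$ replaced by the gate $a_r$ that already computes $T_q(x)$; this appends $O(p)$ further gates and the final output computes $T_p(T_q(x)) = T_{pq}(x) = T_k(x)$ by property~(4). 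The total length is $O(q) + O(p) = O(p+q)$.

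I do not anticipate a serious obstacle here; the statement is genuinely elementary. The only points requiring a little care are: (i) checking that multiplication by the constant $2$ is legitimate in the SLP model defined in the paper (no constants other than $1$ are allowed), which is handled by writing $2u = u + u$; and (ii) being explicit that the "composition" in property~(4) means substituting $T_q(x)$ for the variable, so that replaying the recursion with $a_r$ in place of $a_1$ really does compute the composed polynomial — this is immediate since an SLP computing a polynomial $g(x)$ with the variable gate pointing to a subprogram for $h(x)$ instead computes $g(h(x))$. With these two observations the length bounds $O(k)$ and $O(p+q)$ follow directly from counting the appended gates.
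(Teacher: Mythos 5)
Your proposal is correct and uses exactly the ingredients the paper points to (the three-term Chebychev recursion for the $O(k)$ bound, and the composition identity $T_p\circ T_q=T_{pq}$ to splice an $O(p)$-gate program onto the output gate of an $O(q)$-gate program). The paper itself gives no proof here --- it simply cites Lemma~1 of Perrucci--Sabia and asserts the statement as an easy consequence of the listed Chebychev properties --- so there is nothing to compare against beyond noting that you have filled in the intended argument, including the small but necessary observation that in the constant-free SLP model the factor $2$ must be produced by an addition rather than a constant gate.
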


\subsection{Real Roots of Univariate Polynomials and Straight-line Programs}
\label{subsec:RR}

We recall here the reduction~\cite{PERRUCCI2007471,PLAISTED1984125} 
from the well-known $\NP$-complete problem 3SAT to \ref{prob:slprealrootexist}.  
The idea is to associate with the 
$n$ literals $\lst xn$ of a 3SAT formula 
once and for all $n$ distinct odd primes $\lst pn$.
It will be convenient to abbreviate $p_{\max}\eqdef\max_i p_i $ and $p_{\min}\eqdef\min_i p_i$.
We put $M\eqdef\prod_{i\in[n]}p_{i}$ and 
enumerate the roots of 
the $M^{\textrm{th}}$ Chebychev polynomial~$T_{M}$
by the odd integers 
$t\in \odd M\eqdef\{1,3,\ldots,2M-1\}$.
Thus we define  
\begin{equation}\label{eq:def_r_M}
 r_{M}(t)\eqdef\cos\paren{t\frac{\pi}{2M}} 
 \end{equation}
and denote by 
$R_{M}\eqdef\{r_{M}(t)\mid t\in\odd M\}$
the set of zeros of $T_{M}(x)$. 
This defines the bijection 
$\odd M \to R_M, t \mapsto r_{M}(t)$ 
whose inverse we denote by 
$r \mapsto t_{M}(r)$.

We write $X\eqdef\{\lst pn\}$ and identify 
subsets $\phi\subseteq X$ with  Boolean assignments to the literals $\lst xn$. 
More specifically, $x_{i}$ is assigned "true" if and only if $p_{i}\in\phi$.
We now consider the map 
$$A_{M}:R_{M}  \rightarrow\{0,1\}^{X},\quad
 r  \mapsto\{p_{j}\mid p_{j}\text{ divides }t_{M}(r)\} ,
$$which assigns to a root $r$ of $T_M$ 
the set of prime divisors of $t_M(r)$.
Note that the function $A_{M}$ is surjective but not injective. 
In fact, the fiber of $A_M$ over $\phi\in \{0,1\}^X$ 
is given by 
\begin{equation}\label{eq:SMphieq}
 S_{M}(\phi) \eqdef A_{M}^{-1}(\phi)= \{r\in R_{M}\mid\gcd(t_{M}(r),M)=\alpha(\phi)\} , 
\end{equation} 
where we have set 
\[
\alpha(\phi)\eqdef\prod_{p\in\phi}p .
\]
Finally, we assign to a 3SAT formula $W$ over the literals $\lst xn$ 
the union of the sets~$S_{M}(\phi)$, taken over all satisfying Boolean assignments, that is,  
\[
  S_{M}(W)\eqdef\bigcup_{\phi\in\{0,1\}^{X}\text{, }\phi \text{ satisfies }W}S_{M}(\phi) .
\]
We denote by $\pols_{M}(W)$ the {\em monic} univariate real polynomial 
with the set of roots $S_{M}(W)$:
\[
\pols_{M}(W)\eqdef\prod_{r\in S_{M}(W)}(x-r).
\]
Note that $\pols_{M}(W)$ is square free and only has real roots because it is a factor $T_M(x)$.
As in \cite{PERRUCCI2007471}, we can express $\pols_{M}(W)$ in terms 
of the following analogues of the cyclotomic polynomials: \begin{equation}\label{eq:cyclotomeq}
  C_{\ell}(x)\eqdef\prod_{t\in\odd{\ell}\text{, }\gcd(t,\ell)=1}(x-r_{\ell}(t)).\end{equation}
If $\ell$ is odd, then the degree of $C_{\ell}$ is given by the Euler totient function 
(not to be confused with assignments $\phi$)
\begin{equation}\label{eq:deg-cyclotomeq}
  \deg C_{\ell} = \varphi(\ell) .
\end{equation}
\ref{eq:SMphieq} implies that for an assignment $\psi\in\{0,1\}^{X}$, we
have: 
\begin{equation}\label{eq:cyclotomeq-new}
 \prod_{r\in S_{M}(\psi)}(x-r) = C_{M/\alpha(\psi)}(x) .
\end{equation}
This immediately implies that 
\begin{equation}
\pols_{M}(W)=\prod_{\psi\in\{0,1\}^{X},\text{ }\psi\text{ satisfies }W}C_{M/\alpha(\psi)}(x).\label{eq:eqdefpolysm}
\end{equation}
Note that all the integers $M/\alpha(\psi)$ are odd. Therefore, 
using \ref{eq:deg-cyclotomeq}, we see that 
the number of real roots of $\pols_{M}(W)$ is given by 
\begin{equation}\label{eq:Nroots-PolySAT}
 Z_{\R}(\pols_{M}(W))=\sum_{\psi\text{ satisfies }W}\varphi(M/\alpha(\psi)).
\end{equation}
The following properties of $\pols_{M}(W)$ are easy to verify
(see Lemma~5 in \cite{PERRUCCI2007471}). 
Note that $\frac{1}{2^{M-1}}T_M(x)$ is the monic polynomial obtained by 
dividing $T_M(x)$ by its leading coefficient. 

\begin{lem}\label{lem:polysproperties}Suppose $W,W_{1}$, $W_{2}$ are 3SAT
formulas on the literals $\lst xn$, and the primes $\lst pn$ and $M=\prod_i p_i$ are as before.
Then we have: 
\begin{enumerate}
\item For a literal $x_{i}$, $\pols_{M}(x_{i})=\frac{T_{M/p_{i}}}{2^{M/p_{i}-1}}$. 
\item $W_{1}$ and $W_{2}$ are equivalent iff $\pols_{M}(W_{1})=\pols_{M}(W_{2})$. 
\item $\pols_{M}(\lnot W)=\frac{T_{M}}{2^{M-1}\pols_{M}(W)}.$ 
\item $\pols_{M}(W_{1}\land W_{2})=\gcd(\pols_{M}(W_{1}),\pols_{M}(W_{2}))$. 
\item $\pols_{M}(W_{1}\lor W_{2})=\lcm(\pols_{M}(W_{1}),\pols_{M}(W_{2}))$. 
\end{enumerate}
\end{lem}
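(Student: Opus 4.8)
The plan is to verify each of the five properties directly from the definitions \ref{eq:eqdefpolysm} and \ref{eq:cyclotomeq}, together with the fact that $\pols_M(W)$ is a squarefree divisor of $T_M(x)/2^{M-1}$ whose roots are exactly indexed by the assignments satisfying $W$. The central bookkeeping device is the bijection between subsets $\phi\subseteq X$ (equivalently, Boolean assignments) and the factors $C_{M/\alpha(\phi)}$; since all the moduli $M/\alpha(\phi)$ are distinct divisors of the odd squarefree integer $M$, the polynomials $C_{M/\alpha(\phi)}$ are pairwise coprime, and a product $\prod_{\phi\in\mathcal F} C_{M/\alpha(\phi)}$ is determined by, and determines, the index set $\mathcal F$.

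For item (1), a literal $x_i$ is satisfied precisely by those $\phi$ containing $p_i$; summing \ref{eq:cyclotomeq-new} over all such $\phi$ and observing that $\{M/\alpha(\phi)\mid p_i\in\phi\}$ ranges over exactly the divisors of $M/p_i$, one gets $\pols_M(x_i)=\prod_{d\mid M/p_i} C_d$. The product of $C_d$ over all divisors $d$ of an odd integer $m$ equals $\prod_{t\in\odd m}(x-r_m(t)) = T_m(x)/2^{m-1}$ (monic version of $T_m$), since every odd $t\in\odd m$ has a unique $\gcd(t,m)$, partitioning the roots of $T_m$; applying this with $m = M/p_i$ gives the claim. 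For item (2): by \ref{eq:eqdefpolysm} the root set $S_M(W)$, hence $\pols_M(W)$, depends only on the set of satisfying assignments, which is exactly the equivalence class of $W$; conversely equal polynomials have equal root sets, and since $A_M$ is surjective with fibers $S_M(\phi)$ the satisfying assignments are recovered. Items (4) and (5) are immediate from \ref{eq:eqdefpolysm} once one notes that the assignments satisfying $W_1\land W_2$ (resp.\ $W_1\lor W_2$) are the intersection (resp.\ union) of the two satisfying sets, and that, because of pairwise coprimality, taking $\gcd$ corresponds to intersecting index sets and $\lcm$ to uniting them. Item (3) follows by the same token: the satisfying assignments of $\lnot W$ are the complement of those of $W$ inside all of $\{0,1\}^X$, so $\pols_M(W)\cdot\pols_M(\lnot W)$ is the product of $C_{M/\alpha(\phi)}$ over all $\phi\subseteq X$, which as in item (1) (now with $m=M$) equals $T_M(x)/2^{M-1}$; rearranging gives the stated quotient formula.

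The only genuine point requiring care — and the step I would treat most carefully — is the identity $\prod_{d\mid m} C_d(x) = T_m(x)/2^{m-1}$ for odd $m$, i.e.\ that the "cyclotomic-like" factors $C_d$ partition the roots of $T_m$ exactly once each. This rests on: (a) the roots of $T_m$ being $\{r_m(t)\mid t\in\odd m\}$ with $t\mapsto r_m(t)$ a bijection onto $R_m$ (property 3 of Chebychev polynomials); (b) the map $t\mapsto \gcd(t,m)$ partitioning $\odd m$ according to which divisor of $m$ it is, with the block for divisor $\delta$ being $\{\delta t'\mid t'\in\odd{m/\delta},\ \gcd(t',m/\delta)=1\}$, which has size $\varphi(m/\delta)$; and (c) matching this with definition \ref{eq:cyclotomeq} of $C_{m/\delta}$ after the substitution $r_m(\delta t') = r_{m/\delta}(t')$ (valid since $\cos(\delta t'\pi/(2m)) = \cos(t'\pi/(2(m/\delta)))$). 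Everything else is formal manipulation of the index sets, and no estimates are involved; this is why the lemma is labelled "easy to verify," with the reference to Lemma~5 of \cite{PERRUCCI2007471} supplying the same computation.
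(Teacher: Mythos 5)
Your proof is correct. The paper itself gives no argument for this lemma---it simply cites Lemma~5 of \cite{PERRUCCI2007471} and calls the properties ``easy to verify''---so there is no in-paper proof to compare against; your write-up supplies exactly the verification that citation delegates. The structure you use (reduce everything to the factorization $\pols_M(W)=\prod_{\psi\,\text{sat.}\,W}C_{M/\alpha(\psi)}$ from \ref{eq:eqdefpolysm}, note the $C_d$ for $d\mid M$ are pairwise coprime monic polynomials so that subsets of satisfying assignments correspond bijectively to monic divisors, and establish the key identity $\prod_{d\mid m}C_d = T_m/2^{m-1}$ for odd $m$) is the natural and correct route, and it is presumably the same as the referenced proof. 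The one point you correctly single out as needing care---that the roots of $T_m$ are partitioned across the $C_d$, via $t\mapsto\gcd(t,m)$ and the reindexing $r_m(\delta t')=r_{m/\delta}(t')$---is indeed the only nontrivial step, and your sketch of it (using squarefreeness of $M$ so that the $\gcd$ lands on a well-defined divisor $\delta$, and the size-$\varphi(m/\delta)$ blocks) is sound. One small remark for completeness: items~(4) and~(5) implicitly take $\gcd$ and $\lcm$ to be the monic representatives; since all $\pols_M(\cdot)$ are monic by construction, this is consistent, but it is worth making explicit.
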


This lemma implies that $\pols_{M}(W)$ has rational coefficients. 
We also recall Lemma 6 in \cite{PERRUCCI2007471}, which says that the composition of 
$\pols_{M}(W)$ with the Chebychev polynomial $T_{q}$,
up to a scaling factor, 
equals $\pols_{Mq}(W)$.

\begin{lem}\label{lem:polymtopolymq}Let $W$ be a 3SAT formula over literals
$\lst xn$ and $y$ be a new literal. Let $q$ be a new odd prime associated to the literal $y$. 
If we think of $W$ being a 3SAT formula over the literals $\lst xn,y$, 
then we have, for some $\lambda\in\Q^\ast$, 
\[
 \pols_{M}(W)\circ T_{q} = \lambda \pols_{Mq}(W) .
\]
\end{lem}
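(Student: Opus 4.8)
The plan is to work directly with the root sets. The polynomial $\pols_M(W) \circ T_q$ vanishes precisely at those $x$ for which $T_q(x)$ is a root of $\pols_M(W)$, i.e. at those $x$ with $T_q(x) \in S_M(W)$. So first I would understand, for a fixed root $r = r_M(t) \in R_M$ with $t \in \odd M$, which points $x$ satisfy $T_q(x) = r$. Using $T_q(x) = \cos(q\arccos x)$ for $x \in [-1,1]$ and $r = \cos(t\pi/(2M))$, the solutions are $x = \cos\bigl(\tfrac{1}{q}(\pm t + 2Mk)\tfrac{\pi}{2M}\bigr) = r_{Mq}(s)$ where $s \equiv \pm t \pmod{2M}$ ranges over the $q$ values in $\odd{Mq}$ reducing to $\pm t$ modulo $2M$; since $t$ is odd and $2M \mid 2Mk$, each such $s$ is odd, and one checks these $q$ preimages are distinct and all lie in $\odd{Mq}$. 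Hence the root set of $\pols_M(W)\circ T_q$ is exactly $\{\,r_{Mq}(s) : t_M(r_{Mq}(s) \bmod \text{(reduction)}) \in S_M(W)\,\}$, and both sides of the claimed identity are (up to a scalar) the monic polynomial with this root set — I just need to match them.

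The cleanest way to organize this is to track $\gcd$'s, exactly as in the definitions \eqref{eq:SMphieq} and \eqref{eq:cyclotomeq-new}. By \eqref{eq:eqdefpolysm}, $\pols_M(W) = \prod_{\psi} C_{M/\alpha(\psi)}$ over satisfying $\psi$, so it suffices to prove $C_\ell \circ T_q = \lambda_\ell \, C_{\ell q}$ for each odd $\ell$ and the new odd prime $q$ (with $q \nmid \ell$, which holds since $q$ is a genuinely new prime), and then multiply these identities over the satisfying assignments $\psi$, noting $M/\alpha(\psi)$ times $q$ equals $(Mq)/\alpha(\psi)$ where on the right $\alpha$ is computed in the larger prime set but $\psi$ (not containing $y$) gives the same value. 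For the building-block identity: the roots of $C_\ell \circ T_q$ are the $x = r_{\ell q}(s)$, $s \in \odd{\ell q}$, such that $T_q(r_{\ell q}(s)) = r_\ell(t)$ for some $t$ with $\gcd(t,\ell) = 1$. From the preimage computation, $T_q(r_{\ell q}(s)) = r_\ell(t)$ forces $t \equiv \pm s \pmod{2\ell}$ (reducing the angle modulo $2\pi$ and using evenness of cosine), so $\gcd(t,\ell) = \gcd(s,\ell)$; thus the condition $\gcd(t,\ell)=1$ is equivalent to $\gcd(s,\ell) = 1$, which, combined with $s$ odd, is equivalent to $\gcd(s, \ell q) = 1$ or $\gcd(s,\ell q) = q$. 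I then need to rule out the $\gcd = q$ case: if $q \mid s$ then $r_{\ell q}(s) = \cos(s\pi/(2\ell q)) = \cos((s/q)\pi/(2\ell)) = r_\ell(s/q)$ is a root of $T_\ell$, hence $T_q$ of it equals $T_q(r_\ell(s/q))$, and one checks via the composition law $T_q \circ T_\ell$-type manipulation that such points are exactly the roots of $T_q \circ (\text{something})$ not in the image we want — more simply, $\gcd(s/q \cdot 1, \ell)$ considerations show these correspond to $t$ with $\gcd(t,\ell) \ne 1$ after all, a contradiction. So the root set of $C_\ell \circ T_q$ is exactly $\{r_{\ell q}(s) : \gcd(s, \ell q) = 1\} = R$-root-set of $C_{\ell q}$, with the right multiplicities (all simple, since all these polynomials divide $T_{\ell q}$ which is separable). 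Matching leading coefficients gives $\lambda_\ell \in \Q^\ast$ (in fact a power of $2$ up to sign), and the product over $\psi$ gives $\lambda \in \Q^\ast$.

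The main obstacle I anticipate is the bookkeeping around the two-valued preimage $t \equiv \pm s$ and making sure the count of preimages is exactly right (no collisions between the $+$ and $-$ branches, and all $q$ preimages genuinely odd and in range) — in particular handling the boundary values of $t$ near $0$ or $2M$ where $+t$ and $-t$ could conceivably coincide modulo $2Mq$. I would dispatch this by a careful residue argument: the map $\odd{\ell q} \to \odd{\ell}$ induced by $s \mapsto$ (the unique $t \in \odd\ell$ with $t \equiv \pm s \bmod 2\ell$) is well-defined and exactly $q$-to-one, and it carries $\gcd(\cdot, \ell)$ through unchanged. Everything else is degree counting: $\deg(C_\ell \circ T_q) = q\,\varphi(\ell)$ by property (1) of Chebychev polynomials and \eqref{eq:deg-cyclotomeq}, while $\deg C_{\ell q} = \varphi(\ell q) = \varphi(\ell)\varphi(q) = \varphi(\ell)(q-1)$ — so a direct equality $C_\ell \circ T_q = \lambda_\ell C_{\ell q}$ is \emph{false} by degrees, which tells me the correct building block is $C_\ell \circ T_q = \lambda_\ell \, C_{\ell q} \cdot C_\ell$ (the extra factor $C_\ell$ absorbing the $\gcd = q$ branch), and the full identity $\pols_M(W) \circ T_q = \lambda\, \pols_{Mq}(W)$ still comes out because property (1) of \ref{lem:polysproperties}, $\pols_{Mq}(x_i) = T_{Mq/p_i}/2^{Mq/p_i - 1}$, already encodes the same recursive relationship — so I would double-check against that consistency condition and, if needed, prove the lemma by induction over the Boolean-formula structure using parts (3)–(5) of \ref{lem:polysproperties} together with the single-literal base case rather than the cyclotomic factorization, whichever is cleaner.
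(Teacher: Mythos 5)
Your corrected building-block identity $C_\ell \circ T_q = \lambda_\ell\, C_{\ell q}\, C_\ell$ is right (the degree count that flagged the problem is exactly the right sanity check), but you never actually close the argument. The extra factor $C_\ell$ is not something to wave away by appealing to ``the same recursive relationship'' in part~(1) of \ref{lem:polysproperties}; it is explained by a concrete combinatorial fact you omit. When $W$ is re-read as a 3SAT formula over $\lst xn, y$, every satisfying assignment $\psi$ over $\lst xn$ lifts to \emph{two} satisfying assignments over the enlarged set: $\psi$ itself (with $y$ false), contributing $C_{Mq/\alpha(\psi)}$ to $\pols_{Mq}(W)$, and $\psi \cup\{q\}$ (with $y$ true), contributing $C_{Mq/(\alpha(\psi)q)} = C_{M/\alpha(\psi)}$. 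Hence by \ref{eq:eqdefpolysm},
$\pols_{Mq}(W) = \prod_{\psi} C_{Mq/\alpha(\psi)}\, C_{M/\alpha(\psi)}$,
which is exactly what you obtain from multiplying $C_{M/\alpha(\psi)}\circ T_q = \lambda_\psi\, C_{Mq/\alpha(\psi)}\, C_{M/\alpha(\psi)}$ over $\psi$. Your text only tracks the extensions with $y$ false (``$\psi$ (not containing $y$) gives the same value''), so as written the combination step does not match, and this is precisely why your first attempt looked plausible but was off by a factor of degree $\sum_\psi \varphi(M/\alpha(\psi))$. There is also a small but real slip in the preimage computation: for odd $s,t$ one has $T_q(r_{\ell q}(s)) = r_\ell(t)$ iff $s\equiv \pm t \pmod{4\ell}$, not $\pmod{2\ell}$; with the $2\ell$ modulus the ``map'' $s\mapsto t$ is two-valued and the $q$-to-one count fails (try $\ell=3$, $q=5$, $t=1$: the $\pmod{12}$ version correctly gives the five preimages $s\in\{1,11,13,23,25\}$, the $\pmod 6$ version gives ten). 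Fortunately the two facts you actually use, that the map preserves $\gcd(\cdot,\ell)$ and is $q$-to-one, survive once the modulus is corrected, since $\ell \mid 4\ell$.

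As an aside, the fallback you mention---induction on the Boolean formula using parts (1), (3)--(5) of \ref{lem:polysproperties}---is arguably the cleaner route and worth spelling out. The base case $\pols_M(x_i)\circ T_q = \lambda_i \pols_{Mq}(x_i)$ is immediate from $T_{M/p_i}\circ T_q = T_{Mq/p_i}$, and negation is a one-line scalar chase. The step you would need to justify is that $\gcd$ and $\lcm$ commute with precomposition by a fixed polynomial; this holds because if $d=\gcd(f,g)$ and $f=df'$, $g=dg'$ with $\gcd(f',g')=1$, then a B\'ezout identity $uf'+vg'=1$ composes to $(u\circ h)(f'\circ h)+(v\circ h)(g'\circ h)=1$, so $\gcd(f\circ h, g\circ h)= d\circ h$ up to a unit, and $\lcm$ follows from $\lcm\cdot\gcd = fg$. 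That argument sidesteps the cyclotomic bookkeeping entirely. (Note the paper does not prove this lemma; it cites Lemma~6 of \cite{PERRUCCI2007471}, so either route is new write-up.)
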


We are now concerned with the efficient computation of $\pols_{M}(W)$.
Recall $p_{\max} = \max_i p_i$.

\begin{lem}\label{cor:polysmforaclause}
Let $C$ be a clause formed by 3 literals
$x_{i},x_{j},x_{k}$ (and their negations). In time $O(p_{\max}^{9})$,
we can construct an SLP computing a nonzero integer
polynomial $F_{M}(C)$, such that 
$$ 
 F_{M}(C) = I_M(C)\pols_{M}(C)
$$ 
for some nonzero integer $I_M(C)$. 
\end{lem}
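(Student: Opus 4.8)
The plan is to build the SLP for $F_M(C)$ by translating the Boolean structure of the clause $C$ into polynomial operations, exploiting parts 1, 4 and 5 of \ref{lem:polysproperties} together with the SLP bound for Chebychev polynomials in \ref{lem:TpqSLP}. First I would observe that a clause $C = (\ell_i \lor \ell_j \lor \ell_k)$ with each $\ell$ a literal or its negation can be written, using De Morgan, as $\lnot(\lnot\ell_i \land \lnot\ell_j \land \lnot\ell_k)$, and moreover each single literal and its negation is itself handled by parts 1 and 3 of \ref{lem:polysproperties}: $\pols_M(x_i) = T_{M/p_i}/2^{M/p_i-1}$ and $\pols_M(\lnot x_i) = T_M/(2^{M-1}\pols_M(x_i))$. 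So the building blocks $\pols_M(\ell)$ for the six relevant literals are, up to explicit rational scaling factors, ratios of Chebychev polynomials $T_{M/p_i}$, $T_M$, each of which has an SLP of length $O(p_{\max})$ by \ref{lem:TpqSLP} (since $M/p_i$ factors into at most $n$ primes each at most $p_{\max}$, actually we only need the two-factor-composition bound here, giving $O(p_{\max})$; in any case $\poly(p_{\max})$ suffices for the stated $O(p_{\max}^9)$ budget). The issue is that $\pols_M(C)$ is in general not a polynomial combination of these but a $\gcd$/$\lcm$ combination, i.e. parts 4 and 5 involve polynomial gcd and lcm, which are not SLP operations.

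The key trick, which I expect is the heart of the argument, is to avoid computing gcds and lcms of polynomials as SLPs directly, and instead to clear denominators so that everything becomes a product (and quotient by a fixed Chebychev polynomial) that is visibly an SLP up to an integer factor. Concretely, using \ref{lem:polysproperties}(5), $\pols_M(C) = \mathrm{lcm}(\pols_M(\ell_i), \pols_M(\ell_j \lor \ell_k))$, and since all these polynomials are squarefree divisors of the single polynomial $T_M/2^{M-1}$, their lcm times their gcd equals their product; more usefully, the lcm of squarefree divisors of $T_M$ equals $T_M/2^{M-1}$ divided by the product of the complementary factors. Since $\pols_M(\lnot W) = (T_M/2^{M-1})/\pols_M(W)$ by part 3, the lcm $\pols_M(\ell_i \lor \ell_j \lor \ell_k)$ equals $(T_M/2^{M-1}) / \pols_M(\lnot\ell_i \land \lnot\ell_j \land \lnot\ell_k) = (T_M/2^{M-1}) / \gcd(\pols_M(\lnot\ell_i), \pols_M(\lnot\ell_j), \pols_M(\lnot\ell_k))$. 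The point is that the gcd of three \emph{explicit} squarefree Chebychev-type polynomials $\pols_M(\lnot\ell)$ — each of which is itself (up to scaling) either $T_{M/p}$ or $T_M/T_{M/p}$ — can be read off combinatorially from the divisibility structure of the root sets $S_M(\cdot)$ described in \ref{eq:SMphieq} and \ref{eq:cyclotomeq-new}: the gcd's root set is the intersection of the three root sets, which is a union of $C_{M/d}$'s over the relevant divisors $d$, and this is again a ratio of products of $T_{M/d}$'s and $C_{M/d}$'s. Each such cyclotomic analogue $C_\ell$ for $\ell \mid M$ has an SLP of size $\poly(p_{\max})$ because $C_\ell$ divides $T_\ell$ and the quotient is a product of lower $T$'s, or one can build $C_\ell$ recursively via $\prod_{d\mid\ell}C_d = T_\ell/2^{\ell-1}$ and polynomial division by a known factor (division of an SLP polynomial by an SLP polynomial that is known to divide it can be done with polynomially many operations, e.g. by Newton iteration on power series truncated at $\deg$, which is $\poly(p_{\max})$ here since all degrees are at most $M \le p_{\max}^n$ — wait, that degree is too large).

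Let me correct that last point, since it is the real obstacle: the degrees of these polynomials are as large as $M = \prod_i p_i$, which is exponential in $n$, so one cannot afford anything whose cost scales with the degree. This is exactly why the construction must stay within the "ratio of Chebychev polynomials" world, where \ref{lem:TpqSLP} gives SLPs of length $O(p+q)$ for $T_{pq}$ regardless of how large $pq$ is — the SLP exploits the composition identity $T_p \circ T_q = T_{pq}$ and never writes down the degree-$M$ expansion. So the actual plan is: express $\pols_M(C)$ as a finite alternating product/quotient of Chebychev polynomials $T_e$ where each exponent $e$ is of the form $M/d$ for $d$ a product of a bounded number (at most $3$) of the primes $p_i$ — this follows by repeatedly applying parts 1, 3, 4, 5 of \ref{lem:polysproperties} to the three literals of $C$, i.e. by inclusion-exclusion over the $\le 8$ sub-assignments to $\{\ell_i,\ell_j,\ell_k\}$ consistent with $C$, using \ref{eq:eqdefpolysm} which writes $\pols_M(W)$ as a product of the $C_{M/\alpha(\psi)}$ and then re-expressing each $C_{M/\alpha(\psi)}$ via Möbius inversion as a quotient of $T_{M/d}/2^{M/d-1}$'s over divisors $d$ of $M/\alpha(\psi)$ that are themselves products of the three relevant primes — crucially only $O(1)$ many distinct Chebychev polynomials appear, each with a factorization $M/d = (M/(\text{few primes}))$ into factors bounded by $p_{\max}$, hence each with an SLP of length $O(p_{\max})$, and each $M/d$ itself is a product of at most $n$ primes so \ref{lem:TpqSLP} applied iteratively gives length $O(n\,p_{\max}) = O(p_{\max}^2)$ at worst. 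Multiplying and dividing $O(1)$ such SLPs, and dividing by an SLP known to divide the numerator (here division is exact and the quotient is again a product of Chebychev factors, so it too has a small SLP — we never do generic polynomial division, only division by explicitly known factors whose complementary cofactor is itself a product of known $T$'s, i.e. $T_{ab}/T_a$ is handled by multiplying by the SLP for $T_b\circ T_a / T_a$... here I would instead simply keep track of the numerator and denominator separately as products of $T$'s and output the numerator SLP, absorbing the denominator and all powers of $2$ into the integer $I_M(C)$). Finally, the extra factor $I_M(C)$ is exactly the product of denominators — a product of powers of $2$ and of leading coefficients $2^{(M/d)-1}$ — which is a nonzero integer by construction, so $F_M(C) = I_M(C)\pols_M(C)$ is a nonzero integer polynomial with an SLP of length $O(p_{\max}^2)$, well within $O(p_{\max}^9)$, and the whole construction (enumerating the $O(1)$ sub-assignments, building $O(1)$ Chebychev SLPs, concatenating) runs in time $\poly(p_{\max})$. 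The main obstacle, to reiterate, is organizing the gcd/lcm combination of \ref{lem:polysproperties} so that it is realized purely by multiplication, exact division by known factors, and composition of Chebychev polynomials — never by an operation whose cost depends on the exponentially large degree $M$ — and verifying that the resulting integer multiplier $I_M(C)$ is nonzero.
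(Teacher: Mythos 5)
Your high-level plan is in the right spirit — exploit \ref{lem:polysproperties}, the composition identity, and \ref{lem:TpqSLP}, and keep the degree-$M$ expansion off the table — and you correctly identify that all the Chebychev indices that appear are of the form $M/d$ with $d\mid p_ip_jp_k$. But the final step has a genuine gap: you cannot "absorb the denominator into the integer $I_M(C)$" because that denominator is a nonconstant polynomial, not an integer. After M\"obius inversion, each $C_{\ell}$ is a \emph{ratio} of products of Chebychev polynomials, and the product $\pols_M(C)=\prod_{\psi}C_{M/\alpha(\psi)}$ does not telescope to something with a purely integer denominator. The fallback you suggest — that the exact quotient is "again a product of Chebychev factors" because $T_{ab}/T_a$ is handled by $T_b\circ T_a/T_a$ — is false: $T_{ab}/T_a$ is not a polynomial in general (e.g.\ $T_4/T_2=(8x^4-8x^2+1)/(2x^2-1)$ has a nonzero remainder, since $T_4$ does not vanish at the roots of $T_2$). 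Chebychev polynomials $T_\ell$ factor into the analogues $C_d$ of cyclotomic polynomials, not into each other, and a product of $C_d$'s is not a product of $T_e$'s. So working directly at scale $M$, there is no way to realize the required polynomial division by SLP operations without paying a cost that scales with the (exponential) degree.

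The paper sidesteps this entirely by first working in the "small" world. Put $N:=p_ip_jp_k$, so $\deg \pols_N(C)\le N=O(p_{\max}^3)$ is polynomially bounded. At this scale one can afford arbitrary dense polynomial arithmetic — in particular the exact division you need — and thereby construct, in time $O(N^3)$, an SLP of size $O(N^3)$ for a nonzero integer multiple of $\pols_N(C)$ (this is Proposition~4 of \cite{PERRUCCI2007471}). Then \ref{lem:polymtopolymq} gives $\pols_N(C)\circ T_{M/N}=\lambda\,\pols_M(C)$ for some $\lambda\in\Q^\ast$, and \ref{lem:TpqSLP} provides an SLP of size $O(np_{\max})$ for $T_{M/N}$. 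Composing the two SLPs and clearing the rational scalar yields the desired $F_M(C)=I_M(C)\pols_M(C)$ with an SLP of size $O(N^3+np_{\max})=O(p_{\max}^9)$. The moral is that the polynomial division has to happen where the degree is $\poly(p_{\max})$, namely over the three-prime modulus $N$; the lift to $M$ is then a single Chebychev composition, which is cheap. Your write-up circles this idea (you restrict the divisors $d$ to products of the three relevant primes, which is only justified by \ref{lem:polymtopolymq}), but without the explicit "compute at $N$, then compose with $T_{M/N}$" two-step, the division problem remains unresolved.
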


\begin{proof}
Put $N\eqdef p_{i}p_{j}p_{k}$. 
Tracing the proof of Proposition 4 in \cite{PERRUCCI2007471}, we see that 
in time $O(N^{3})$, we can construct an SLP of size $O(N^{3})$ computing
a polynomial $F$, which is a nonzero integer multiple of $\pols_{N}(C)$.
From \ref{lem:polymtopolymq} we deduce that for some $\lambda\in\Q^\ast$, 
$$
 \pols_{N}(C)\circ T_{M/N} = \lambda \pols_{M}(C) .
$$
After multiplying with a suitable integer,
we can write this as  
$F_{M}(C)=I_M(C)\pols_{M}(C)$
with some nonzero integer $I_M(C)$.
With \ref{lem:TpqSLP} we refer that 
$F_M(C)$ has an SLP of size $O(N^{3}+np_{\max})$,
which can be constructed in the same amount of time. To complete the proof, notice that $O(N^{3}+np_{\max})=O(p_{\max}^{9})$. 
\end{proof}

\begin{thm}
\label{thm:pmwsmallcircuit}
Suppose $W=C_{1}\land \dots\land C_{m}$ 
is a 3SAT formula on $n$ literals with $m$ clauses. 
Let $F_{M}(C_{i})$ be the integer polynomial (multiple of $\pols_M(C_i)$) 
constructed for the clause $C_{i}$ in \ref{cor:polysmforaclause}. Then: 
\begin{enumerate}
\item An SLP computing the polynomial $P_{M}(W)$, defined as sum of squares, 
\[
 P_{M}(W)\eqdef\sum_{i\in[m]}(F_{M}(C_{i}))^{2},
\]
can be computed in time $O(mp_{\max}^{9})$ for
given $W$ (and from the primes $p_1,\ldots,p_n$).

\item The polynomial $P_{M}(W)$ has the same set of real roots as $\pols_{M}(W)$.

\item Every real root of $P_{M}(W)$ has multiplicity two. 

\item The radical of $P_{M}(W)$ satisfies 
$$
 \rad(P_{M}(W)) = \pols_{M}(W) \cdot Q_M(W) ,
$$
where $Q_M(W)$ is an integer polynomial having no real roots.
\end{enumerate}
\end{thm}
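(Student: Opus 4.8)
The plan is to establish the four claims in sequence, since each builds naturally on the previous ones and on the structural results already recorded for $\pols_M(W)$.

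\textbf{Item 1.}
First I would apply \ref{cor:polysmforaclause} to each clause $C_i$ separately: this yields, in time $O(p_{\max}^9)$ per clause, an SLP computing the integer polynomial $F_M(C_i) = I_M(C_i)\pols_M(C_i)$. Concatenating these $m$ SLPs, squaring each output gate with one multiplication, and summing the $m$ squares with $m-1$ additions produces an SLP for $P_M(W) = \sum_{i\in[m]}(F_M(C_i))^2$ in total time $O(mp_{\max}^9)$, as claimed. All intermediate polynomials are integer polynomials, so $P_M(W)\in\Z[x]$.

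\textbf{Items 2 and 3.}
The key observation is that a sum of squares of real polynomials vanishes at a real point $a$ if and only if every summand vanishes at $a$. Hence the real zero set of $P_M(W)$ is $\bigcap_{i\in[m]} Z_\R(F_M(C_i)) = \bigcap_{i\in[m]} Z_\R(\pols_M(C_i))$, the last equality because $I_M(C_i)$ is a nonzero integer. By \ref{lem:polysproperties}(4), iterated, this intersection of root sets is exactly the root set of $\gcd$ of the $\pols_M(C_i)$, which is $\pols_M(W) = \pols_M(C_1\wedge\cdots\wedge C_m)$; equivalently one can use $S_M(W)=\bigcap_i S_M(C_i)$ directly from the definitions. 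This proves item 2. For item 3, note each $F_M(C_i)$ has only \emph{simple} real roots, since it is an integer multiple of $\pols_M(C_i)$, which divides $T_M$ and hence is square free with only real roots. If $a$ is a real root of $P_M(W)$, then $a$ is a simple root of every $F_M(C_i)$, so each $(F_M(C_i))^2$ has a double root at $a$; writing $F_M(C_i)(x) = (x-a)g_i(x)$ with $g_i(a)\neq 0$, we get $P_M(W)(x) = (x-a)^2\sum_i g_i(x)^2$, and $\sum_i g_i(a)^2 > 0$ since at least one $g_i(a)\neq 0$. Thus $a$ has multiplicity exactly two.

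\textbf{Item 4.}
Since every real root of $P_M(W)$ has multiplicity exactly two and the real roots of $P_M(W)$ coincide with those of $\pols_M(W)$ (which are all simple), the square-free part $\rad(P_M(W))$ has exactly the roots of $\pols_M(W)$ among its real roots, each with multiplicity one. Write $\rad(P_M(W)) = c\cdot\pols_M(W)\cdot Q_M(W)$ where $Q_M(W)$ collects the remaining irreducible factors; since $\pols_M(W)$ is monic with rational (indeed, after clearing denominators, we track the scalar $c$) coefficients and square free, and $\rad(P_M(W))$ is an integer polynomial defined up to sign, $Q_M(W)$ is a (rational, then integer after absorbing $c$) polynomial sharing no root with $\pols_M(W)$. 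Because $\rad(P_M(W))$ and $\pols_M(W)$ have the same real roots, $Q_M(W)$ has no real roots at all. The main subtlety here is bookkeeping of the rational scaling factors: $\pols_M(W)$ is only guaranteed to have rational coefficients by \ref{lem:polysproperties}, and $F_M(C_i)$ is an integer multiple of it, so I would phrase the conclusion as an identity of polynomials up to a nonzero rational (equivalently, integer) scalar, which is harmless for the applications since we only ever use $\rad(P_M(W))$ through sign changes on $[-1,1]$. I expect this scalar-tracking — rather than any conceptual point — to be the only place requiring care.
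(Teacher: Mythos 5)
Your proof follows essentially the same route as the paper's: apply \ref{cor:polysmforaclause} clause-by-clause for item 1, use the sum-of-squares vanishing criterion together with the gcd identity from \ref{lem:polysproperties} for item 2, and read off items 3 and 4 from the resulting factorization. Your item 3 is actually spelled out a bit more carefully than in the paper (the explicit $(x-a)^2\sum_i g_i^2$ decomposition), and your remark about tracking the rational scalar in item 4 is a legitimate point that the paper treats somewhat loosely, but neither changes the substance of the argument.
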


\begin{proof}
1. The first assertion follows from \ref{cor:polysmforaclause}.

2. A real number $r$ is a root of $P_{M}(W)$ if and only if $r$ is a 
root of all the $F_{M}(C_{i})$. 
The common zero set of 
$P_{M}(C_1),\ldots,P_{M}(C_m)$ 
equals the zero set of $\pols_{M}(W)$, since by \ref{lem:polysproperties} 
\[
\pols_{M}(W)=\gcd(\pols_{M}(C_{1}),\pols_{M}(C_{2}),\dots,\pols_{M}(C_{m})).
\]
Also note that $P_{M}(C_i)$ only has real roots.
It follows that $P_{M}(W)$ and $\pols_{M}(W)$ have same set of real roots, 
which show the second assertion.

3. From the construction of $P_{M}(W)$, it follows that every real root of $\pols_{M}(W)$ has multiplicity two.

4. Let us write 
$\pols_{M}(W)= \prod_{i=1}^a (x-r_i)$.
Then the factorization of $P_M(W)$ into irreducible polynomials in $\Z[x]$ has the form
$$
 P_M(W) = I \cdot \prod_{i=1}^a (x-r_i)^2 \cdot \prod_{j=1}^b h_j^{e_j} ,
$$
where the irreducible polynomials $h_j$ have no real roots, $e_j\ge 1$, and $I\in\Z$.
Therefore
$$
 \rad(P_M(W)) = \rad(I) \prod_{i=1}^a (x-r_i) \cdot \prod_{j=1}^b h_j ,
$$
which is the fourth assertion with $Q_M(W) := \rad(I)\prod_{j=1}^b h_j$.
\end{proof}

Using the above construction, the following was derived in \cite{PERRUCCI2007471}. 
We provide the proof since our argument will be a refinement of it. 

\begin{thm}
\label{thm:nphardnessrealrootexist}\ref{prob:slprealrootexist} is
$\NP$-hard. 
\end{thm}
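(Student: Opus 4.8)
The plan is to reduce 3SAT to Problem~\ref{prob:slprealrootexist} using the polynomial $P_M(W)$ from \ref{thm:pmwsmallcircuit}. Given a 3SAT formula $W = C_1 \land \dots \land C_m$ on $n$ literals, I would first fix $n$ distinct odd primes $p_1, \dots, p_n$ small enough to be written down in polynomial time --- by the prime number theorem (or Bertrand's postulate applied iteratively) we may take all $p_i = O(n \log n)$, so $p_{\max} = \poly(n)$. Then I invoke \ref{thm:pmwsmallcircuit}(1) to construct, in time $O(m\, p_{\max}^9) = \poly(n,m)$, an SLP computing $P_M(W)$, and I output this SLP as the instance of \ref{prob:slprealrootexist}.

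\emph{Correctness.} I must argue that $P_M(W)$ has a real root if and only if $W$ is satisfiable. By \ref{thm:pmwsmallcircuit}(2), $P_M(W)$ has the same set of real roots as $\pols_M(W)$, so it suffices to show that $\pols_M(W)$ has a real root iff $W$ is satisfiable. If $W$ is satisfiable, pick a satisfying assignment $\psi$; then by \ref{eq:eqdefpolysm} the polynomial $C_{M/\alpha(\psi)}(x)$ divides $\pols_M(W)$, and since $M/\alpha(\psi)$ is a positive odd integer, \ref{eq:deg-cyclotomeq} gives $\deg C_{M/\alpha(\psi)} = \varphi(M/\alpha(\psi)) \ge 1$, so $\pols_M(W)$ has a (real) root --- recall $\pols_M(W)$ divides $T_M$ and hence all its roots are real. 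Conversely, if $W$ is unsatisfiable, then the index set in \ref{eq:eqdefpolysm} is empty, so $\pols_M(W)$ is the empty product, i.e.\ the constant $1$, which has no roots; equivalently $S_M(W) = \emptyset$, so the monic polynomial with root set $S_M(W)$ is $1$. Either way, $P_M(W)$ has a real root exactly when $W$ is satisfiable, and since 3SAT is $\NP$-complete, this establishes $\NP$-hardness of \ref{prob:slprealrootexist}.

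\emph{On the main obstacle.} The conceptual content is all packaged into \ref{thm:pmwsmallcircuit} and the surrounding setup from \cite{PERRUCCI2007471}; given those, the proof is a short assembly. The one point requiring genuine care is the \textbf{choice of primes}: the SLP size and construction time scale polynomially in $p_{\max}$, so I need the $p_i$ to be of size $\poly(n)$, which requires invoking a density bound for primes (it suffices that there are at least $n$ primes below $\poly(n)$, which follows from elementary bounds). A secondary subtlety worth a sentence is verifying that $P_M(W)$ is a genuine polynomial over $\Z$ and nonzero --- this is immediate since it is a sum of squares of the nonzero integer polynomials $F_M(C_i)$ from \ref{cor:polysmforaclause}. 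No step here is hard; the work was done in establishing \ref{thm:pmwsmallcircuit}, and the refinement promised in the text (controlling multiplicities and the structure of $\rad(P_M(W))$, used later for \ref{thm:mainnphardposslp}) is exactly what parts (3)--(4) of that theorem already supply.
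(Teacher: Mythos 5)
Your proposal is correct and takes essentially the same route as the paper: reduce 3SAT via $P_M(W)$ from \ref{thm:pmwsmallcircuit}, use the prime number theorem to keep $p_{\max}=\poly(n)$, and observe that $P_M(W)$ has a real root iff $\pols_M(W)$ does iff $W$ is satisfiable. The only difference is cosmetic---you unpack the "by definition $\pols_M(W)$ has a real root iff $W$ is satisfiable" step explicitly via \ref{eq:eqdefpolysm} and the degree formula $\deg C_\ell = \varphi(\ell)\ge 1$, where the paper states it as immediate from the construction.
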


\begin{proof}
Suppose $W$ is a 3SAT formula on $n$ literals. Using \ref{cor:polysmforaclause} and \ref{thm:pmwsmallcircuit}, 
in time $\poly(p_{\max},m)$, 
we can construct an SLP, which computes a polynomial $f$ 
that has same real roots as $\pols_{M}(W)$, albeit with multiplicity two. 
By definition, $\pols_{M}(W)$
has a real root if and only if $W$ is satisfiable. Hence $f$
has a real root if and only if $W$ is satisfiable. By the well-known
prime number theorem \cite{hardy2008introduction}, $p_{\max}$ can
be chosen to be of magnitude $O(n\log n)$. This proves that \ref{prob:slprealrootexist}
is $\NP$-hard. 
\end{proof}

\section{$\NP$-hardness of PosSLP \label{sec:nphardnessposslp}}
The following problem is crucial for showing the $\NP$-hardness of $\posslp$.

\begin{problem}[Unique-SAT]
\label{prob:uniquesat}
Given a 3SAT formula $W$ with the promise
that $W$ has at most one satisfying assignment, decide if $W$ is
satisfiable. 
\end{problem}

The following is well known.

\begin{thm}[Valiant-Vazirani, \cite{VALIANT198685}]\label{thm:valiantvazirani}
There is a probabilistic polynomial time algorithm, which given a 3SAT formula $W$ on $n$ literals,  
outputs a  3SAT formula $W^{\prime}$ such that if $W$ is satisfiable, 
then $W^{\prime}$ has a unique satisfying assignment with probability at least $\frac{1}{8n}$. 
If $W$ is not satisfiable, then $W^{\prime}$ is also not satisfiable. 
\end{thm}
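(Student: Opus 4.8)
The plan is to reproduce the classical hashing argument of Valiant and Vazirani. Write $x_1,\dots,x_n$ for the variables of $W$ and let $S\subseteq\{0,1\}^{n}$ be its set of satisfying assignments; we may assume $n\ge 2$. The algorithm first draws $k$ uniformly from $\{1,\dots,n+1\}$, then draws a uniformly random affine map $h\colon\{0,1\}^{n}\to\{0,1\}^{k}$ over $\mathbb F_2$, i.e.\ $h(x)=Ax+b$ with $A\in\mathbb F_2^{k\times n}$ and $b\in\mathbb F_2^{k}$ chosen uniformly at random, and outputs a $3$CNF encoding of $W'\eqdef W\wedge\bigl(h(x)=0\bigr)$, where $h(x)=0$ abbreviates the conjunction of the $k$ parity constraints read off from the rows of $(A\mid b)$.

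For unsatisfiable $W$ there is nothing to do: $W'$ implies $W$, so $W'$ is unsatisfiable as well. For the satisfiable case set $s\eqdef|S|\ge 1$ and let $k_0$ be the unique index with $2^{k_0-2}<s\le 2^{k_0-1}$; then $k_0=\lceil\log_2 s\rceil+1\in\{1,\dots,n+1\}$, so with probability $\tfrac1{n+1}$ the algorithm picks $k=k_0$. Condition on this and put $p\eqdef 2^{-k_0}$, so $\tfrac14<sp\le\tfrac12$. The family of affine maps is pairwise independent: for $y\ne y'$ the pair $(h(y),h(y'))$ is uniform on $\{0,1\}^{k}\times\{0,1\}^{k}$, whence $\Pr[h(y)=0]=p$ and $\Pr[h(y)=0\wedge h(y')=0]=p^{2}$. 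Let $N$ count the $y\in S$ with $h(y)=0$. Bonferroni's inequalities give $\Pr[N\ge 1]\ge sp-\binom{s}{2}p^{2}$ and $\Pr[N\ge 2]\le\binom{s}{2}p^{2}$, hence
\[
 \Pr[N=1]\ \ge\ sp-s(s-1)p^{2}\ >\ sp(1-sp)\ \ge\ \tfrac3{16},
\]
using that $t\mapsto t(1-t)$ is increasing on $(0,\tfrac12]$. Removing the conditioning, the $3$CNF $W'$ has a unique satisfying assignment with probability at least $\tfrac1{n+1}\cdot\tfrac3{16}\ge\tfrac1{8n}$.

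It remains to check that $W'$ can genuinely be taken to be a $3$SAT formula, computable in polynomial time, with the same number of satisfying assignments as $W\wedge(h(x)=0)$. A single parity constraint $\bigoplus_{i\in T}x_i=c$ with $|T|=r$ is rewritten by introducing fresh variables $w_1,\dots,w_{r-2}$ and imposing $w_1=x_{i_1}\oplus x_{i_2}$, $w_j=w_{j-1}\oplus x_{i_{j+1}}$ for $2\le j\le r-2$, and $w_{r-2}\oplus x_{i_r}=c$; each of these is an equation $a\oplus b\oplus d=c$, equivalent to a conjunction of four clauses of at most three literals (pad to exactly three if desired). The $w$'s are functionally determined by the $x_i$'s, so this substitution preserves the number of satisfying assignments; performing it with disjoint blocks of fresh variables for each of the $k$ parities and conjoining with $W$ yields the required $3$SAT formula, and the whole construction is plainly polynomial time.

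The crux — and the only non-routine point — is the \emph{isolation step}: that at the matching scale $k_0$ a pairwise-independent affine hash leaves exactly one satisfying assignment with probability bounded below by an absolute constant. This is exactly where the second-moment/Bonferroni estimate is needed; the random choice of $k$, the XOR-to-$3$CNF encoding, and the unsatisfiable case are bookkeeping.
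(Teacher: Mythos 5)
Your proof is a correct reconstruction of the standard Valiant--Vazirani isolation argument; the paper itself states the theorem as a known result and cites \cite{VALIANT198685} without giving a proof, so there is no in-paper proof to compare against. The pairwise-independent affine hash, the random choice of the hash range $k\in\{1,\dots,n+1\}$, the Bonferroni second-moment bound $\Pr[N=1]\ge sp-s(s-1)p^2$, and the parity-to-3CNF gadget (with fresh auxiliary variables so the number of satisfying assignments is preserved) are exactly the ingredients of the original proof, and your arithmetic (e.g.\ $\tfrac{3}{16(n+1)}\ge\tfrac{1}{8n}$ for $n\ge 2$, with the trivial $n=1$ case handled directly) yields precisely the advertised constant. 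One inessential nit: in the chain $sp-s(s-1)p^2>sp(1-sp)\ge\tfrac3{16}$ the final inequality is in fact strict since $sp>\tfrac14$, but this only strengthens the conclusion.
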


We now explain how to use~\ref{conj:constrdicalconj} in 
the setting of \ref{thm:pmwsmallcircuit}. This is the key step, 
which makes our result conditional.

\begin{cor}
\label{cor:radicalsmallcircuit}
Suppose $W$ is a 3SAT formula as in \ref{thm:pmwsmallcircuit}. If \ref{conj:constrdicalconj} is true, then in randomized $\poly(p_{\max},m)$ time, we can construct an SLP of size $\poly(p_{\max},m)$, which computes a nonzero multiple $F_{M}(W) = \pols_{M}(W) \cdot Q_M(W)$ of $\pols_{M}(W)$, where the polynomial $Q_M(W)$ does not have real roots. The success probability of this randomized algorithm is at least $1-\frac{1}{\Omega(n^{1+\epsilon})}$.
\end{cor}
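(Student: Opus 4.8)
The plan is to apply \ref{conj:constrdicalconj} to the sum-of-squares polynomial $P_M(W)$ furnished by \ref{thm:pmwsmallcircuit}; no fundamentally new idea is required, so the work is in matching up sizes and success probabilities. First I would invoke part~1 of \ref{thm:pmwsmallcircuit} to construct, in time $O(mp_{\max}^9)$, an SLP $P$ of size $s = O(mp_{\max}^9)$ computing $P_M(W) = \sum_{i\in[m]}(F_M(C_i))^2$. By part~4 of the same theorem, $\rad(P_M(W)) = \pols_M(W)\cdot Q_M(W)$, where $Q_M(W)$ is an integer polynomial with no real roots (concretely, $Q_M(W)$ is $\rad(I)$ times the product of the real-root-free irreducible factors $h_j$ of $P_M(W)$).

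Next I would feed $P$ to the randomized algorithm promised by \ref{conj:constrdicalconj}. In randomized time $\poly(s) = \poly(p_{\max},m)$ it outputs, with success probability at least $1 - \frac{1}{\Omega(s^{1+\epsilon})}$, an SLP of size $\poly(s) = \poly(p_{\max},m)$ computing $\rad(P_M(W))$ --- or, invoking the remark after \ref{conj:constrdicalconj}, some nonzero integer multiple $c\cdot\rad(P_M(W))$, which merely replaces $Q_M(W)$ by $c\cdot Q_M(W)$, still free of real roots. Taking $F_M(W)$ to be the polynomial so computed yields $F_M(W) = \pols_M(W)\cdot Q_M(W)$, a nonzero multiple of $\pols_M(W)$ of the required form.

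It remains to put the success probability in the stated form. Since $P_M(W)$ is a non-trivial sum of squares and $M = \prod_i p_i \ge 3^n$, its degree is $2^{\Omega(n)}$, so any SLP computing it --- in particular $P$ --- has size $s = \Omega(n)$; hence $1 - \frac{1}{\Omega(s^{1+\epsilon})} \ge 1 - \frac{1}{\Omega(n^{1+\epsilon})}$. Combining with the $O(mp_{\max}^9)$ time for the first step gives total randomized running time $\poly(p_{\max},m)$. There is no real obstacle here: the statement is a corollary, and the only points needing care are that the radical is taken of $P_M(W)$ rather than of $\pols_M(W)$ directly (which is precisely why \ref{thm:pmwsmallcircuit}(4) was isolated) and that absorbing an unknown nonzero integer constant into $Q_M(W)$ preserves the ``no real roots'' property.
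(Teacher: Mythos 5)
Your proposal matches the paper's own proof step for step: build the SLP for $P_M(W)$ via \ref{thm:pmwsmallcircuit}, use its part 4 for the decomposition $\rad(P_M(W)) = \pols_M(W)\cdot Q_M(W)$, apply \ref{conj:constrdicalconj} to that SLP, and convert the conjecture's $1-\frac{1}{\Omega(s^{1+\epsilon})}$ guarantee to $1-\frac{1}{\Omega(n^{1+\epsilon})}$ via the observation that the SLP size $s$ exceeds $n$. Your extra remark about absorbing a nonzero integer constant into $Q_M(W)$ is a small, correct addition that the paper relegates to a comment after \ref{conj:constrdicalconj}; otherwise the two arguments are essentially identical.
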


\begin{proof}
By using \ref{thm:pmwsmallcircuit}, we know that  $\rad(P_{M}(W)) = \pols_{M}(W) \cdot Q_M(W)$, 
where $Q_W$ is an integer polynomial having no real roots. \ref{thm:pmwsmallcircuit} also shows that $P_{M}(W))$ has an SLP of size $O(mp_{\max}^{9})$ and 
this SLP can also be constructed in time $O(mp_{\max}^{9})$. By using \ref{conj:constrdicalconj}, 
in  randomized $\poly(p_{\max},m)$ time, we can construct an SLP of size $\poly(p_{\max},m)$, 
which computes  $\rad(P_{M}(W))$. By renaming  $\rad(P_{M}(W))$ to $F_{M}(W)$, we obtain the desired claim. 
Since $\tau(P_{M}(W))>n$, we get that the success probability is at least $1-\frac{1}{\Omega(n^{1+\epsilon})}$.
\end{proof}

By \ref{thm:valiantvazirani} we may assume that if a given 3SAT formula $W$ is satisfiable, 
then it has at most one satisfying assignment. Hence we assume in this section that $W$ has exactly one satisfying assignment $\phi$. This assumption implies a simpler structure on the roots of $\pols_{M}(W)$, as seen below in
\ref{lem:uniquesatwpolys}. 
Recall that $\alpha(\phi)$ is defined as $\alpha(\phi)\eqdef\prod_{p_i\in\phi}p_i$.

\begin{lem}\label{lem:uniquesatwpolys}
Suppose a 3SAT formula $W$ over $n$ literals has a unique satisfying assignment $\phi$. Then, writing $N\eqdef\frac{M}{\alpha(\phi)}$, we have 
\begin{equation*}\pols_{M}(W)=C_{N}(x)=\prod_{t\in\odd N\text{, }\gcd(t,N)=1}(x-r_N(t))=\prod_{\gcd(t,2N)=1}(x-r_N(t)).
\end{equation*}
\end{lem}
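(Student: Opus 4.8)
The plan is to unwind the definitions and use the uniqueness hypothesis to collapse the product defining $\pols_M(W)$ to a single factor. First I would invoke \ref{eq:eqdefpolysm}, which writes $\pols_M(W) = \prod_{\psi \text{ satisfies } W} C_{M/\alpha(\psi)}(x)$; conceptually this reflects that the root set $S_M(W) = \bigcup_{\psi} S_M(\psi)$ is, by the uniqueness of $\phi$, just the single fiber $S_M(\phi)$. Since $\phi$ is the only satisfying assignment, the product has exactly one term, so $\pols_M(W) = C_{M/\alpha(\phi)}(x) = C_N(x)$ with $N \eqdef M/\alpha(\phi)$; this is the first claimed equality. The second equality is then immediate from the definition \ref{eq:cyclotomeq} of $C_\ell$ with $\ell = N$.

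For the last equality, the key point is that $N$ is odd: $M = \prod_i p_i$ is a product of odd primes and $\alpha(\phi) = \prod_{p_i \in \phi} p_i$ divides $M$, so $N = M/\alpha(\phi)$ is odd. I would then check that the two index sets coincide. On the one hand, $\{t \in \odd{N} : \gcd(t,N)=1\}$ consists of the odd $t$ with $0 < t < 2N$ that are coprime to $N$, hence (as $N$ is odd) coprime to $2N$. On the other hand, if $t$ runs over a reduced residue system modulo $2N$ (say $1 \le t \le 2N$) with $\gcd(t,2N) = 1$, then $t$ is odd, $t < 2N$, and $\gcd(t,N) = 1$, so $t \in \odd{N}$ with $\gcd(t,N)=1$. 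Thus the two products are indexed by literally the same set of integers $t$, and since each factor $x - r_N(t)$ depends only on $t$, the products are equal.

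I do not expect a genuine obstacle: the statement is essentially bookkeeping, and the only arithmetic used is matching the index set $\odd{N}$ against the reduced residues modulo $2N$, which is exactly where oddness of $N$ enters. If one preferred to avoid citing \ref{eq:eqdefpolysm}, one could argue directly from \ref{eq:SMphieq}: writing each $t \in \odd{M}$ with $\gcd(t,M) = \alpha(\phi)$ as $t = \alpha(\phi) s$, one checks that $s$ ranges over exactly $\{s \in \odd{N} : \gcd(s,N)=1\}$ and that $r_M(t) = \cos(\alpha(\phi) s\,\pi/(2\alpha(\phi)N)) = r_N(s)$, which recovers the same product.
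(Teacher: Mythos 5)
Your proof is correct and follows essentially the same route as the paper, which simply cites \ref{eq:cyclotomeq} and \ref{eq:eqdefpolysm} and calls the result immediate; you have just spelled out the collapse of the product in \ref{eq:eqdefpolysm} to its unique term and verified the index-set bookkeeping (oddness of $N$ making coprimality to $N$ and to $2N$ equivalent) that the paper leaves tacit.
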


\begin{proof}
This is an immediate consequence of \ref{eq:cyclotomeq} and \ref{eq:eqdefpolysm}.
\end{proof}

We slightly modify the choice of the primes $\lst p n$ 
in the reduction sketched in \ref{sec:tchebyshevandrealroots} to insure that $p_{\min}\geq n^{3}$.
Moreover, without loss of generality, we may assume that the unique satisfying assignment~$\phi$ 
in \ref{lem:uniquesatwpolys} (if it exists) 
assigns at least one literal $x_{i}$ to be \emph{false}. This is possible since we can first
check if the \emph{all true assignment}, where all literals are true, satisfies $W$ or not.
This entails that $\alpha(\phi)\leq\frac{M}{p_{\min}},$ hence we can assume that
\begin{equation}\label{eq:lowerboundN} 
N\geq p_{\min}\geq n^{3}.
\end{equation}
We think of the multiplicative group $\Z_{2N}^{\times}$ as the subset of $[2N]$ in decreasing order
$$
 \Z_{2N}^{\times}\eqdef\{t\in[2N]\mid \gcd(t,2N)=1\} 
  = \{t_1,\ldots,t_{\varphi(N)}\},
$$
where $t_j > t_ {j+1}$, where $t_1=2N-1$ and $t_{\varphi(N)}= 1$. 
The reason for this choice of indexing is that 
$t\mapsto r_N(t)$ is monotonically decreasing, so that we obtain 
$r_N(t_j) < r_N(t_ {j+1})$, see~\ref{eq:def_r_M}. 

We can then rewrite \ref {lem:uniquesatwpolys} as \[
 \pols_{M}(W) = \prod_{j=1}^{\varphi(N)}\paren{x-r_N(t_j)}.
\]
The roots of $\pols_{M}(W)$ subdivide the interval $(-1,1)$ into a collection 
$\mathcal{I}\eqdef \{I_0, \lst{I}{\varphi(N)}\}$ 
of open intervals.
More precisely, we define $$
 I_j\eqdef \big(r_N(t_j),r_N(t_{j+1})\big) \quad \mbox{for $1\leq j <  \varphi(N)$,}
$$
and $I_0\eqdef(-1, r_N(2N-1))$, $I_{\varphi(N)}\eqdef(r_N(1), 1)$.

Suppose $F_{M}(W)$ is the polynomial constructed in \ref{cor:radicalsmallcircuit}. Since the real roots  of $F_{M}(W)$ are exactly that of $\pols_{M}(W)$, and all these 
roots are of multiplicity one, $F_{M}(W)$ does not change sign on the interval $I_0$.
Without loss of generality, we assume that $F_{M}(W)$ is positive on the leftmost interval $I_{0}$. 
Using \ref{lem:rootcountparity} we then infer that

\centerline{$F_{M}(W)$ is negative on the interval $I_j\in\mathcal{I}$ 
if and only if $j$ is odd.}
From now on, to simplify notation, we drop the index $M$ and argument $W$ in this  section
and just write $F:=F_{M}(W)$. 
The above discussion implies that if we pick any real number $a$ from any interval~$I_j$ in $\mathcal{I}$
of odd index~$j$,
then $F(a)$ and $F(1)$ have different signs\ie
$F(a)$ is negative. An interval $I_j\in\mathcal{I}$ is said to be odd-indexed if $j$ is odd, otherwise it is said to be even-indexed. Now our overall strategy can be summarized as follows
(for a formal argument see \ref{lem:intermedred}):
\begin{enumerate}
\item Show that the sum of the lengths of odd-indexed intervals $I_j\in\mathcal{I}$ 
 is at least $c$ for some positive constant $c$. 
\item Pick a ``random'' rational number $a$ in the interval $(-1,1)$.
  With probability at least $c/2$, we have $a\in I_j$  with $j$ being odd.
\item Compute the sign of $F(a)F(1)$ using $\posslp$. If $F(a)F(1) < 0$
then $\pols_{M}(W)$ has a real root and hence $W$ is satisfiable.
This succeeds with probability at least $c/2$ if $W$ is
satisfiable. 
\end{enumerate}

 Out next goal is to show the following result.

\begin{prop}\label{thm:total_len_good_interval}If $W$ is satisfiable, then
the sum of the lengths of odd-indexed intervals in $\mathcal{I}$ is at least $\frac{1}{\pi}$. 
\end{prop}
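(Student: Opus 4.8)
\emph{Proof idea.} The plan is to reparametrize $x=\cos\theta$ with $\theta\in(0,\pi)$, so that the length of each interval of $\mathcal{I}$ becomes the integral of $\sin\theta$ over an arc, and then to exploit that the odd-indexed and even-indexed intervals interleave so finely---because $W$ has a unique satisfying assignment and the primes are large---that they carry almost equal $\sin$-mass. Since the total mass is $\int_0^\pi\sin\theta\,d\theta=2$, the odd part is then close to $1$, which is comfortably more than $\tfrac1\pi$.

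First I would set up the angular picture. Write the roots of $\pols_M(W)=C_N$ increasingly as $r_N(t_1)<\dots<r_N(t_{\varphi(N)})$, where $r_N(t_j)=\cos\theta_j$ with $\theta_j\eqdef t_j\tfrac{\pi}{2N}$ and $t_1>\dots>t_{\varphi(N)}$ are the elements of $\Z_{2N}^{\times}$; also set $\theta_0\eqdef\pi$ and $\theta_{\varphi(N)+1}\eqdef 0$. Under $x=\cos\theta$ the interval $I_j$ pulls back to the arc $(\theta_{j+1},\theta_j)$, and $|I_j|=\int_{\theta_{j+1}}^{\theta_j}\sin\theta\,d\theta$. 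Writing $A\eqdef\bigcup_{j\ \mathrm{odd}}(\theta_{j+1},\theta_j)$ and letting $A^{c}$ be the union of the even arcs, which together exhaust $(0,\pi)$ up to finitely many points, we obtain
\[
 \sum_{j\ \mathrm{odd}}|I_j|=\int_A\sin\theta\,d\theta,\qquad \int_A\sin\theta\,d\theta+\int_{A^{c}}\sin\theta\,d\theta=\int_0^\pi\sin\theta\,d\theta=2 .
\]
Hence it suffices to bound $\abs{\int_A\sin\theta\,d\theta-\int_{A^{c}}\sin\theta\,d\theta}$; this forces $\int_A\sin\theta\,d\theta$ to be close to $1$.

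The quantitative input is that $N=\prod_{p_i\notin\phi}p_i$ has at most $n$ prime factors, each $\ge p_{\min}\ge n^{3}$, so $N-\varphi(N)\le N\sum_{p\mid N}\tfrac1p\le N\cdot n/n^{3}=N/n^{2}$. Put $\ell_j\eqdef\theta_j-\theta_{j+1}=d_{j+1}\tfrac{\pi}{2N}$ with $d_m\eqdef t_{m-1}-t_m$ (and $t_0\eqdef 2N$, $t_{\varphi(N)+1}\eqdef 0$); then $d_1=d_{\varphi(N)+1}=1$, each $d_m$ is even and $\ge2$ for $2\le m\le\varphi(N)$, and $\sum_m d_m=2N$, so $\sum_{m=2}^{\varphi(N)}(d_m-2)=2(N-\varphi(N))\le 2N/n^{2}$. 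From this I would deduce, by routine manipulations, that $\sum_m\abs{d_m-2}=O(N/n^{2})$ and---splitting according to whether $d_m\le 2$ or $d_m\ge 4$, using $d_m^{2}\le 2d_m$ in the first case and $d_m^{2}\le 2N\,d_m\le 4N(d_m-2)$ in the second---that $\sum_m d_m^{2}\le 4N+O(N^{2}/n^{2})$. Since $\ell_j=d_{j+1}\tfrac{\pi}{2N}$ and $N\ge n^{3}$, this yields $\sum_{l}\abs{\ell_{2l+1}-\ell_{2l}}\le\tfrac{\pi}{2N}\sum_m\abs{d_m-2}=O(n^{-2})$ and $\sum_j\ell_j^{2}=\tfrac{\pi^{2}}{4N^{2}}\sum_m d_m^{2}=O(n^{-2})$.

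Finally, because consecutive arcs alternate parity, I would pair $I_{2l}$ (even) with the adjacent $I_{2l+1}$ (odd) for $l=0,\dots,\tfrac{\varphi(N)}{2}-1$; the single leftover arc $(0,\theta_{\varphi(N)})$ has length $\tfrac{\pi}{2N}$ and contributes only $O(N^{-1})$. The contribution of a pair to $\int_A\sin\theta\,d\theta-\int_{A^{c}}\sin\theta\,d\theta$ is exactly the second difference $\cos\theta_{2l}+\cos\theta_{2l+2}-2\cos\theta_{2l+1}$, which by Taylor's theorem has absolute value at most $\abs{\ell_{2l+1}-\ell_{2l}}+\tfrac12(\ell_{2l}^{2}+\ell_{2l+1}^{2})$. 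Summing over $l$ and using the bounds just obtained gives $\abs{\int_A\sin\theta\,d\theta-\int_{A^{c}}\sin\theta\,d\theta}=O(n^{-2})$, hence $\sum_{j\ \mathrm{odd}}|I_j|=1+O(n^{-2})\ge\tfrac1\pi$ once $n$ exceeds an absolute constant, which we may assume (e.g.\ by padding $W$ with dummy variables, or by requiring $p_{\min}$ to be a larger polynomial in $n$). I expect the error control in the last two paragraphs to be the real work: it amounts to saying that the irregular gaps between consecutive elements of $\Z_{2N}^{\times}$ are negligible in aggregate, and this is precisely what the hypothesis $p_{\min}\ge n^{3}$ is designed to guarantee, since it makes $\Z_{2N}^{\times}$ look almost like the set of all odd residues modulo $2N$.
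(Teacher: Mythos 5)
Your proof is correct, and it takes a genuinely different route from the paper's. The paper first introduces the auxiliary notion of a ``simple'' interval (one whose endpoints are \emph{consecutive} zeros of $T_N$), proves two standalone technical lemmas --- that the non-simple intervals in $\mathcal{I}$ have total length at most $1/n$ (\ref{lem:nontwodiffsmall}), and that adjacent simple intervals have length ratio in $[\pi^{-1},\pi]$ (\ref{lem:ratioofconsectivegreenred}) --- and then runs a short bookkeeping argument on $\ell_{se},\ell_{so},\ell_o$ to get $\ell_o\ge\frac{1}{\pi+1}(2-\frac1n)\ge\frac1\pi$. You instead pass to the angular variable $\theta$, where $|I_j|=\int_{\theta_{j+1}}^{\theta_j}\sin\theta\,d\theta$, pair up adjacent arcs, and bound the pairwise discrepancy $|I_{2l+1}|-|I_{2l}|=\cos\theta_{2l}+\cos\theta_{2l+2}-2\cos\theta_{2l+1}$ by the second-order Taylor estimate $|\ell_{2l+1}-\ell_{2l}|+\frac12(\ell_{2l}^2+\ell_{2l+1}^2)$; summing, the two quantities you control are $\sum_m|d_m-2|$ and $\sum_m d_m^2$, both of which are small because $N-\varphi(N)\le N/n^2$. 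The trade-offs: your argument is computationally heavier but avoids the paper's two lemmas entirely, and it yields the much sharper conclusion $\sum_{j\ \text{odd}}|I_j|=1+O(n^{-2})$ (nearly exactly half the total), versus the paper's constant $\frac{2}{\pi+1}\approx0.48$. Both arguments rely in the same essential way on $p_{\min}\ge n^3$, which is what forces $\Z_{2N}^\times$ to be close to the full set of odd residues mod $2N$; in the paper this appears inside \ref{lem:nontwodiffsmall}, and in your proof as the estimate on $N-\varphi(N)$. One small slip: the $\sin$-mass of the leftover arc $(0,\theta_{\varphi(N)})=(0,\pi/2N)$ is $1-\cos(\pi/2N)=O(N^{-2})$, not $O(N^{-1})$ --- this only helps you, but is worth correcting.
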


For this, we rely on the two technical lemmas below, whose proof is 
postponed to \ref{se:technical}. We call a subinterval $J$ of $[-1,1]$ 
a \emph{{\twodiff} interval} if it connects two subsequent real roots of the Chebychev polynomial $T_{N}$. 
This means that  $J=((r_N(s_1),r_N(s_2))$ for $s_{1},s_{2}\in\odd N$ with $s_{1}-s_{2}=2$. 
Note that the end points of a simple such interval are not required to be zeros of $\pols_{M}(W)$.

The first lemma states that a substantial part of the interval $(-1,1)$ 
is subsumed by {\twodiff} intervals in $\mathcal{I}$.

\begin{lem}\label{lem:nontwodiffsmall}
The sum of the lengths of non-{\twodiff} intervals in $\mathcal{I}$
is at most $\frac{1}{n}$.
\end{lem}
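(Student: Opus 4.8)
We want to show that the non-{\twodiff} intervals in $\mathcal{I}$ contribute at most $\frac{1}{n}$ to the total length. Recall that the intervals in $\mathcal{I}$ are the gaps between consecutive roots of $\pols_M(W) = C_N(x) = \prod_{\gcd(t,2N)=1}(x - r_N(t))$, together with the two end intervals $I_0$ and $I_{\varphi(N)}$. The roots of $T_N$ are all the $r_N(s)$ for $s \in \odd N = \{1,3,\ldots,2N-1\}$, and the roots of $\pols_M(W)$ are the subset indexed by $s$ with $\gcd(s,2N)=1$. So a gap $I_j = (r_N(t_{j+1}), r_N(t_j))$ fails to be {\twodiff} precisely when $t_j - t_{j+1} > 2$, i.e. when at least one ``forbidden'' value $s \in \odd N$ with $\gcd(s,2N) \neq 1$ lies strictly between $t_{j+1}$ and $t_j$. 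Thus each non-{\twodiff} interval can be charged to (at least) one forbidden $s$, and the length of $I_j$ is bounded by the length of a slightly enlarged {\twodiff}-type interval around that $s$; more crudely, by the distance $r_N(t_{j+1}) - r_N(t_j)$ which telescopes.

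**The plan.** First I would bound the number of non-{\twodiff} intervals by the number of forbidden odd integers in $[1,2N-1]$, i.e. by $|\odd N| - \varphi(N) = N - \varphi(N)$. Since $N = M/\alpha(\phi)$ is odd and its prime factors are among the $p_i$, all of which satisfy $p_i \ge p_{\min} \ge n^3$ by \ref{eq:lowerboundN}, we get $\varphi(N)/N = \prod_{p \mid N}(1 - 1/p) \ge 1 - \sum_{p\mid N} 1/p \ge 1 - n/n^3 = 1 - n^{-2}$, so $N - \varphi(N) \le N/n^2$. Second, I would bound the length of each non-{\twodiff} interval. Using $r_N(t) = \cos(t\pi/(2N))$ and the mean value theorem, $|r_N(s) - r_N(s')| \le \frac{\pi}{2N}|s - s'|$, so any interval $I_j$ with endpoints differing by $t_j - t_{j+1}$ has length at most $\frac{\pi}{2N}(t_j - t_{j+1})$. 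The sum of $(t_j - t_{j+1})$ over \emph{all} $j$ (including the boundary terms, where $t_0 := 2N$, $t_{\varphi(N)+1} := 0$ say) telescopes to $2N$; but restricting to non-{\twodiff} intervals, each contributes at least... — hmm, that lower bound goes the wrong way. The cleaner route: each non-{\twodiff} interval lies inside a union of consecutive length-$2$ chunks, and there are at most $N - \varphi(N)$ ``bad'' chunks (those between two consecutive roots of $T_N$ that are separated by a forbidden integer). Actually the simplest: the total length of a non-{\twodiff} interval $I_j$ is at most $\frac{\pi}{2N}(t_j - t_{j+1})$, and summing, $\sum_{j \text{ non-{\twodiff}}} (t_j - t_{j+1}) \le \sum_{\text{all } j}(t_j - t_{j+1}) = 2N$ is too lossy. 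Instead, count: a non-{\twodiff} $I_j$ spans $\frac{t_j - t_{j+1}}{2}$ consecutive $T_N$-gaps, of which at least $\frac{t_j - t_{j+1}}{2} - 1 \ge \frac{1}{3}\cdot\frac{t_j-t_{j+1}}{2}$ (using $t_j - t_{j+1} \ge 4$ hence $\frac{t_j-t_{j+1}}{2} - 1 \ge \frac{1}{2}\cdot\frac{t_j-t_{j+1}}{2}$) contain a forbidden integer — wait, a span of $k$ consecutive $T_N$-gaps between two roots of $\pols_M(W)$ contains exactly $k-1$ forbidden odd integers. So $\sum_{j \text{ non-{\twodiff}}} \big(\tfrac{t_j-t_{j+1}}{2} - 1\big) \le N - \varphi(N)$, and since $\frac{t_j-t_{j+1}}{2} \ge 2$ for non-{\twodiff} intervals we have $\frac{t_j-t_{j+1}}{2} - 1 \ge \frac{1}{2}\cdot\frac{t_j-t_{j+1}}{2}$, giving $\sum_{j \text{ non-{\twodiff}}} (t_j - t_{j+1}) \le 4(N - \varphi(N)) \le 4N/n^2$. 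Hence the total length of non-{\twodiff} intervals is at most $\frac{\pi}{2N} \cdot \frac{4N}{n^2} = \frac{2\pi}{n^2} \le \frac{1}{n}$ for $n$ large enough.

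**Handling the boundary intervals.** I must treat $I_0 = (-1, r_N(2N-1))$ and $I_{\varphi(N)} = (r_N(1), 1)$ separately, since they are bounded by $\pm 1$ rather than by two roots of $\pols_M(W)$. Note $1 = \cos(0) = r_N(0)$ "morally", and $-1 = r_N(2N)$ — i.e. $0$ and $2N$ are even, hence automatically not in $\odd N$, so they are natural ``forbidden-like'' endpoints. I would simply extend the counting: $I_{\varphi(N)}$ has length $1 - r_N(1) = r_N(0) - r_N(1) \le \frac{\pi}{2N}$, which is the length of a single $T_N$-gap, so $I_{\varphi(N)}$ is essentially {\twodiff} and contributes negligibly; similarly $I_0$ spans from $-1 = r_N(2N)$ to $r_N(2N-1)$, again a single $T_N$-gap of length $\le \frac{\pi}{2N}$. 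So both boundary intervals are short and can be folded into the estimate (or bounded by $\frac{2\pi}{2N} = \pi/N \le \frac{1}{n^2}$ extra, harmlessly).

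**Main obstacle.** The only delicate point is the bookkeeping in relating ``number of $T_N$-gaps spanned by a non-{\twodiff} interval'' to ``number of forbidden odd integers in $[1,2N-1]$'', making sure the charging is injective across distinct intervals (it is — the forbidden integers inside distinct gaps of $\pols_M(W)$ are disjoint sets) and that the inequality $\varphi(N) \ge N(1 - n^{-2})$ really follows from $p_{\min} \ge n^3$ (it does, since $N$ has at most $n$ distinct prime factors each $\ge n^3$). Once the constants are tracked, $\frac{2\pi}{n^2} \le \frac{1}{n}$ holds for all $n \ge 7$, and small $n$ can be absorbed by adjusting $p_{\min}$ or noted as trivial; I would state the lemma as written and remark that the constant is not optimized.
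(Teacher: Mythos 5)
Your argument is essentially the same as the paper's: you split off the two boundary intervals, charge each interior non-{\twodiff} interval of index width $t_j-t_{j+1}=2k$ to the $k-1$ forbidden odd integers it contains, bound its length by $O(k/N)$, sum to get $O((N-\varphi(N))/N)$, and finish with $\varphi(N)/N\ge 1-O(1/n^2)$ from $p_{\min}\ge n^3$. The only cosmetic differences are that you use the mean-value (Lipschitz) bound and the Bernoulli inequality $\prod(1-1/p_i)\ge 1-\sum 1/p_i$ where the paper uses the identity $\cos A-\cos B=2\sin\tfrac{A+B}{2}\sin\tfrac{B-A}{2}$ and the bound $1-x\ge e^{-2x}$; both variants work and give comparable constants.
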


We also need that two adjacent {\twodiff} intervals cannot differ too much in their lengths. 

\begin{lem}
\label{lem:ratioofconsectivegreenred}
Suppose $I,J$  are two adjacent {\twodiff} intervals (not necessarily in $\mathcal{I}$). 
Then we have: 
\[
\frac{1}{\pi}\leq\frac{\len(I)}{\len(J)}\leq\pi.
\]
\end{lem}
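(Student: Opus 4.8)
The plan is to express the length of each simple interval in closed form and reduce the claim to an elementary monotonicity property of the sine function. Recall the roots of $T_N$ are $r_N(t)=\cos(t\pi/(2N))$ for $t\in\odd N$, and that $t\mapsto r_N(t)$ is decreasing. A simple interval therefore has the form $J=(r_N(s+2),r_N(s))$ for an odd integer $s$ with $1\le s\le 2N-3$. Setting $h\eqdef\pi/N$, the sum-to-product identity will give
$$
 \len(J)=\cos\!\Big(\tfrac{s\pi}{2N}\Big)-\cos\!\Big(\tfrac{(s+2)\pi}{2N}\Big)=2\sin\!\Big(\tfrac{\pi}{2N}\Big)\,\sin\!\big(k_J h\big),
$$
where $k_J\eqdef (s+1)/2$ is an integer in $\{1,\dots,N-1\}$ (so that $k_Jh\in(0,\pi)$).

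Next I would observe that two adjacent simple intervals $I,J$ share exactly one root as a common endpoint, so their indices satisfy $k_I=k_J\pm1$. Since the asserted inequality is symmetric in $I$ and $J$, I may take $k\eqdef\min(k_I,k_J)\in\{1,\dots,N-2\}$ and reduce to bounding
$$
 g(k)\eqdef\frac{\len(I)}{\len(J)}=\frac{\sin((k+1)h)}{\sin(kh)}=\cos h+\sin h\cdot\cot(kh).
$$
Because $\cot$ is strictly decreasing on $(0,\pi)$ and both $kh$ and $(k+1)h$ lie in $(0,\pi)$, the quantity $g(k)$ is monotone decreasing in $k$, hence squeezed between its values at $k=1$ and $k=N-2$. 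I will then evaluate these directly: $g(1)=\sin(2h)/\sin h=2\cos h\le 2$, and $g(N-2)=\sin((N-1)h)/\sin((N-2)h)=\sin(h)/\sin(2h)=1/(2\cos h)\ge\tfrac12$, using $N\ge 3$ (which holds whenever two adjacent simple intervals exist), so that $h\le\pi/3$ and $\cos h\ge\tfrac12$. This yields $\len(I)/\len(J)\in[\tfrac12,2]\subseteq[\tfrac1\pi,\pi]$.

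I do not expect a genuine obstacle here: the argument is a short trigonometric computation. The only places calling for care are the bookkeeping of which odd $s$ index genuine simple intervals and the translation of ``adjacent'' into $k_I=k_J\pm 1$, together with the harmless range check $\cos(\pi/N)\ge\tfrac12$; in the application of this lemma in \ref{sec:nphardnessposslp} one has $N\ge p_{\min}\ge n^3$ by \ref{eq:lowerboundN}, so this is automatic.
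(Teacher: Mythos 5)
Your proof is correct, and it takes a genuinely different route from the paper's. Both start the same way, using the sum-to-product identity to write the length of a simple interval $J=(r_N(s+2),r_N(s))$ as $2\sin(\pi/2N)\sin\bigl((s+1)\pi/2N\bigr)$, so that the ratio of two adjacent simple intervals is $\sin((k+1)h)/\sin(kh)$ with $h=\pi/N$ and $k\in\{1,\dots,N-2\}$. From there the methods diverge. The paper bounds this ratio via the Jordan-type inequality $\tfrac{2}{\pi}x\le\sin x\le x$ on $(0,\pi/2]$, splitting into the two cases $(t+3)\tfrac{\pi}{2N}\le\tfrac{\pi}{2}$ and $(t+3)\tfrac{\pi}{2N}>\tfrac{\pi}{2}$ and using $\sin(\pi-\theta)=\sin\theta$ to reduce the second case to the first; this yields exactly the stated constants $1/\pi$ and $\pi$. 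You instead rewrite the ratio as $g(k)=\cos h+\sin h\cdot\cot(kh)$, observe that this is strictly decreasing in $k$ since $\cot$ is decreasing on $(0,\pi)$ and $kh\in(0,\pi)$ throughout, and then evaluate at the two extremes: $g(1)=2\cos h$ and, via $\sin(\pi-\theta)=\sin\theta$, $g(N-2)=1/(2\cos h)$. Using $\cos h\le 1$ this gives the ratio in $[1/2,2]$, which is strictly contained in $[1/\pi,\pi]$. Your approach is slightly cleaner (no explicit casework on which side of $\pi/2$ the argument lies — the symmetry is absorbed into the evaluation at $k=N-2$) and gives a tighter bound; the paper's constants $1/\pi,\pi$ are chosen to match the Jordan inequality rather than because they are optimal. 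One small remark: your justification that $g(N-2)\ge 1/2$ actually follows from $\cos h\le 1$ alone, so the appeal to $N\ge 3$ and $\cos h\ge 1/2$ is not needed there; it is only relevant if you additionally wanted to observe that $g(1)\ge 1\ge g(N-2)$, which the argument does not require.
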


\begin{proof}[Proof of \ref{thm:total_len_good_interval}]
We denote by $\ell_{se}$ the sum of the lengths of {\twodiff} even-indexed intervals in $\mathcal{I}$. 
Analogously, $\ell_{so}$ denotes the sum of the lengths of {\twodiff} odd-indexed intervals in $\mathcal{I}$,
and we write $\ell_{o}$ for the sum of the lengths of \emph{all} odd-indexed intervals in $\mathcal{I}$. 
By \ref{lem:nontwodiffsmall} we have 
$ 2 \le \ell_{se} + \ell_{so} +  \frac{1}{n} \le \ell_{se} + \ell_{o} +  \frac{1}{n}$, 
hence $\ell_{se} \geq 2 - \ell_{o} - \frac{1}{n}$. 

Consider a {\twodiff} even-indexed interval $I_{j}\in\mathcal{I}$. 
The interval $I_{j+1}$ contains a {\twodiff} interval~$J$. 
Therefore $\abs{I_{j+1}}/\abs{I_{j}}\geq \abs{J}/\abs{I_{j}}\ge \pi^{-1}$, 
where the right inequality is due to~\ref{lem:ratioofconsectivegreenred}.
Applying this argument to all {\twodiff} even-indexed intervals in $\mathcal{I}$,  
and adding their length, we obtain that: 
$$
 \ell_{o} \ge  \frac{1}{\pi} \ell_{se} \geq \frac{1}{\pi}\big(2 - \ell_{o} - \frac{1}{n}\big).
 $$ 
 A simple calculation shows that the above equation implies 
 $\ell_{o} \geq \frac{1}{\pi+1}\paren{2 - \frac{1}{n}}\geq \frac{1}{\pi}$ for $n>1$.
\end{proof}

\begin{lem}\label{lem:intermedred}
Assume $M\ge 3$. Suppose $K$ is an integer in $[-M^{4},M^{4}]\cap\Z$ chosen uniformly at random.
If $W$ is satisfiable, then we have $\mathrm{Prob}_K[F(K/M^{4}) F(1)<0] \geq \frac{1}{4\pi}$.
On the other hand, we always have $F(K/M^{4}) F(1)>0$ if $W$ is not satisfiable. 
\end{lem}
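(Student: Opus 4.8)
\textbf{Proof plan for \ref{lem:intermedred}.}
The idea is to reduce the claim to \ref{thm:total_len_good_interval} by showing that a uniformly random point of the form $K/M^4$ with $K\in[-M^4,M^4]\cap\Z$ lands in an odd-indexed interval with probability at least $\frac{1}{4\pi}$, and that whenever it does, $F(K/M^4)F(1)<0$. The second half of the statement (the ``not satisfiable'' case) is the easy part: if $W$ is not satisfiable then $\pols_M(W)$ has no real root, so $F=F_M(W)$ has no real root in $(-1,1)$ (its real roots coincide with those of $\pols_M(W)$, all of multiplicity one), and hence $F$ does not change sign on $(-1,1)$; since $M\ge 3$ forces $1/M^4 < 1$ so that $K/M^4\in[-1,1]$, we get $F(K/M^4)$ and $F(1)$ share a sign, giving $F(K/M^4)F(1)>0$ (both are nonzero: $F(1)\neq 0$ because $1$ is not a root of $T_N$, and $F$ has no real roots at all in this case, while at $\pm 1$ one checks $F$ does not vanish — all roots lie strictly inside $(-1,1)$).

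For the satisfiable case I would proceed as follows. First, recall from the discussion preceding \ref{thm:total_len_good_interval} that, after the normalization making $F$ positive on $I_0$, we have $F(a)F(1)<0$ precisely when $a$ lies in an odd-indexed interval $I_j\in\mathcal I$ (using \ref{lem:rootcountparity} to count roots between $a$ and $1$). So it suffices to lower-bound the probability that $K/M^4$ falls in the union $U$ of the odd-indexed intervals. By \ref{thm:total_len_good_interval}, the total length of $U$ is at least $\frac{1}{\pi}$. The number of integers $K\in[-M^4,M^4]$ with $K/M^4\in U$ is, up to boundary effects, about $M^4\cdot|U|\cdot \tfrac{1}{1}$ scaled against the $2M^4+1$ possible values; more precisely, for any finite union of intervals of total length $L$ inside $[-1,1]$, the number of lattice points $K$ with $K/M^4$ in that union is at least $M^4 L - (\text{number of intervals})$, and the endpoints $r_N(t_j)$ of the $\varphi(N)\le N < M$ intervals are irrational (they are $\cos(t\pi/2N)$ with $0<t<2N$, which is irrational except possibly at values we can avoid), so no $K/M^4$ hits an endpoint.

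Putting this together: the number of favorable $K$ is at least $M^4 |U| - |\mathcal I| \ge \frac{M^4}{\pi} - (\varphi(N)+1) \ge \frac{M^4}{\pi} - M$, while the total number of choices is $2M^4+1$. Hence
$$
 \mathrm{Prob}_K\big[F(K/M^4)F(1)<0\big] \ \ge\ \frac{\tfrac{M^4}{\pi} - M}{2M^4+1} \ \ge\ \frac{1}{4\pi}
$$
for $M$ large enough; one checks the inequality holds already for all $M\ge 3$, using $N\mid M$ so $N\le M$ and that $\frac{M^4}{\pi}-M$ dominates. The main obstacle I anticipate is the bookkeeping in this last estimate — making sure the $+1$ in $I_0$ and $I_{\varphi(N)}$, the endpoint irrationality (so that strict sign inequalities hold), and the crude bound $|\mathcal I|\le M$ all combine to beat the constant $\frac{1}{4\pi}$ uniformly for $M\ge 3$ rather than merely asymptotically. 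A secondary point to handle carefully is that the normalization ``$F$ positive on $I_0$'' was made without loss of generality, so the statement really asserts $F(K/M^4)F(1)<0$, which is normalization-independent; I would note that the sign pattern ``$F<0$ on $I_j$ iff $j$ odd'' combined with $1\in I_{\varphi(N)}$ and the parity of $\varphi(N)$ (even for $N\ge 3$) makes $I_{\varphi(N)}$ even-indexed, so $F(1)>0$ under the normalization, and the product $F(a)F(1)$ has the claimed sign regardless.
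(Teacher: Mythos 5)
Your proof is correct and takes essentially the same route as the paper: both arguments view the $2M^4+1$ grid points $K/M^4$ as a uniform sample of $[-1,1]$, bound the number of grid points in each odd-indexed interval from below by $M^4\len(I_j)$ minus a constant, sum using \ref{thm:total_len_good_interval}, and divide by $2M^4+1$ to obtain the $\frac{1}{4\pi}$ bound via an elementary inequality valid for all $M\ge 3$. Your explicit verification that $\varphi(N)$ is even (so that $I_{\varphi(N)}$ is even-indexed, $F(1)>0$, and the product $F(a)F(1)$ is normalization-independent) is a useful detail that the paper passes over silently, and your treatment of the unsatisfiable case, which the paper omits as trivial, is also fine.
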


\begin{proof}
If suffices to consider the case where $W$ is satisfiable.
We can think of sampling a point $a$ in the set of grid points
$\Gamma := [-1,1] \cap \frac{1}{M^4}\Z$ , which has cardinality $2M^4+1$.
Recall that $F(a) F(1) < 0$ implies that $a$ lies in an odd-index interval $I_j\in\mathcal{I}$.

We claim that for $I_j\in\mathcal{I}$ with odd index $j$
$$
 \card{I_j \cap \Gamma} \ge M^4 \len(I_j) -2 .
$$
Indeed, assume $I_j=(a,b)$ and 
let $a^{\prime}\ge a$ be minimal such that $a^{\prime} M^4\in\Z$. 
Similarly, let $b^{\prime} \le b$ be maximal such that $b^{\prime}M^4\in\Z$. 
Then $a^{\prime}-a \leq M^{-4}$ and $b-b^{\prime}\leq M^{-4}$. 
We obtain 
\[
\card{I_j \cap \Gamma} = (b^{\prime}-a^{\prime})M^{4}+1\ge (b-a)M^{4}- 2 , \]
which shows the claim. 

Summing over all odd-indexed intervals $I_j$ in $\mathcal{I}$, we get 
$$
 \sum_{j} \card{I_j \cap \Gamma} \ge M^4\, \sum_{j} \len(I_j) - 2 M \ge \frac{M^4}{\pi} - 2M ,
$$
where we used \ref{thm:total_len_good_interval} for the right-hand inequality
(here we use that $W$ is satisfiable). 
 
We can lower bound the probability that 
a uniformly random point $a\in \Gamma$ lands in an odd-indexed interval $I_j\in \mathcal{I}$ as follows (let $M\ge 3$) :
$$
 \frac{1}{\card{\Gamma}} \sum_{j} \card{I_j \cap \Gamma} \ge \frac{1}{2M^4+1} \Big(\frac{M^4}{\pi} - 2M\Big)
   \ge \frac{1}{ 4\pi} , 
$$
where the right-hand inequality follows from a simple calculation. 
This completes the proof
\end{proof}

We now restate the main result of this paper.\mainresultnphardposslp*

\begin{proof}
Given a 3SAT formula $W$, we use \ref{thm:valiantvazirani} to compute a 3SAT formula $W^{\prime}$ such that if $W$ is satisfiable, 
then $W^{\prime}$ has a unique satisfying assignment with probability at least $\frac{1}{8n}$. 
If $W$ is not satisfiable then $W^{\prime}$ is also not satisfiable. 
Now we use \ref{cor:radicalsmallcircuit} to compute an SLP which computes $F=F_M(W^{\prime})=\rad(P_M(W^{\prime}))$. 
Then we randomly sample $K\in [-M^{4},M^{4}]\cap\Z$ 
and compute the sign of $(-1)F(K/M^4) F(1)$ based on \ref{lem:signatrational}.

By using the assumption $\posslp\in\BPP$ and amplifying the probability using
standard probability amplification techniques,
we assume that success probability of a $\posslp$ oracle is at least $1-2^{-n}$, 
see \cite{complexityarora2009} or more specifically \cite[Deﬁnition 1]{sudhanbpplec2007}.

Using~\ref{lem:intermedred} on $F=F_M(W^{\prime})$, we see that 
if $W$~is satisfiable, then $(-1)F(K/M^4) F(1) > 0$ happens with probability at least $\frac{1}{4\pi}\cdot\frac{1}{8n}\cdot\paren{1-\frac{1}{\Omega(n^{1+\epsilon})}}=\Omega\paren{\frac{1}{n}}$. 
Hence the $\posslp$ oracle verifies the inequality $(-1)F(K/M^4) F(1) > 0$  with probability at least 
$\Omega\paren{\frac{1}{n}}\cdot\paren{1-2^{-n}}=\Omega\paren{\frac{1}{n}}$. 

On the other hand  if $W$ is not satisfiable, then neither is $W^{\prime}$.
Hence $(-1)F(K/M^4) F(1)$ can only be positive if  $F\neq\rad(P_M(W^{\prime}))$,
\ref{cor:radicalsmallcircuit}  implies that this happens with probability at most  $O\paren{\frac{1}{n^{1+\epsilon}}}$.
Hence $(-1)F(K/M^4) F(1) > 0$ happens with probability at most $O\paren{\frac{1}{n^{1+\epsilon}}}$.
Hence the $\posslp$ oracle verifies the inequality $(-1)F(K/M^4) F(1) > 0$   with probability at most 
$O\paren{\frac{1}{n^{1+\epsilon}}} + 2^{-n}=O\paren{\frac{1}{n^{1+\epsilon}}}$.
After doing $\poly(n)$ many independent runs of this reduction, we can boost the success probability for both cases to $O(1)$
using standard probability amplification techniques, see \cite{complexityarora2009} or more specifically \cite[Theorem 3]{sudhanbpplec2007}.
Altogether, this shows that the 3SAT problem lies in $\BPP$. 
This implies that $\NP \subseteq \BPP$.

\end{proof}

\subsection{Proofs of Technical Lemmas}
\label{se:technical}

\begin{proof}[Proof of \ref{lem:nontwodiffsmall}]
There are two kind of  non-{\twodiff} intervals in $\mathcal{I}$. The first kind are the leftmost 
and the rightmost intervals: 
namely $I_0=\left(-1,\cos\left(\frac{(2N-1)\pi} {2N}\right)\right)$ and 
$I_{\varphi(N)}=\left(\cos\left(\frac{\pi}{2N}\right),1\right)$. 
We use the
inequality $1-\cos(x)\leq\frac{x^{2}}{2}$ to infer that the length
of the interval $I_0$
is at most $\frac{\pi^{2}}{8N^{2}}$. Using  a similar argument, the length of interval $I_{\varphi(N)}$ can also be upper bounded by $\frac{\pi^{2}}{8N^{2}}$.
Hence the total length of both these intervals is at most $\frac{\pi^{2}}{4N^{2}}$. 

The second kind of  non-{\twodiff} intervals in $\mathcal{I}$ are of the form:
\begin{equation}\label{eq:nonsimpleintervalsecondind}
(r_N(s_1)),r_N(s_2))=\left(\cos\left(s_{1}\frac{\pi}{2N}\right),\cos\left(s_{2}\frac{\pi}{2N}\right)\right) \text{with } s_1,s_2\in\odd{N}  \text{ and } (s_{1}-s_{2})\geq 4. 
\end{equation}
Now we bound the length of these second kind of  non-{\twodiff} intervals.
To this end we define: $D\eqdef\{r_N(t)\mid t\in\odd N\setminus\Z_{2N}^{\times}\}$. The elements of $D$ are  the roots of $T_N$ which are not the roots of $\pols_M(W)=C_N$, since the root set of $C_N$ is exactly $\{r_N(t)\mid t\in\Z_{2N}^{\times}\}$. We have $\card D = N-\varphi(N)$. Suppose $I=((r_N(s_1)),r_N(s_2))\in\mathcal{I}$ is a  non-{\twodiff} interval as in \ref{eq:nonsimpleintervalsecondind}. Since $I$ is non-{\twodiff}, its endpoints are not subsequent roots of $T_N$. Hence $I$ contains some roots of $T_N$ which are not the roots of $C_N$. More precisely, there are exactly $\frac{s_{1}-s_{2}}{2}-1$ elements of $D$ in $I$.
The length of such an interval $I$ can be bounded as: 
\[
\len(I)=\cos\left(s_{2}\frac{\pi}{2N}\right)-\cos\left(s_{1}\frac{\pi}{2N}\right)=2\sin\left(\frac{(s_{1}+s_{2})\pi}{4N}\right)\sin\left(\frac{(s_{1}-s_{2})\pi}{4N}\right)\leq\frac{(s_{1}-s_{2})\pi}{2N}.
\]
 Since $(s_{1}-s_{2})\geq 4$ implies that $(s_{1}-s_{2})\le 2(s_{1}-s_{2})-4$ , we get that
the length of such an $I$ can be bounded as: 
\[
\len(I)\leq\frac{(s_{1}-s_{2})\pi}{2N}\leq\frac{2\pi}{N}\left(\frac{s_{1}-s_{2}}{2}-1\right).
\]
Hence if a non-{\twodiff} interval $I$ contains $m$
elements of $D$, then we have: 
\[
\len(I)\leq\frac{2\pi}{N}m.
\]
Since $\card D = N-\varphi(N)$, we get that the total length of such non-{\twodiff}
intervals is at most $\frac{2\pi(N-\varphi(N))}{N}$. This proves that the total length of non-{\twodiff}
intervals is at most $\frac{2\pi(N-\varphi(N))}{N}+\frac{\pi^{2}}{4N^{2}}$. Now we use the inequality:
\begin{equation}
1-x\geq e^{-2x}\text{ for all }0\leq x\leq\frac{1}{2}\label{eq:e2xineq}
\end{equation}
to first obtain that:
\begin{align}
\frac{\varphi(N)}{N} & =\prod_{p_{i}\mid N}\left(1-\frac{1}{p_{i}}\right)\geq\left(1-\frac{1}{p_{\min}}\right)^{n}\nonumber \\
 & \geq e^{-\frac{2n}{p_{\min}}}\geq e^{-\frac{2}{n^{2}}}\geq1-\frac{2}{n^{2}}.\label{eq:phiNoverNlowerbound}
\end{align}
By using \ref{eq:phiNoverNlowerbound}, we obtain:
\begin{align*}
\frac{2\pi(N-\varphi(N))}{N} +\frac{\pi^{2}}{4N^{2}} & \leq2\pi\left(1-\frac{\varphi(N)}{N}\right)+\frac{\pi^{2}}{4N^{2}}\leq2\pi\frac{2}{n^{2}}+\frac{\pi^{2}}{4N^{2}}\\
 & \leq\frac{4\pi}{n^{2}}+\frac{\pi^{2}}{4n^{6}}\leq\frac{1}{n}.
\end{align*}
In proving the above upper bound, we have have used the lower bound $N\geq p_{\min} \geq n^{3}$, as established in \ref{eq:lowerboundN}. We also assumed $n$ to be large enough.
Hence the sum of the lengths of the non-{\twodiff} intervals is at most~$\frac{1}{n}$.
\end{proof}

\begin{proof}[Proof of \ref{lem:ratioofconsectivegreenred}]
Let $I:=\left(\cos\left(t\frac{\pi}{2N}\right),\cos\left((t+2)\frac{\pi}{2N}\right)\right)$,
$J:=\left(\cos\left((t+2)\frac{\pi}{2N}\right),\cos\left((t+4)\frac{\pi}{2N}\right)\right)$.
We use the trigonometric identity $\cos A-\cos B=2\sin\left(\frac{A+B}{2}\right)\sin\left(\frac{B-A}{2}\right)$
to obtain: 
\begin{alignat*}{1}
\frac{\cos\left(t\frac{\pi}{2N}\right)-\cos\left((t+2)\frac{\pi}{2N}\right)}{\cos\left((t+2)\frac{\pi}{2N}\right)-\cos\left((t+4)\frac{\pi}{2N}\right)} & =\frac{\sin\left((t+1)\frac{\pi}{2N}\right)}{\sin\left((t+3)\frac{\pi}{2N}\right)}.
\end{alignat*}
We have the following cases: 
\begin{casenv}
\item In this case, we assume $(t+3)\frac{\pi}{2N}\leq\frac{\pi}{2}$. Now
we use the well known inequality: 
\[
\frac{2}{\pi}x\leq\sin(x)\leq x\text{ for }0<x\leq\frac{\pi}{2}.
\]
This implies that: 
\[
\frac{2}{\pi}\frac{t+1}{t+3}\leq\frac{\sin\left((t+1)\frac{\pi}{2N}\right)}{\sin\left((t+3)\frac{\pi}{2N}\right)}\leq1.
\]
Since $\frac{t+1}{t+3}\geq\frac{1}{2}$. We obtain that: 
\[
\frac{1}{\pi}\leq\frac{\sin\left((t+1)\frac{\pi}{2N}\right)}{\sin\left((t+3)\frac{\pi}{2N}\right)}\leq1.
\]
\item Now consider the case when $(t+3)\frac{\pi}{2N}>\frac{\pi}{2}$. By
using the equality $\sin(\pi-\theta)=\sin(\theta)$, we obtain: 
\[
\frac{\sin\left((t+1)\frac{\pi}{2N}\right)}{\sin\left((t+3)\frac{\pi}{2N}\right)}=\frac{\sin\left((2N-(t+1))\frac{\pi}{2N}\right)}{\sin\left((2N-(t+3))\frac{\pi}{2N}\right)}.
\]
In this case, we also know that $(2N-(t+1))\frac{\pi}{2N}\leq\frac{\pi}{2}$.
By using the result of the first case, we know that: 
\[
\frac{1}{\pi}\leq\frac{\sin\left((2N-(t+3))\frac{\pi}{2N}\right)}{\sin\left((2N-(t+1))\frac{\pi}{2N}\right)}\leq1 ,
\]
which proves the claim. 
\end{casenv}
\end{proof}

\section{Hardness of Counting Real Roots}\label{sec:realrootcountinghardness}

We first recall some standard definitions of counting complexity classes from \cite{complexityarora2009}. 
A function $f:\{0,1\}^{*}\to\N$ is in $\#\P$ if there exists a
polynomial $p:\N\to\N$ and a polynomial-time Turing machine $M$
such that for every $x\in\{0,1\}^{*}$ 
\[
  f(x)=\card{\{y\in\{0,1\}^{p\paren{\card x}}:M(x,y)=1\}}.
\]
A function $f:\{0,1\}^{*}\to\N$ is $\#\P$-hard if every $g\in\#\P$
can be computed in polynomial time, allowing oracle calls to $f$.
We say that $f$ is $\#\P$-complete if it is in $\#\P$ and $\#\P$-hard.
We denote by \#3SAT the problem of computing, for a given 3SAT formula~$W$, 
the number of satisfying assignments for~$W$. 
It is well known that \#3SAT is $\#\P$-complete.

The main result of this section can be restated as: 
\countrealrootssharppahrd*

As in~\ref{subsec:RR},
for a given 3SAT formula $W$ defined over $n$ literals, 
we choose $n$ distinct odd primes $\lst pn$
associated with the literals, and we set $M\eqdef\prod_{i\in[n]}p_{i}$.
According to \ref{thm:pmwsmallcircuit}, for given $W$, 
one can compute in polynomial time an SLP, which computes a univariate integral polynomial
$P_{M}(W)$, which has the same set of real roots 
as $\pols_{M}(W)$, albeit with multiplicity two.
From~\ref{eq:Nroots-PolySAT}, we obtain the following equality for the number of real roots.
\begin{equation}\label{eq:ZRPM}
 \frac12 Z_{\R}(P_{M}(W)) = Z_{\R}(\pols_{M}(W)) = \sum_{\psi\text{ satisfies }W}\varphi(M/\alpha(\psi)).
\end{equation}

\begin{lem}
\label{lem:satsmodq}
Suppose $q$ is an odd prime. Let $\lst pn$
be  $n$ distinct primes in the arithmetic progression $\{aq+2\mid a\in\N_{>0}\}$.
As above, define $M\eqdef\prod_{i\in[n]}p_{i}$. Then for any 3SAT
formula $W$ defined over $n$ literals, we have: 
\[
\#W\bmod q \equiv \frac{1}{2}Z_{\R}(P_{M}(W))\bmod q.
\]
\end{lem}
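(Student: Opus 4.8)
The plan is to derive the congruence directly from the exact count of real roots recorded in \eqref{eq:ZRPM}, which gives
\[
 \tfrac12 Z_{\R}(P_{M}(W)) \;=\; \sum_{\psi\text{ satisfies }W}\varphi\!\left(M/\alpha(\psi)\right).
\]
Recall this left-hand side is a genuine integer because, by \ref{thm:pmwsmallcircuit}, every real root of $P_M(W)$ has multiplicity exactly two. So it suffices to understand each summand $\varphi(M/\alpha(\psi))$ modulo $q$.

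First I would note that, identifying an assignment $\psi$ with a subset of $X=\{p_1,\ldots,p_n\}$, we have $M/\alpha(\psi)=\prod_{p_i\notin\psi}p_i$, a product of distinct primes and hence squarefree. By multiplicativity of the Euler totient on squarefree integers,
\[
 \varphi\!\left(M/\alpha(\psi)\right)=\prod_{p_i\notin\psi}(p_i-1),
\]
which is exactly the quantity $N(\psi)$ appearing in the proof sketch.

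The key arithmetic observation is the choice of primes: since each $p_i$ lies in the progression $\{aq+2\mid a\in\N_{>0}\}$, we have $p_i\equiv 2\pmod q$, hence $p_i-1\equiv 1\pmod q$. In particular $p_i\not\equiv 0\pmod q$, so $q\notin\{p_1,\ldots,p_n\}$ and the $p_i$ are honestly distinct from $q$. Multiplying these congruences over all $p_i\notin\psi$ — the empty product, corresponding to $\psi=X$, also being $1$ — yields $\varphi(M/\alpha(\psi))\equiv 1\pmod q$ for every $\psi$.

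Summing over the satisfying assignments then gives
\[
 \tfrac12 Z_{\R}(P_{M}(W))=\sum_{\psi\text{ satisfies }W}\varphi\!\left(M/\alpha(\psi)\right)\equiv\sum_{\psi\text{ satisfies }W}1=\#W\pmod q,
\]
which is the assertion. I do not anticipate a real obstacle here: the whole argument is \eqref{eq:ZRPM} together with the multiplicativity of $\varphi$ and the single congruence $p_i\equiv 2\pmod q$; the only points worth stating carefully are that $\tfrac12 Z_{\R}(P_M(W))$ is an integer and that the assignment $\psi=X$ contributes the empty product $1$.
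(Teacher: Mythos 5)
Your proof is correct and is essentially the same as the paper's: both derive the congruence from \ref{eq:ZRPM} by observing that $p_i \equiv 2 \pmod q$ forces $\varphi(M/\alpha(\psi)) = \prod_{p_i\notin\psi}(p_i-1) \equiv 1 \pmod q$ and then summing over satisfying assignments. The extra remarks you flag (integrality of $\tfrac12 Z_{\R}(P_M(W))$, the empty-product case) are harmless clarifications, not a different route.
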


\begin{proof}
By assumption, $p_{i}$ is an odd prime and 
$p_{i}-1\equiv 1 \bmod q \label{eq:pminus1modq}$.
Let $N=M/\alpha(\psi)$ be as in \ref{eq:ZRPM}, say
$N=\prod_{i\in I}p_{i}$ for $I\subseteq [n]$. 
Then,
\[
  \varphi(N)\bmod q \equiv\prod_{i\in I}(p_{i}-1)\equiv 1 \bmod q .
\]
Therefore, using \ref{eq:ZRPM}, \[
 Z_{\R}(\pols_{M}(W)) \equiv\sum_{\psi\text{ satisfies }W}1\bmod q\equiv\#W\bmod q,
 \]
 which shows the assertion.
 \end{proof}

We also need the following lemma. 

\begin{lem}
\label{lem:primesmodqi}
There is a polynomial time algorithm, which on input a natural number $n$ (encoded in unary) computes distinct odd primes $\lst qn$ and, for each $i\in[n]$, 
computes a collection $p_{i1},\ldots, p_{in}$ of distinct primes 
such that $p_{ij}\equiv2\bmod q_{i}$ for all \textup{$i\in[n],j\in[n]$}. 
\end{lem}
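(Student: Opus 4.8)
The plan is to produce the primes in two stages: first the "outer" primes $q_1,\ldots,q_n$, then for each $q_i$ a batch of $n$ primes lying in the arithmetic progression $\{aq_i+2\mid a\in\N_{>0}\}$. For the first stage I would simply sieve: the $n$-th prime is of size $O(n\log n)$ by the prime number theorem \cite{hardy2008introduction}, so running the sieve of Eratosthenes up to a bound polynomial in $n$ (which is polynomial in the unary input length) yields distinct odd primes $q_1<q_2<\cdots<q_n$ in polynomial time; discard the prime $2$ so all the $q_i$ are odd.

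For the second stage, fix $i$ and consider the residue class $2\bmod q_i$. Since $\gcd(2,q_i)=1$ (because $q_i$ is an odd prime), Dirichlet's theorem guarantees infinitely many primes $\equiv 2\bmod q_i$, and the effective density results for primes in arithmetic progressions \cite{primesinapbennett2018} cited earlier in the paper give an explicit bound $B(q_i,n)=\poly(q_i,n)=\poly(n)$ below which at least $n$ such primes occur. Concretely, I would enumerate the integers $2, 2+q_i, 2+2q_i,\ldots$ up to this bound, test each for primality (trial division suffices, or one invokes the sieve again), and collect the first $n$ primes found; call them $p_{i1},\ldots,p_{in}$. Each is $\equiv 2\bmod q_i$ by construction, and within the batch for a fixed $i$ they are distinct because they are distinct integers. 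Repeating this for $i=1,\ldots,n$ takes $n$ times a polynomial amount of work, hence polynomial time overall.

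The one point needing care is the density/quantitative input: I must ensure that $n$ primes in the progression $2\bmod q_i$ really do appear below a bound polynomial in $n$. This is exactly what the explicit versions of Linnik's theorem and the bounds in \cite{primesinapbennett2018} provide; since $q_i=O(n\log n)$, any polynomial-in-$q_i$ bound is polynomial in $n$, and the count of primes up to that bound in the given residue class is $\Omega\!\big(B/(q_i\log B)\big)$, which exceeds $n$ for a suitable polynomial choice of $B$. The remaining bookkeeping (primality testing, bounded enumeration, discarding $2$) is routine and clearly polynomial time. Note that the lemma only asserts distinctness of $p_{i1},\ldots,p_{in}$ for each fixed $i$, so no coprimality between different batches is required; the construction as described suffices.
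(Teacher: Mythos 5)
Your plan takes a genuinely different route from the paper's proof, which simply invokes the algorithm from Theorem~4.11 of~\cite{countingcurves1993} (with randomized primality testing replaced by the deterministic test of~\cite{aksprimesinp2004}). Your direct construction --- sieve for $q_1,\dots,q_n$, then for each fixed $q_i$ scan the progression $\{aq_i+2\}$ up to a bound $B$ and collect primes --- is a reasonable thing to try, but it has a real gap at the quantitative step. You assert that ``explicit versions of Linnik's theorem and the bounds in~\cite{primesinapbennett2018}'' give a bound $B=\poly(q_i,n)$ below which at least $n$ primes $\equiv 2\bmod q_i$ occur. That is not what these results provide. Linnik's theorem bounds only the \emph{least} prime in the progression, not the $n$-th. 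And the unconditional explicit estimates of Bennett--Martin--O'Bryant--Rechnitzer for $\pi(x;q,a)$ have admissible ranges $x\ge x_0(q)$ that are polynomial in $q$ only for $q$ up to a fixed constant; for general $q$ their unconditional thresholds are of the shape $\exp\bigl(c\sqrt{q}\,(\log q)^3\bigr)$, which is superpolynomial in $n$ once $q=\Theta(n\log n)$. This is not a defect of their paper but the actual state of the art: a possible Siegel zero for a real character mod $q$ prevents any effective lower bound of the expected order for $\pi(x;q,a)$ unless $\log x\gg\sqrt{q}\log^2 q$. So for a \emph{prescribed} prime $q_i$ of this size you cannot currently certify $n$ primes in the progression below a polynomial bound, and your enumeration may simply fail to find them.

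The missing idea is to not prescribe the moduli $q_i$ in advance. Instead one sieves all primes up to some $x=\poly(n)$, tries every prime $q\le Q=\poly(n)$ as a candidate modulus, and keeps the first $n$ values of $q$ for which the progression $2\bmod q$ actually contains at least $n$ primes below $x$. That at least $n$ such ``good'' $q$ exist is an averaging statement over moduli --- in the spirit of the (effective) Bombieri--Vinogradov theorem, which bounds the \emph{sum} over $q\le x^{1/2-\epsilon}$ of the errors $|\pi(x;q,a)-\mathrm{Li}(x)/\varphi(q)|$ and is immune to the Siegel-zero issue because the single possible exceptional modulus is harmless on average. This is precisely the device in the algorithm from~\cite{countingcurves1993} that the paper cites, and it is the piece your proposal is missing. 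The rest of your bookkeeping (sieving, unary input, distinctness within each batch, deterministic primality testing) is fine.
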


\begin{proof}
We can directly use the algorithm developed for Theorem~4.11 of \cite{countingcurves1993}.
Step 3 of that algorithm can be skipped.
Moreover, after replacing the randomized primality testing by 
deterministic primality testing of~\cite{aksprimesinp2004}, the algorithm becomes deterministic polynomial time.
\end{proof}

\begin{proof}[Proof of \ref{thm:countrealrootshard}]
We reduce \#3SAT to $\crr$. For given $n$,
we first compute in polynomial time the primes $q_i$ and $p_{ij}$ for $i,j\in [n]$
as in~\ref{lem:primesmodqi}.
Then we set $M_i\eqdef p_{i1}p_{i2}\ldots p_{in}$ for $i\in [n]$.
For a given a 3SAT formula $W$ over $n$ literals, 
we first compute in polynomial time SLPs computing the polynomials $P_{M_i}(W)$ for $i\in [n]$ 
according to \ref{thm:pmwsmallcircuit}. 
Then we compute the number of real zeros $Z_{\R}(P_{M_i}(W))$ for $i\in [n]$ by oracle calls to $\crr$.
We divide these even numbers by two and have thus computed
$\#W\bmod q_i$ for $i\in [n]$ according to \ref{lem:satsmodq}.
Since $\#W\leq2^{n}$ and $\prod q_{i}>2^{n+1}$, by using efficient algorithms for Chinese
remaindering \cite{MCA13}, we can recover $\#W$ in polynomial time. 
We have thus shown that $\crr$ is $\#\P$-hard.
\end{proof}

\section{Complexity of Radicals\label{sec:compelxityofradicals}}

\subsection{Complexity of Factors}

For a polynomial $f\in\F[\lst xn]$, we define $L(f)$ as the size
of the smallest arithmetic circuit computing $f$ 
from the variables $x_i$ and {\em any constants} in the field $\F$.

The {\em Factor Conjecture}~\cite{complredpeterbook2000,compl-factors-jfocm04}
is a central question in algebraic complexity theory. 
It asks whether complexity of factors $g$ of $f$ can be bounded by a polynomial 
in terms of $L(f)$ and the degree of $g$.

\begin{conjecture}[Factor Conjecture]\label{conj:factorconjecture} 
Over a field $\F$ of characteristic zero, 
we have $L(g)\leq\poly(L(f)+\deg(g))$ for any factor $g$ of $f\in\F[\lst xn]$.
\end{conjecture}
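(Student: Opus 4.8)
The plan is to follow the classical strategy for bounding the complexity of factors: reduce to a situation where the factor can be Hensel lifted from a simple base case, extract it as a truncated power series via Newton iteration, and recover the polynomial from that truncation because its degree is known in advance. Concretely, after a generic invertible linear change of variables one may assume that $f$ is, up to a nonzero scalar, monic in a distinguished variable $x_1$; choosing a generic specialization of $x_2,\dots,x_n$ — valid by a Schwartz--Zippel argument, with all degrees controlled — produces a factorization of the specialized polynomial into coprime pieces, one of which corresponds to the factor $g$. One then Hensel lifts this factorization, so that $g$ appears as a factor computed by iterating a Newton step; the key point is that the lifting only has to be carried to precision $D = \deg g + 1$ to pin down $g$, \emph{provided} the lifting isolates $g$ rather than some larger divisor of $f$.

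For the cost analysis, each Newton iteration is a constant number of polynomial multiplications and one inversion, all modulo the current precision, hence of circuit size $\poly(L(f) + D)$; since Newton iteration doubles precision, only $O(\log D)$ steps are needed, so the telescoping cost of producing the truncated factor is $\poly(L(f) + D)$. If one could always take $D = \poly(\deg g)$, this would be $\poly(L(f) + \deg g)$, exactly the claimed bound. The difficulty is that Hensel lifting does not isolate $g$: the finest coprime splitting one gets for free is $f = g^{e} h$ with $\gcd(g,h) = 1$, where $g^{e}$ is the exact power of $g$ dividing $f$, so the procedure outputs a circuit for $g^{e}$, and the precision it actually consumed is governed by $\deg(g^{e}h) = \deg f$, not by $\deg g$. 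Recovering $g$ from $g^{e}$ is then cheap — a perfect $e$-th root, again by Newton iteration to precision $\deg g + 1$ — but this does not retroactively shrink the precision already spent on $g^{e}$.

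Hence the step I expect to be the genuine obstacle, and the reason this is still a conjecture, is removing the dependence on $\deg f$ (equivalently on the multiplicity $e$) that high-multiplicity factors such as $f = g^{2^{n}} h$ force upon this approach. Kaltofen's classical argument pays a factor of $e \le \deg f$ at this point, which is harmless when $\deg f$ is polynomially bounded but not in general; the progress of \cite{dssjacm22} and subsequent work chips away at exactly this, trading the $\poly(e)$ loss for a milder dependence, but a fully degree-oblivious construction remains unknown. A secondary, more routine difficulty is making the multivariate-to-univariate reduction and the choice of specialization point both efficient and generic enough to preserve $\deg g$ and keep $g$ separated from the rest of $f$; this requires careful circuit-size bookkeeping but no new ideas. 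For these reasons I would offer the above only as the standard line of attack, not as a complete proof of the conjecture.
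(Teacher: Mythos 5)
This statement is a conjecture, not a theorem; the paper states it without proof, citing it as an open problem from algebraic complexity theory (\cite{complredpeterbook2000,compl-factors-jfocm04}). You correctly recognize that, and your proposal is really a diagnosis of why the standard Hensel-lifting/Newton-iteration strategy falls short rather than a proof. Your diagnosis matches the paper's own commentary: the obstacle is that the coprime split one gets for free is $f = g^e h$, and the lifting precision is governed by $\deg(g^e h)$ rather than $\deg g$, so the cost blows up when the multiplicity $e$ is exponential in $L(f)$. This is exactly the point the paper makes by citing Kaltofen's bound (\ref{thm:kaltoffenfactoring}), whose dependence on $\log e$ and $de$ shows that small multiplicity is the only remaining obstruction. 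In short: there is no proof to compare against, and your account of the state of the art and of where the known argument breaks is accurate.
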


Evidence for this conjecture comes from~\cite{compl-factors-jfocm04},
where it was shown that the conjecture is true for approximate complexity. 
In addition, the factor conjecture is supported by the following result. 
It tells us that the only situation in which \ref{conj:factorconjecture} could fail 
is when the factor $g$ occurs with exponentially large multiplicity~$e$. 

\begin{thm}[\cite{kaltoffenfactoringhensel1987,complredpeterbook2000}]
\label{thm:kaltoffenfactoring}
If $f,g,h\in\F[\lst xn]$ are such
that $f=g^{e}h$ with $g,h$ being coprime, the multiplicity $e$ a positive integer, 
and $\chr(\F)=0$, then, setting $d=\deg g$, we have 
\[
L(g)=O(M(d)^{2}M(de)(L(f)+n+d\log e)).
\]
Here $M(m) = O(m^2)$ stands for the number of arithmetic operations sufficient to multiply two degree $m$ polynomials over $\F$.
\end{thm}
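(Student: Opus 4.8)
The plan is to follow Kaltofen's Hensel-lifting approach to straight-line program factorization, as carried out with explicit size bounds in \cite{kaltoffenfactoringhensel1987,complredpeterbook2000}; below I describe the main steps and indicate where each factor in the stated bound originates. First I would apply a generic invertible linear change of the variables $\lst xn$ together with a rescaling, so that $f$ becomes monic in $x_1$ of degree $\deg f$; this adds only $O(n)$ gates, transforms $g$ in the same way, and the monic-in-$x_1$ version of $g$ still has degree $d$. Then I would pick a generic point $a=(a_2,\ldots,a_n)$ over $\F$ (or over a small extension) and translate $x_j\mapsto x_j+a_j$ for $j\ge 2$, so that $f(x_1,0)=g_0(x_1)^e h_0(x_1)$ with $g_0(x_1):=g(x_1,0)$ and $h_0(x_1):=h(x_1,0)$ coprime. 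Genericity is enough here because the obstructions (loss of monicity, a common factor of $g_0$ and $h_0$, vanishing of the relevant resultant or discriminant) are proper closed conditions; this is the one place where the hypothesis $\chr(\F)=0$ enters, guaranteeing that $g$ behaves separably and that those polynomials are not identically zero. I would read $g_0$ and $h_0$ off from $f(x_1,0)$ by a squarefree-type decomposition of a univariate polynomial — a one-time computation with $O(M(\deg f)\log(\deg f))$ field operations whose outputs become scalars available to the circuit — and also precompute a Bézout identity for $g_0$ and $h_0$.

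Writing $y=(x_2,\ldots,x_n)$ and viewing $f\in\F[[y]][x_1]$, I would then lift the coprime factorization $f\equiv g_0^{\,e}\cdot h_0\pmod{(y)}$ by quadratic Newton iteration to an identity $f=G\cdot H$ with $G$ monic in $x_1$ of degree $de$ and $G\equiv g_0^{\,e}$, $H\equiv h_0$ modulo $(y)$; uniqueness of such lifts forces $G=g^e$ and $H=h$, after which $g$ is recovered by an $x_1$-wise $e$-th root extraction (equivalently, by a generalized Hensel lift that maintains $e$ equal factors and outputs $g$ directly). To make the successive truncations implementable by a small circuit I would reduce to a single power-series variable: substitute $x_j\mapsto a_j+Z u_j$ for generic scalars $u_j$ and work $Z$-adically, so that truncating a product modulo $Z^{k+1}$ is a length-$(k+1)$ convolution costing $O(M(k))$ operations on the coefficients. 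This is the Strassen-style division-elimination mechanism, and it also lets the size-$L(f)$ circuit for $f$ be evaluated on $Z$-truncated series. Since $g^e$ has $y$-degree at most $de$, it suffices to lift modulo $Z^{de+1}$, hence to perform $O(\log(de))$ quadratic Hensel steps; each step forms $\Delta:=f-G_jH_j$ (one pass through the circuit for $f$ on length-$O(de)$ series, plus a product of $x_1$-degree-$O(de)$ polynomials) and updates $G_{j+1},H_{j+1}$ from $\Delta$ using the precomputed Bézout data, which is where the $M(d)^2$ and $M(de)$ factors of the per-step cost come from.

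After $O(\log(de))$ steps the truncated iterates equal $g^e$ and $h$ exactly; taking the appropriate $Z$-homogeneous components and, if needed, the $e$-th root yields a circuit for $g$. Summing the $O(n)$ normalization cost, the one-time squarefree decomposition and $Z$-expansion, the $O(d\log e)$ cost of forming $g_0^{\,e}$ by repeated squaring, and the $O(\log(de))$ Hensel steps at cost $O(M(d)^2 M(de))$ each, and coarsely absorbing the logarithmic factors, one arrives at the claimed bound $L(g)=O\!\left(M(d)^2 M(de)\left(L(f)+n+d\log e\right)\right)$; I would defer the precise constants to \cite{kaltoffenfactoringhensel1987,complredpeterbook2000}.

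The hard part will be the complexity bookkeeping of the Hensel iteration: organizing the reduction to a single power-series variable and the Strassen-style elimination of the truncation maps so that one step costs only $O(M(d)^2 M(de))$, and keeping the number of iterations at $O(\log(de))$ by lifting only to the order forced by $\deg g^e$ rather than $\deg f$. The genericity arguments in the normalization step also require some care, but over a field of characteristic zero they are routine.
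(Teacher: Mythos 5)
The paper cites this theorem from \cite{kaltoffenfactoringhensel1987,complredpeterbook2000} and does not supply its own proof, so there is no in-paper argument to compare against. Your sketch correctly reconstructs the strategy of Kaltofen's single-factor Hensel lifting as carried out in those references: generic monic normalization and translation so that $g(x_1,0)$ is squarefree and coprime to $h(x_1,0)$, Strassen-style reduction to a single $Z$-adic power-series variable to eliminate truncation divisions, quadratic Hensel lifting of the coprime factorization $\bar g^{\,e}\cdot\bar h$ to order $O(de)$, and recovery of $g$ from the lifted $g^e$; deferring the precise gate count to the sources is consistent with how the paper itself treats the statement.
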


Now consider the following related conjecture proposed in
\cite[Conjecture 1]{dssjacm22}. 

\begin{conjecture}[Radical conjecture]
\label{conj:radsaxenaconjecture}
For a nonzero polynomial $f\in\Z[\lst xn]$ we have 
\[
\min\{\deg(\rad(f)),L(\rad(f))\}\le\poly(L(f)).
\]
\end{conjecture}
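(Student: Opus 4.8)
Since the bound only has to be met by \emph{one} of the two quantities $\deg(\rad(f))$ and $L(\rad(f))$, the first move is a dichotomy on $d:=\deg(\rad(f))$. If $d\le\poly(L(f))$, then $\min\{d,L(\rad(f))\}\le d\le\poly(L(f))$ and we are done with no effort at all — the $\min$ is present precisely so that the low-degree case is free. Hence the entire content of \ref{conj:radsaxenaconjecture} sits in the opposite regime, in which the square-free part has super-polynomial degree, and there the task is to bound $L(\rad(f))$ directly. The plan is to build a short circuit for $\rad(f)$ out of a short circuit for $f$, using the characteristic-zero identity $\rad(f)=f/\gcd\!\big(f,\partial_{x_1}f,\dots,\partial_{x_n}f\big)$ together with the Baur--Strassen bound $L(\partial_{x_i}f)=O(L(f))$ for the partial derivatives.

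\textbf{Key steps.} First I would reduce the radical to an iterated single-factor extraction: the standard square-free decomposition (Yun's algorithm, carried out after a generic linear change of coordinates making $f$ regular in $x_1$) writes each distinct irreducible factor $g_i$ of $f=\prod_i g_i^{e_i}$, and hence $\rad(f)=\prod_i g_i$, as a bounded chain of $\gcd$'s and exact divisions formed from $f$ and its partials. Second, for each coprime splitting $f=g^{e}h\mapsto g$ occurring in that chain, I would feed $f$ into \ref{thm:kaltoffenfactoring}, which yields $L(g)=O\big(M(\deg g)^2\,M(e\deg g)\,(L(f)+n+\deg g\cdot\log e)\big)$, and then multiply the resulting circuits together; one controls the parameters using $\sum_i\deg g_i\le d$ and $\log e_i\le\log\deg f\le L(f)$. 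The visible weakness is that the factor $M(e\deg g)=O((e\deg g)^2)$ becomes exponential when a multiplicity $e_i$ is exponential, which it may well be; this must be repaired by replacing the Kaltofen bound with a multiplicity-insensitive extraction — a Strassen-style division-elimination / Newton-iteration argument that recovers a degree-$d_0$ factor of $f$ by a circuit of size $\poly(L(f),d_0)$ regardless of its multiplicity (this is the core of the Dutta--Saxena--Sinhababu technique, of which \ref{thm:kaltoffenfactoring} is the multiplicity-bounded special case).

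\textbf{The main obstacle.} Even after this repair, every ingredient above delivers a circuit for $\rad(f)$ of size $\poly(L(f),\deg(\rad(f)))$ rather than $\poly(L(f))$: the Newton iteration manipulates power series truncated at degree $\Theta(\deg(\rad(f)))$ (or $\Theta(\deg f)$), and that is exactly the super-polynomial regime the statement is about. Closing the gap seems to demand a genuinely new idea — either a route to $\rad(f)$ that never handles polynomials of degree comparable to $\deg f$, or a structural dichotomy showing that a short circuit for $f$ forbids its square-free part from being simultaneously of huge degree and of large circuit complexity. I do not see how to obtain either with the tools available here; even the univariate case $f\in\Z[x]$, where $\deg(\rad(f))$ can already be as large as $2^{L(f)}$, is open, and weakening the target to a short circuit for some nonzero integer multiple of $\rad(f)$ (as in \ref{conj:constrdicalconj}) does not obviously help in this regime. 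This is why \ref{conj:radsaxenaconjecture} is stated as a conjecture, and why it is regarded as a natural stepping stone toward the Factor Conjecture (\ref{conj:factorconjecture}).
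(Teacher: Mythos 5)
You have correctly recognized that \ref{conj:radsaxenaconjecture} is not a theorem but an \emph{open conjecture} (due to Dutta, Saxena, and Sinhababu), and the paper offers no proof of it; it merely states it and then works with the further constructive variant \ref{conj:constrdicalconj}. There is therefore no ``paper's own proof'' to compare against, and your honest conclusion that the statement cannot be established with the tools at hand is exactly right.

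Your analysis of why the obvious routes fail is also accurate and matches the discussion in \ref{sec:compelxityofradicals}. You correctly isolate the trivial half (if $\deg(\rad(f))\le\poly(L(f))$ the $\min$ is already small), you correctly identify that the substance lies in the high-degree regime where one must bound $L(\rad(f))$, and you correctly diagnose the two obstacles: \ref{thm:kaltoffenfactoring} degrades when the multiplicity $e$ is exponential, and any Newton-iteration or power-series-based factor extraction (including the DSS machinery) yields circuit size $\poly(L(f),\deg(\rad(f)))$, which is useless precisely in the regime the conjecture is about. One small caveat: your parenthetical in the last paragraph, that the univariate case has $\deg(\rad(f))$ as large as $2^{L(f)}$, is true but worth being careful about, since for univariate $f$ the conjecture specializes to $L(\rad(f))\le\poly(L(f))$ outright (as noted in \ref{conj:univar-constr}) because $L(g)=O(\deg g)$; so the univariate case is not ``easier'' in any useful sense and your pessimism there is justified. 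In short, this is a well-reasoned non-proof of a statement that the paper itself does not prove, and your account of the obstruction is consistent with how the authors motivate and qualify the conjecture.
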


This essentially states that either the
degree of $\rad(f)$ or the complexity of $\rad(f)$
(or both), are polynomially bounded in the complexity of $f$.
For a univariate polynomial~$f\in\Z[x]$, 
\ref{conj:radsaxenaconjecture} implies that 
\begin{equation}\label{conj:univar-constr}
     L(\rad(f))\leq\poly(L(f)).
\end{equation}
This follows since $L(g)=O(\deg(g))$ for any $g\in\Z[x]$.

To arrive at the main contribution of our paper (\ref{thm:mainnphardposslp}), 
we rely on the following constructive variant of 
\ref{conj:univar-constr}, formulated for $\tau(f)$ instead
of $L(f)$. (Recall that $\tau(f)$ denotes the complexity for constant-free arithmetic circuits.) 

\constructiveradicalconjecture*

\begin{remark}
The main difference of these radical conjectures, in comparison with the factor conjecture, 
is that they make a statement about the complexity of factors $g$ having huge degree. 
While there is currently not much evidence for these radical conjectures, one may still view 
\ref{thm:kaltoffenfactoring} as an indication towards them, since it shows that 
small multiplicities help. When passing to the radical to $f$, all multiplicities
of the irreducible factors $g$ of $f$ are set to one. As for the plausibility of the constructive version 
\ref{conj:constrdicalconj}, we just note that statements on the existence of small 
SLPs usually can be refined to statements concerning randomized effective constructions.
\end{remark}

\section{Conclusion and Open Questions}

Assuming the constructive radical conjecture (\ref{conj:constrdicalconj}),
we proved that $\posslp\in\BPP$ would imply $\NP \subseteq \BPP$. This was achieved by first reducing 3SAT to Unique-SAT \cite{VALIANT198685}
and then using ideas developed in \cite{PLAISTED1984125, PERRUCCI2007471}. This
leads to the first non-trivial lower bound for $\posslp$, albeit
conditional. Using the ideas developed in \cite{countingcurves1993},
we also proved that counting the real roots of univariate polynomials
computed by a given SLP is $\#\P$-hard. 

There are several avenues for further research:
\begin{enumerate}
\item Of course it remains intriguing to prove the unconditional $\NP$-hardness
of $\posslp$. In our approach, we first constructed an SLP for the 
polynomial $P_{M}(W)$, which has same set of real roots as $\pols_{M}(W)$,
but with multiplicity two. 
One could try to directly construct a polynomial size SLP for $\pols_{M}(W)$, which has simple
real roots. But this is unlikely because $\pols_{M}(W)=1$ iff $W$
is not satisfiable. And we can check $\pols_{M}(W)=1$ by polynomial
identify testing. This would imply $\coNP\subseteq\coRP$, which is
not believed to be true.

\item We saw that the sum-of-square-roots problem and inequality testing of
succinctly represented integers are special cases of $\posslp$.
We do not know of any non-trivial upper or lower bounds for these problems. Perhaps one can first study the complexity of these special cases.

\item In addition, the following special case of $\posslp$ are worth investigating:
A univariate polynomial $f(x)=a+bx^{\beta}+cx^{\gamma}\in\Z[x]$ with
$\beta,\gamma\in\N$ is called a {\em trinomial}. Koiran \cite{koirantrinomial2019}
proved that the roots of trinomials are ``well-separated'' and 
he posed the following problem: Given a rational $p/q$ and a trinomial $f(x)$ as inputs, determine the sign of $f(p/q)$. This problem is easily seen to be a special case of $\posslp$. 
\cite{Boniface2022TrinomialsAD} proved that this
problem can be solved in deterministic polynomial time, except on a $\frac{1}{\Omega(\log(dH))}$
fraction of the inputs. Here $\deg(f)\leq d$ and $\max\{\abs a,\abs b,\abs c,\abs p,\abs q\}\leq H$.
Can we find efficient algorithms for Koiran's problem?

\item \cite{PLAISTED1984125} proved that the following problem is $\NP$-hard: 
  decide for a given $k$-sparse polynomial whether it has a root on the unit circle. 
  Can we prove a similar lower bound for  $k$-{\em sparse polynomials}? 
\end{enumerate}
\newcommand{\etalchar}[1]{$^{#1}$}


\begin{thebibliography}{HWHB{\etalchar{+}}08}

\bibitem[AB09]{complexityarora2009}
Sanjeev Arora and Boaz Barak.
\newblock {\em Computational Complexity: A Modern Approach}.
\newblock Cambridge University Press, USA, 1st edition, 2009.

\bibitem[ABKPM09]{Allender06onthe}
Eric Allender, Peter B\"{u}rgisser, Johan Kjeldgaard-Pedersen, and Peter~Bro
  Miltersen.
\newblock On the complexity of numerical analysis.
\newblock {\em SIAM Journal on Computing}, 38(5):1987--2006, 2009.

\bibitem[AKS04]{aksprimesinp2004}
Manindra Agrawal, Neeraj Kayal, and Nitin Saxena.
\newblock {PRIMES} is in {P}.
\newblock {\em Ann. Math. (2)}, 160(2):781--793, 2004.

\bibitem[BCSS97]{SmaleRealCompu1997}
Lenore Blum, Felipe Cucker, Michael Shub, and Steve Smale.
\newblock {\em Complexity and Real Computation}.
\newblock Springer-Verlag, Berlin, Heidelberg, 1997.

\bibitem[BDR22]{Boniface2022TrinomialsAD}
Eric Boniface, Wei Deng, and J.~Maurice Rojas.
\newblock Trinomials and deterministic complexity limits for real solving,
  2022.
\newblock arXiv:2202.06115.

\bibitem[BMOR18]{primesinapbennett2018}
Michael~A. Bennett, Greg Martin, Kevin O’Bryant, and Andrew Rechnitzer.
\newblock {Explicit bounds for primes in arithmetic progressions}.
\newblock {\em Illinois Journal of Mathematics}, 62(1-4):427 -- 532, 2018.

\bibitem[B{\"{u}}r00]{complredpeterbook2000}
Peter B{\"{u}}rgisser.
\newblock {\em Completeness and Reduction in Algebraic Complexity Theory},
  volume~7 of {\em Algorithms and computation in mathematics}.
\newblock Springer, 2000.

\bibitem[B{\"{u}}r04]{compl-factors-jfocm04}
Peter B{\"{u}}rgisser.
\newblock The complexity of factors of multivariate polynomials.
\newblock {\em Found. Comput. Math.}, 4(4):369--396, 2004.

\bibitem[DSS22]{dssjacm22}
Pranjal Dutta, Nitin Saxena, and Amit Sinhababu.
\newblock Discovering the roots: Uniform closure results for algebraic classes
  under factoring.
\newblock {\em J. ACM}, 69(3), jun 2022.

\bibitem[ESY14]{succintintegers2014}
Kousha Etessami, Alistair Stewart, and Mihalis Yannakakis.
\newblock A note on the complexity of comparing succinctly represented
  integers, with an application to maximum probability parsing.
\newblock {\em ACM Trans. Comput. Theory}, 6(2), may 2014.

\bibitem[GGJ76]{openssr1976}
M.~R. Garey, R.~L. Graham, and D.~S. Johnson.
\newblock Some {NP}-complete geometric problems.
\newblock In {\em Proceedings of the Eighth Annual ACM Symposium on Theory of
  Computing}, STOC '76, page 10–22, New York, NY, USA, 1976. Association for
  Computing Machinery.

\bibitem[HWHB{\etalchar{+}}08]{hardy2008introduction}
G.H. Hardy, E.M. Wright, D.R. Heath-Brown, R.~Heath-Brown, J.~Silverman, and
  A.~Wiles.
\newblock {\em An Introduction to the Theory of Numbers}.
\newblock Oxford mathematics. OUP Oxford, 2008.

\bibitem[JS12]{saranurak2012subtraction}
Gorav Jindal and Thatchaphol Saranurak.
\newblock Subtraction makes computing integers faster.
\newblock {\em CoRR}, abs/1212.2549, 2012.

\bibitem[Kal87]{kaltoffenfactoringhensel1987}
E.~Kaltofen.
\newblock Single-factor {Hensel} lifting and its application to the
  straight-line complexity of certain polynomials.
\newblock In {\em Proceedings of the Nineteenth Annual ACM Symposium on Theory
  of Computing}, STOC '87, page 443–452, New York, NY, USA, 1987. Association
  for Computing Machinery.

\bibitem[KC91]{kincaid91}
David~R. Kincaid and E.~Ward Cheney.
\newblock {\em Numerical Analysis: Mathematics of Scientific Computing}.
\newblock Brooks/Cole, Pacific Grove, Calif., 1991.

\bibitem[Koi19]{koirantrinomial2019}
Pascal Koiran.
\newblock Root separation for trinomials.
\newblock {\em Journal of Symbolic Computation}, 95:151--161, 2019.

\bibitem[Kon09]{counthierjuha2009}
Juha Kontinen.
\newblock A logical characterization of the counting hierarchy.
\newblock {\em ACM Transactions on Computational Logic}, 10, 01 2009.

\bibitem[Lan13]{lang2013elliptic}
S.~Lang.
\newblock {\em Elliptic Curves: Diophantine Analysis}.
\newblock Grundlehren der mathematischen Wissenschaften. Springer Berlin
  Heidelberg, 2013.

\bibitem[Mal01]{Malajovich2001AnEV}
Gregorio Malajovich.
\newblock An effective version of {K}ronecker's theorem on simultaneous
  diophantine approximation.
\newblock Technical report, Technical report, City University of Hong Kong,
  2001.

\bibitem[Pla84]{PLAISTED1984125}
David~A. Plaisted.
\newblock New {NP}-hard and {NP}-complete polynomial and integer divisibility
  problems.
\newblock {\em Theoretical Computer Science}, 31(1):125--138, 1984.

\bibitem[PS07]{PERRUCCI2007471}
Daniel Perrucci and Juan Sabia.
\newblock Real roots of univariate polynomials and straight line programs.
\newblock {\em Journal of Discrete Algorithms}, 5(3):471--478, 2007.
\newblock Selected papers from Ad Hoc Now 2005.

\bibitem[Sap21]{saptharishi2021survey}
Ramprasad Saptharishi.
\newblock A survey of lower bounds in arithmetic circuit complexity.
\newblock
  \url{https://github.com/dasarpmar/lowerbounds-survey/releases/download/v9.0.3/fancymain.pdf},
  2021.

\bibitem[Sud07]{sudhanbpplec2007}
Madhu Sudan.
\newblock {6.841, Lecture Notes: Advanced Complexity Theory}.
\newblock
  \url{http://people.seas.harvard.edu/\~madhusudan/MIT/ST07/scribe/lect11.pdf},
  2007.

\bibitem[SY10]{ShpilkaY10}
Amir Shpilka and Amir Yehudayoff.
\newblock Arithmetic circuits: {A} survey of recent results and open questions.
\newblock {\em Found. Trends Theor. Comput. Sci.}, 5(3-4):207--388, 2010.

\bibitem[VV86]{VALIANT198685}
L.G. Valiant and V.V. Vazirani.
\newblock {NP} is as easy as detecting unique solutions.
\newblock {\em Theoretical Computer Science}, 47:85--93, 1986.

\bibitem[vzGG13]{MCA13}
Joachim von~zur Gathen and J\"{u}rgen Gerhard.
\newblock {\em Modern computer algebra}.
\newblock Cambridge University Press, Cambridge, third edition, 2013.

\bibitem[vzGKS96]{countingcurves1993}
Joachim von~zur Gathen, Marek Karpinski, and Igor Shparlinski.
\newblock Counting curves and their projections.
\newblock {\em computational complexity}, 6(1):64--99, Mar 1996.

\bibitem[Wag86]{wagner186}
Klaus~W. Wagner.
\newblock The complexity of combinatorial problems with succinct input
  representation.
\newblock {\em Acta Inf.}, 23(3):325–356, jun 1986.

\bibitem[Wal04]{opendiiph2004}
Michel Waldschmidt.
\newblock Open diophantine problems.
\newblock {\em Moscow Mathematical Journal}, 4, 01 2004.

\end{thebibliography}
 \end{document}